\newtheorem{Thm}{Theorem}
\newtheorem{Lem}[Thm]{Lemma}
\theoremstyle{definition}
\numberwithin{equation}{section}
\numberwithin{Thm}{section}
\newcommand\be            {\begin{equation}}
\newcommand\bea           {\begin{equation}\begin{array}l\displaystyle}
\newcommand\bearll        {\begin{array}{ll}\displaystyle}
\newcommand\ee            {\end{equation}}
\newcommand\eear          {\end{array}}
\newcommand\enl           {\\[1em]\displaystyle}
\newcommand\etb           {& \displaystyle}
\newcommand\labl[1]       {\label{#1}\ee}
\newcommand\eps           {\varepsilon}
\newcommand\Hom           {\mathrm{Hom}}
\newcommand\id            {{\rm id}}
\newcommand\Kdd            {\mathrm{K}_0^b}
\newcommand\one           {{\bf1}}
\newcommand\Rep           {\mathrm{Rep}}
\newcommand\Cb            {\mathbb{C}}
\newcommand\Zb            {\mathbb{Z}}
\newcommand\Cc            {\mathcal{C}}
\newcommand\Hc            {\mathcal{H}}
\newcommand\Nc            {\mathcal{N}}
\newcommand\Qc            {\mathcal{Q}}
\renewcommand\Rc            {\mathcal{R}}
\newcommand\Sc           {\mathcal{S}}
\newcommand\Wc           {\mathcal{W}}
\newcommand\Pc           {\mathcal{P}}
\begin{document}
\thispagestyle{empty}
\def\thefootnote{\fnsymbol{footnote}}
\begin{flushright}
\scriptsize ZMP-HH/10-18\\
\scriptsize Hamburger Beitr\"age zur Mathematik 387\\
\href{http://arxiv.org/abs/1008.0082}{[1008.0082 [hep-th]]}
\end{flushright}
\vskip 2.5em
\begin{center}\LARGE
A modular invariant bulk theory \\
for the $c=0$ triplet model
\end{center}\vskip 2em
\begin{center}\large
  Matthias R. Gaberdiel%
  $^{a}$\footnote{Email: {\tt gaberdiel@itp.phys.ethz.ch}}, 
  Ingo Runkel%
  $^{b}$\footnote{Email: {\tt ingo.runkel@uni-hamburg.de}} \,%
  and
  Simon Wood%
  $^{a}$\footnote{Email: {\tt swood@itp.phys.ethz.ch}}
\end{center}
\begin{center}\it$^a$
Institute for Theoretical Physics, ETH Z\"urich \\
8093 Z\"urich, Switzerland
\end{center}
\begin{center}\it$^b$
Department Mathematik, Universit\"at Hamburg \\
Bundesstra\ss e 55, 20146 Hamburg, Germany
\end{center}
\vskip 1em
\begin{center}
July 2010
\end{center}
\vskip 1em
\begin{abstract}
A proposal for the bulk space of the logarithmic $\mathcal{W}_{2,3}$-triplet model 
at central charge zero is made. The construction is based on the idea that
one may reconstruct the bulk theory of a rational conformal field theory from
its boundary theory. The resulting bulk space is a quotient 
of the direct sum of projective representations, which is isomorphic,
as a vector space, to the direct sum of tensor products
of the irreducible representations with their projective covers. 
As a consistency check of our analysis we show that 
the partition function of the bulk theory is modular invariant, and 
that the boundary state analysis is compatible with the proposed
annulus partition functions of this model.
\end{abstract}

\setcounter{footnote}{0}
\def\thefootnote{\arabic{footnote}}

\newpage

\tableofcontents

\newpage

\section{Introduction and summary}

Logarithmic conformal field theories \cite{Gurarie:1993xq} with $c=0$ 
appear quite generically in models with a non-trivial fixed point theory
but a trivial generating function \cite{Cardy:gen}. Prime examples of such
theories are systems with quenched disorder \cite{Gurarie:1999yx}
that include, among others, percolation 
\cite{Cardy:1991cm,Watts:1996yh,Flohr:2005ai,Rasmussen:2007pc,Mathieu:2007pe,%
Rasmussen:2008ii,Ridout:2008cv}
and polymers  \cite{Saleur:1991hk,Read:2001pz,Pearce:2006we,DJS}. 
Another area where  $c=0$ conformal field theories have recently
made an appearance is as the dual theory for chiral gravity in AdS$_3$
\cite{Li:2008dq}, and also in this context there are indications that the conformal
field theory is logarithmic \cite{Grumiller:2008es,Maloney:2009ck}.

Much has been learned over the last fifteen years about logarithmic conformal
field theories. In particular, the triplet theory at $c=-2$ \cite{Kausch:1990vg} has
been understood in great detail \cite{Gaberdiel:1996np,Gaberdiel:1998ps,Kawai:2001ur,Gaberdiel:2006pp},
as have been the WZW models with supergroup target spaces
\cite{Rozansky:1992rx,Zirnbauer:1999ua,Schomerus:2005bf,Gotz:2006qp,Saleur:2006tf,Quella:2007hr}.
There has also been some progress towards understanding 
WZW models at fractional level that  show logarithmic
behaviour  \cite{Gaberdiel:2001ny,Lesage:2002ch,Lesage:2003kn,Ridout:2010qx},
logarithmic extensions of minimal models \cite{Pearce:2006sz},
and the structure of general indecomposable 
Virasoro-representations \cite{Kytola:2009ax}.
However, most of the models that have been understood behave quite differently from the 
theory at $c=0$. Indeed, $c=0$ is special since the Virasoro field is 
`null' at $c=0$ in the sense that it has a vanishing $2$-point function. As a consequence
the vacuum representation of the chiral algebra is not irreducible, 
and a number of novel phenomena appear. Many of these are also shared by 
the whole family of logarithmic minimal ${\cal W}_{p,q}$-models with $p,q\geq 2$
\cite{Feigin:2006iv} that have attracted a lot of attention during the last
four years \cite{Eberle:2006zn,Feigin:2006xa,Semikhatov:2007qp,Rasmussen:2008ii,%
Rasmussen:2008xi,Rasmussen:2008ez,Gaberdiel:2009ug,Rasmussen:2009zt,Wood:2009ub,%
AM1,Rasmussen:2009xa,AM}.
Significant progress has been made for these theories, and the fusion rules as well as the modular properties
of the chiral representations are now fairly well understood. In \cite{Gaberdiel:2009ug}
a consistent boundary theory was proposed for the simplest member of this family of models,
the theory with $c_{2,3}=0$,
and the generalisation to all ${\cal W}_{p,q}$-models was sketched in \cite{Wood:2009ub}. 
However, so far an understanding of the corresponding bulk theory is missing. 
\smallskip

In this paper we shall construct the bulk space of states  for the ${\cal W}_{2,3}$-model
that corresponds to the boundary theory of  \cite{Gaberdiel:2009ug}. Our strategy is to reconstruct
the bulk theory, starting from the boundary ansatz. This idea goes back to 
\cite{Runkel:1998pm,Felder:1999mq,tft1} (see also \cite{Petkova:2001ag} and 
\cite{unique,Kong:2008ci}), and was successfully applied to the logarithmic $\Wc_{1,p}$ triplet 
models  (and in particular to the theory at $c=-2$) in \cite{Gaberdiel:2007jv}. There are a number
of subtleties that arise in the present context, and as a consequence 
the analysis is technically considerably harder than in  \cite{Gaberdiel:2007jv}. However, the resulting
space of bulk states still has all properties one should expect it to  have. In particular, 
the partition function is modular invariant, and it allows for the construction
of boundary states in agreement with the prediction of \cite{Gaberdiel:2009ug}. 
Both of these are non-trivial consistency checks, 
and we are therefore confident that our bulk ansatz defines a consistent bulk conformal field theory. 

Many structural features are similar to what happens for the $\Wc_{1,p}$-models
(and in particular at $c=-2$). For example the space $\Hc_\text{bulk}$ of bulk states is of 
the form
\be\label{eq:Hbulk-decomp-WxP}
  \Hc_\text{bulk} = {\textstyle \bigoplus_{i \in \text{Irr}}} \Wc(i) \otimes_\Cb \bar\Pc(i)
  \qquad
  \text{as $(L_0^\text{diag},\bar L_0^\text{diag})$- graded vector space}\ ,
\ee
where the sum runs over all irreducible $\Wc_{2,3}$-representations $\Wc(i)$, 
$\Pc(i)$ denotes the projective cover of $\Wc(i)$, and $L_0^\text{diag}$ (resp.\ $\bar L_0^\text{diag}$) refers to the
diagonal part of the action of $L_0$ (resp.\ $\bar L_0$).
However, there are also a number of remarkable new phenomena that appear
for the $\Wc_{2,3}$-model at $c=0$. 
Firstly, in distinction to all rational CFTs and the $\Wc_{1,p}$-models, 
the bulk theory does not contain $\Wc_{2,3} \otimes_\Cb \bar\Wc_{2,3}$ as a 
sub-representation, but only as a sub-quotient.
The second surprising feature is that in order to
reproduce the annulus amplitudes one only needs ten Ishibashi states, while 
the characters of the representations labelling the boundary states of \cite{Gaberdiel:2009ug} 
span a $12$-dimensional space. The actual boundary states must involve
additional Ishibashi states in order for the bulk-boundary correlators to be non-degenerate 
in the bulk entry, but these additional Ishibashi states do not contribute to the annulus diagrams
(without any additional insertions), and are hence invisible from the point of view of the usual
annulus analysis.
\bigskip

The paper is organised as follows. 
In the remainder of this introduction we shall first review some of the results from
\cite{Gaberdiel:2009ug} that we shall need in the rest of the paper. In particular, 
Section~\ref{sec:intro-chiral} reviews the representation theory of the $ \Wc_{2,3}$-algebra 
in question, while Section~\ref{sec:intro-bnd} summarises the boundary theory that was proposed 
in \cite{Gaberdiel:2009ug}. In Section~\ref{sec:intro-bulkbnd} we sketch the general strategy for 
the construction of the bulk theory starting from this boundary ansatz. Finally, 
Section~\ref{sec:intro-bndstate} describes the main features of the corresponding boundary 
state analysis. These sections are meant to give a non-technical account of these considerations;
the actual details are then spelled out in Sections~\ref{sec:bulk} and~\ref{sec:bnd}, 
respectively, while Section~\ref{sec:conclusion} contains our conclusions. Some of the tables, 
diagrams and technical constructions have been delegated into appendices.

\smallskip

\subsection[The chiral ${\cal W}_{2,3}$-model]{The chiral $\boldsymbol{{\cal W}_{2,3}}$-model}\label{sec:intro-chiral}

The ${\cal W}_{p,q}$-models have central charge 
\begin{align}
  c_{p,q}=1-6\frac{(p-q)^2}{pq} \ ,
\end{align}
where $p$ and $q$ are a pair of positive coprime integers. In the following
we shall only consider the case  $(p,q)=(2,3)$ for which $c_{2,3}=0$. 
The chiral algebra $\Wc\equiv \Wc_{2,3}$ is generated by the Virasoro algebra, as well 
as by a triplet of fields of conformal weight $h=15$. In the vacuum representation 
$L_{-1}\Omega=0$, but $T=L_{-2}\Omega\neq 0$. Since the positive modes annihilate $T$
(in particular, $c=0$ implies that $L_2 T=0$), the 
vacuum representation  contains a proper subrepresentation, namely the highest
weight representation $\Wc(2)$ generated from $T$ by the action of the negative modes. 
Thus $\Wc$ is not irreducible, but has the schematic structure
\begin{equation}\label{Wemb}
  \rule{0pt}{1.5em}
  \xymatrix@R=0em{
  \Omega \ar@{.>}[r] \ar@/^1.2em/[rr]&\times&T\ar@{->}[rr]|{\Wc(2)}&& \\
  h=0 & h=1 & h=2 
  }
\end{equation}
\smallskip

\noindent
The irreducible representations of the $\Wc$ algebra are described by the finite Kac table
\cite{Feigin:2006iv}
\be
  \begin{tabular}{c|ccc}
    & $s=1$ & $s=2$ & $s=3$ \\
    \hline
    $r=1$ & \colorbox[gray]{.8}{$0,\,2,\,7$}~&~ \colorbox[gray]{.8}{$0,\,1,\,5$}~
    &$\frac{1}{3},\,\frac{10}{3}$\\[.5em]
    $r=2$ & $\frac58,\,\frac{33}{8}$& $\frac18,\,\frac{21}{8}$& $\frac{-1}{24},\,\frac{35}{24}$
  \end{tabular}
\labl{eq:W23-irreps}
Here each entry $h$ is the conformal dimension of the highest
weight states of an irreducible representation, which we shall denote by $\Wc(h)$. 
There is only one representation corresponding to $h=0$, 
namely the one-dimensional vacuum representation $\Wc(0)$, spanned 
by the vacuum vector $\Omega$. The significance of the grey boxes will be explained 
below in Section~\ref{sec:intro-bnd}.

As is familiar from other logarithmic theories, the $13$ irreducible representations 
in \eqref{eq:W23-irreps} do 
not close among themselves under fusion. 
The minimal set of representations which includes \eqref{eq:W23-irreps} and 
is closed under fusion and taking conjugates
involves in addition the $22$ indecomposable representations
\bea
  \colorbox[gray]{.8}{\rule[-2pt]{0pt}{1.1em}$\Wc$}~,~~
  \colorbox[gray]{.8}{\rule[-2pt]{0pt}{1.1em}$\Wc^*$}~,~~
  \colorbox[gray]{.8}{\rule[-2pt]{0pt}{1.1em}$\Qc$}~,~~
  \colorbox[gray]{.8}{\rule[-2pt]{0pt}{1.1em}$\Qc^*$}~,~~
  \Rc^{(2)}(0,2)_7~,~~
  \Rc^{(2)}(2,7)~,~~
  \Rc^{(2)}(0,1)_5~,~~
  \Rc^{(2)}(1,5)~,~~
\\[.3em]\displaystyle
  \Rc^{(2)}(0,2)_5~,~~
  \Rc^{(2)}(2,5)~,~~
  \Rc^{(2)}(0,1)_7~,~~
  \Rc^{(2)}(1,7)~,~~
  \Rc^{(2)}(\tfrac13,\tfrac13)~,~~
\\[.3em]\displaystyle
  \Rc^{(2)}(\tfrac13,\tfrac{10}{3})~,~~
  \Rc^{(2)}(\tfrac58,\tfrac58)~,~~
  \Rc^{(2)}(\tfrac58,\tfrac{21}{8})~,~~
  \Rc^{(2)}(\tfrac18,\tfrac18)~,~~
  \Rc^{(2)}(\tfrac18,\tfrac{33}{8})~,~~
\\[.3em]\displaystyle 
  \Rc^{(3)}(0,0,1,1)~,~~
  \Rc^{(3)}(0,0,2,2)~,~~
  \Rc^{(3)}(0,1,2,5)~,~~
  \Rc^{(3)}(0,1,2,7) \ ,
\eear\labl{eq:indec-W-rep}
whose structure (including characters and embedding diagrams) 
are described in some detail in \cite[App.\,A]{Gaberdiel:2009ug} and Appendix~\ref{app:W23-reps} 
below. Note that in addition to the chiral algebra $\Wc$ also
its conjugate representation $\Wc^*$ appears; because of the structure of (\ref{Wemb})
it is not isomorphic to $\Wc$. (Indeed, the embedding structure of  $\Wc^*$ can be obtained
from that of $\Wc$ by reversing the direction of the arrow $\Omega\rightarrow T$.)

The associative fusion rules of all of these $13+22=35$ representations were given
explicitly in  \cite{Gaberdiel:2009ug}. They were derived based on a direct calculation of the 
fusion rules for the Virasoro subalgebra (compare \cite{Eberle:2006zn}), and agree with what was 
obtained based on a lattice analysis in  \cite{Rasmussen:2008ii,Rasmussen:2008ez}. 
 
For the construction of the bulk theory the projective covers of the irreducible representations 
play an important role. Let us denote the projective cover of the representation $\Wc(h)$
 by $\Pc(h)$. Then we have the identifications (this was already suggested in \cite{Gaberdiel:2009ug})
 \be
 \begin{array}{rlll}
 & \Pc(1) = {\cal R}^{(3)}(0,0,1,1) \ , \qquad
 & \Pc(2) = {\cal R}^{(3)}(0,0,2,2) \ ,  \qquad
 & \Pc(5) = {\cal R}^{(3)}(0,1,2,5) \ , \qquad  \\
  & \Pc(7) = {\cal R}^{(3)}(0,1,2,7) \ , \qquad
 & \Pc(\tfrac{1}{3}) = {\cal R}^{(2)} (\tfrac{1}{3},\tfrac{1}{3})  \ ,  \qquad
 & \Pc(\tfrac{10}{3}) = {\cal R}^{(2)} (\tfrac{1}{3},\tfrac{10}{3})  \ ,   \\
  & \Pc(\tfrac{5}{8}) = {\cal R}^{(2)} (\tfrac{5}{8},\tfrac{5}{8}) \ , \qquad
 & \Pc(\tfrac{21}{8}) = {\cal R}^{(2)} (\tfrac{5}{8},\tfrac{21}{8}) \ , \qquad
 & \Pc(\tfrac{1}{8}) = {\cal R}^{(2)} (\tfrac{1}{8},\tfrac{1}{8}) \ , \qquad  \\
 & \Pc(\tfrac{33}{8}) = {\cal R}^{(2)} (\tfrac{1}{8},\tfrac{33}{8}) \ ,  \qquad
 & \Pc(\tfrac{-1}{24}) = \Wc(\tfrac{-1}{24})  \ , \qquad
 & \Pc(\tfrac{35}{24}) = \Wc(\tfrac{35}{24})  \ .
 \end{array}
 \labl{eq:proj-cover-list}
As was also explained in \cite{Gaberdiel:2009ug}, none of 
the representations in \eqref{eq:W23-irreps} and \eqref{eq:indec-W-rep} can be the
projective cover of $\Wc(0)$. On the other hand, on general 
grounds \cite{Huang:2007ir,Huang:2007mj,Huang:2009} 
one may expect that every irreducible 
representation has a projective cover. We shall propose the structure of the projective cover 
$\Pc(0)$ for $\Wc(0)$ in Section~\ref{sec:p0structure} below. 
In Appendix~\ref{app:P0-cover} we shall derive some further properties of $\Pc(0)$, and 
deduce its fusion rules with most of the representations in \eqref{eq:W23-irreps} and 
\eqref{eq:indec-W-rep}.

\subsection{The boundary theory}\label{sec:intro-bnd}

For non-logarithmic rational conformal field theories 
with charge-conjugation modular invariant (the `Cardy case')
the boundary conditions that preserve
the chiral symmetry are in one-to-one correspondence to the representations of the
chiral algebra \cite{Cardy:1989ir}. The situation is similar
for the $\Wc_{1,p}$-models, but as was explained in 
\cite{Gaberdiel:2009ug}, the situation is more complicated for the $\Wc_{2,3}$-model at 
$c=0$. Indeed, consistent boundary conditions can only be associated to a subset of the
$35$ representations above, namely to those $26$ representations that are 
{\em not in grey boxes}.\footnote{There may, however, be additional boundary conditions 
associated to other classes of representations.}
These allowed representations are characterised by the property that 
the conjugate representation agrees with the dual representation. This condition guarantees
that the space of boundary fields
has a non-degenerate bilinear pairing, {\it i.e.}\ that the 
boundary $2$-point functions are non-degenerate (for more details see \cite{Gaberdiel:2009ug}).
In particular, this then also implies that the space of boundary fields of any consistent boundary
condition cannot just be $\Wc_{2,3}$ itself.

The space of open string states $\Hc_{\Rc\rightarrow\mathcal{S}}$ between two boundary 
conditions labelled by ${\cal R}$ and ${\cal S}$
is given by the fusion product  $\Hc_{\Rc\rightarrow\mathcal{S}} = {\cal S}\otimes_f {\cal R}^*$, 
where $\otimes_f$ denotes the fusion product of $\Wc$-representations and
${\cal R}^*$ is the conjugate representation
to ${\cal R}$; this is as in the usual Cardy case. 

A convenient tool to analyse the annulus partition functions, {\it i.e.}\ the characters of the 
$\Wc$-representation $\Hc_{\Rc\rightarrow\mathcal{S}}$, is provided by the (additive) 
Grothendieck group $\mathrm{K}_0$. It consists of equivalence classes $[\Rc]$ of 
representations $\Rc$, where two representations $\Rc$ and $\Rc'$ are equivalent 
iff they have the same character\footnote{This description of $\mathrm{K}_0$ is correct if 
the irreducible representations of $\Wc$ have linearly independent characters, which is the case 
for $\Wc_{2,3}$.}, $\chi_\Rc(q) = \chi_{\Rc'}(q)$. Thus, giving the 
element $[\Rc]$ in $\mathrm{K}_0$ for a representation $\Rc$ is the same as specifying 
its character, and we can describe the annulus partition functions as
elements $[\Hc_{\Rc\rightarrow\mathcal{S}}] \in \mathrm{K}_0$.

The additive group $\mathrm{K}_0$ is generated by 13 elements, 
corresponding to the characters of the irreducible representations.
The representations which lead to consistent boundary conditions, {\it i.e.}\ those 
not in grey boxes, generate a subgroup $\Kdd$ of $\mathrm{K}_0$ spanned 
(for example) by the following 12 independent elements \cite[Sect.\,2.4]{Gaberdiel:2009ug},
\bea
   \Kdd = \text{span}_{\Zb}\Big(  [\Wc(h)] \big| h = 0,\tfrac{1}{3}, \tfrac{10}{3}, \tfrac{5}{8}, \tfrac{33}{8}, 
   \tfrac{1}{8}, \tfrac{21}{8}, \tfrac{-1}{24}, \tfrac{35}{24} \Big) \\[.5em]\displaystyle
   \hspace{4em}
   \oplus ~ 2 \Zb \big( [\Wc(2)]{+}[\Wc(7)] ) 
   \, \oplus \, 2 \Zb \big( [\Wc(1)]{+}[\Wc(5)] ) 
   \, \oplus \, 2 \Zb \big( [\Wc(2)]{+}[\Wc(5)] ) ~. 
\eear\labl{eq:K0dd-basis}
The fusion product makes $\Kdd$ (but not $\mathrm{K}_0$) into a ring.
By definition, this means that the annulus partition function for boundary conditions $\Rc$ and 
$\mathcal{S}$ can be written as
\be
  [\Hc_{\Rc\rightarrow\mathcal{S}}] = [\Sc] \cdot [\mathcal{R}^*] \ .
\ee
Clearly, two representations $\Rc$ and $\Rc'$ lead to the same annulus partition function
$[\Hc_{\Rc\rightarrow\mathcal{S}}] = [\Hc_{\Rc'\rightarrow\mathcal{S}}]$ if they have the
same character, because then $[\Rc] = [\Rc']$. However, the converse is not true: the ring $\Kdd$ 
contains a $2$-dimensional null ideal spanned by 
\be\label{nullideal}
N_1=    [\Wc(\tfrac{5}{8})] - [\Wc(\tfrac{33}{8})] 
- [\Wc(\tfrac{1}{8})] + [\Wc(\tfrac{21}{8})] - [\Wc(\tfrac{-1}{24})] + [\Wc(\tfrac{35}{24})]
~,~~  N_2 = [\Wc(0)] \ .
\ee
These two elements have the property that $N_1 \cdot C = 0 = N_2 \cdot C$ for all $C \in \Kdd$.
Thus any two representations $\Rc$ and $\Rc'$ with the property that 
$[\Rc]-[\Rc'] \in \text{span}_{\Zb}\big( [N_1] , [N_2] \big)$
also lead to identical annulus partition functions. For example,
the two representations $\Rc =  \Wc(\tfrac{5}{8}) \oplus \Wc(\tfrac{21}{8}) \oplus \Wc(\tfrac{35}{24})$
and $\Rc' = \Wc(\tfrac{33}{8}) \oplus \Wc(\tfrac{1}{8}) \oplus \Wc(\tfrac{-1}{24})$ have the property
that $[\Rc]-[\Rc']=[N_1]$. As a consequence their annulus amplitudes with any boundary
condition $\mathcal{S}$ agrees,
$[\Hc_{\Rc\rightarrow\mathcal{S}}] = [\Hc_{\Rc'\rightarrow\mathcal{S}}]$.
This does not imply that $\Rc$ and $\Rc'$ have identical boundary states 
$\|\Rc\rangle\!\rangle$ and $\|\Rc'\rangle\!\rangle$; 
indeed, one may expect that suitable correlators of bulk fields on a disc can distinguish two boundary conditions
$\Rc$ and $\Rc'$ whenever $[\Rc] \neq [\Rc']$ in $\Kdd$.
However, it does mean that $10$ Ishibashi states are 
sufficient to reproduce all annulus partition functions as overlaps 
$\langle\!\langle \Rc \| q{}^{L_0} \bar q{}^{\bar L_0} \|\mathcal{S}\rangle\!\rangle$.

\subsection{From boundary to bulk}\label{sec:intro-bulkbnd}

For non-logarithmic rational conformal field theories one can reconstruct the 
bulk theory from one consistent boundary condition \cite{Felder:1999mq,tft1}. The same
method was also successfully applied in  \cite{Gaberdiel:2007jv} to the case of the
logarithmic  $\Wc_{1,p}$-models. The idea behind the construction is as follows. 

Suppose we are given a boundary condition together with its space of boundary fields and
the associated operator product expansion (OPE). We consider the disc correlator involving one boundary field
and one field from the yet-to-be-constructed space of bulk states. This correlator 
defines  a pairing between the bulk and boundary degrees of freedom
\begin{equation}\label{bulkbdy}
{\cal H}_{\rm bulk}^{({\rm ans})} \times {\cal H}_{\rm bnd} \rightarrow {\mathbb C}: 
(\psi,\phi) \mapsto \big\langle V(\psi,0)\, V(\phi,u) \big\rangle_{\!\text{disc}} \ ,
\end{equation}
where $|u|=1$ is a point on the perimeter of the disc, and the index `(ans)' indicates
that at this stage we can only make an ansatz for the bulk space of states. 
If $\phi\in {\cal R} \otimes_\Cb \bar{\cal S}$ is in a representation of the
left- and right-moving chiral algebra, then the map (\ref{bulkbdy}) is 
uniquely determined up to some constants (one for each allowed fusion
channel) since, by the usual doubling trick, the correlator can be thought of 
as a chiral $3$-point function.  These constants encode the bulk-boundary OPE, and they
are constrained by two necessary requirements. First of all, it 
follows from general principles (non-degeneracy of the bulk $2$-point function on the sphere) 
that the map (\ref{bulkbdy}) has to be non-degenerate in the first (bulk) entry. Furthermore, given
the OPE of the boundary fields, it also defines correlators involving more than
one boundary field, and these have to satisfy the appropriate locality conditions,
see Section~\ref{sec:bulk-boundary-map}.
The `correct' space of bulk states is then simply
the {\em largest possible} space compatible with these requirements. It was proven in \cite{unique,Kong:2008ci} 
that in the non-logarithmic setting, this construction reproduces indeed the 
unique space of bulk states that is compatible with the given boundary condition. For the logarithmic
$\Wc_{1,p}$-models the construction was also shown to lead to sensible results; in particular,
the known $c=-2$ bulk theory of \cite{Gaberdiel:1998ps} was correctly reproduced by this 
method in  \cite{Gaberdiel:2007jv}.

In all of these constructions the analysis is simplest if the space of boundary fields just consists of
the chiral algebra (or VOA) itself. This is the case for the `identity Cardy brane'
and it leads to the charge-conjugation modular invariant theory. Indeed, such a brane exists
for all the $\Wc_{1,p}$-models, and it was taken as the starting point for the analysis of 
\cite{Gaberdiel:2007jv}. As already noted at the beginning of Section~\ref{sec:intro-bnd}, 
one of the complications for $c=0$ is that such a boundary condition
does not exist since the non-degeneracy of the boundary 2-point function forbids the space of 
boundary fields to consist just of the chiral algebra itself. However, as we shall propose
in Section~\ref{proposal}, for the purpose of analysing the consistency of the bulk ansatz, 
one may assume that the space of boundary fields just  consists of $\Wc^*$, the conjugate
of the VOA $\Wc$, so that we can proceed very similarly  to \cite{Gaberdiel:2007jv}. 
In particular, the solution to the maximality condition is
\begin{equation}\label{ansa}
{\cal H}_{\rm bulk}  =  \Bigl( \bigoplus_j \Pc(j) \otimes_\Cb \bar\Pc(j)^* \Bigr)/ {\cal N} \ , 
\end{equation}
where the sum runs over all irreducible representations, $\Pc(j)$ denotes the corresponding 
projective cover, and ${\cal N}$ is a certain subspace that can be calculated as in 
\cite{Gaberdiel:2007jv}, and that guarantees that the bulk-boundary map is non-degenerate in
the bulk entry. For the $\Wc_{2,3}$-model we will see in Section~\ref{proposal} that the space
of bulk states has the form
\begin{equation}\label{bulk}
{\cal H}_{\rm bulk} = {\cal H}_0 \oplus  {\cal H}_{\frac{1}{8}} \oplus {\cal H}_{\frac{5}{8}}
\oplus {\cal H}_{\frac{1}{3}} \oplus {\cal H}_{\frac{-1}{24}}  \oplus {\cal H}_{\frac{35}{24}}\ ,
\end{equation}
where we have labelled the individual blocks ${\cal H}_h$ by the conformal weight of the
lowest state. The corresponding characters are explicitly given by
\be\begin{array}{rl}\displaystyle
\mathrm{tr}_{\Hc_0}\big(q{}^{L_0} \bar q{}^{\bar L_0}\big) 
\etb\!\!\!= 2\bigl| \chi_{\Wc(0)}(q)\bigr|^2 + 
\big| \chi_{\Wc(0)}(q) + 2\chi_{\Wc(1)}(q) + 2\chi_{\Wc(2)}(q) 
+ 2\chi_{\Wc(5)}(q) + 2\chi_{\Wc(7)}(q) \big|^2 \ ,
\enl
\mathrm{tr}_{\Hc_{\frac{1}{8}}}\big(q{}^{L_0} \bar q{}^{\bar L_0}\big) 
\etb\!\!\!= 2 \big| \chi_{\Wc(\frac18)}(q) + \chi_{\Wc(\frac{33}8)}(q) \big|^2 \ ,
\enl
\mathrm{tr}_{\Hc_{\frac{5}{8}}}\big(q{}^{L_0} \bar q{}^{\bar L_0}\big) 
\etb\!\!\!= 2 \big| \chi_{\Wc(\frac58)}(q) + \chi_{\Wc(\frac{21}8)}(q) \big|^2 \ ,
\enl
\mathrm{tr}_{\Hc_{\frac{1}{3}}}\big(q{}^{L_0} \bar q{}^{\bar L_0}\big) 
\etb\!\!\!= 2 \big| \chi_{\Wc(\frac13)}(q) + \chi_{\Wc(\frac{10}3)}(q) \big|^2 \ ,
\enl
\mathrm{tr}_{\Hc_{\frac{-1}{24}}}\big(q{}^{L_0} \bar q{}^{\bar L_0}\big) 
\etb\!\!\!= \big| \chi_{\Wc(-\frac1{24})}(q) \big|^2 \quad, \quad \text{and}  \quad
\mathrm{tr}_{\Hc_{\frac{35}{24}}}\big(q{}^{L_0} \bar q{}^{\bar L_0}\big) 
= \big| \chi_{\Wc(\frac{35}{24})}(q) \big|^2 \ ,
\eear\label{eq:block-chars}
\ee
where $\chi_{\Wc(h)}(q)$ denotes the character of the irreducible representation
$\Wc(h)$, see \cite{Feigin:2006iv} or \cite[App.\,A.1]{Gaberdiel:2009ug} for explicit expressions.
Our construction, however, contains much more information than just these characters; in
fact, our analysis leads to a description of the $\Hc_h$ as a
representation of $\Wc \otimes_\Cb \bar\Wc$. 

A convenient way to represent the structure
of these (not fully reducible) representations is the composition series, which is defined as follows.
Starting from a representation $M_1$, one finds the largest sub-representation $R_1$ which 
can be written as a direct sum of irreducible representations --- these are called composition factors.
Then one takes the quotient of $M_1$ by $R_1$ and repeats the procedure with $M_2=M_1/R_1$. 
In other words, one constructs a chain of sub-representations
\be
  M_1 = A_n \supset A_{n-1} \supset \cdots \supset A_2 \supset A_1 = R_1
\ee
such that $R_i = A_i / A_{i-1}$ is a direct sum of irreducible representations.\footnote
  {It is more common to require each quotient $A_i / A_{i-1}$ to be irreducible, rather
  than fully reducible. In this case the composition series is only unique up to permutations
  of its composition factors.}
We represent the quotients $R_i$ of a composition series as
\be
 R_n \rightarrow R_{n-1} \rightarrow \cdots \rightarrow R_2 \rightarrow R_1  \ .
\ee
The action of ${\cal W}\otimes_\Cb {\bar{\cal W}}$ either maps states within a representation $R_j$ 
into one another, or moves them along arrows in the composition series. 

With this technology at hand, we can now describe the composition series of the representations
appearing in (\ref{bulk}). For $\Hc_{\frac{-1}{24}}$ and $\Hc_{\frac{35}{24}}$ the composition series consists 
just of a single term
\be\label{eq:irred-sector-comp-series}
 \Hc_{\frac{-1}{24}} ~:~ \Wc(\tfrac{-1}{24}) \otimes_\Cb \bar\Wc(\tfrac{-1}{24}) \ , \qquad
 \Hc_{\frac{35}{24}} ~:~ \Wc(\tfrac{35}{24}) \otimes_\Cb \bar\Wc(\tfrac{35}{24}) \ .
\ee
For ${\cal H}_{\frac{1}{8}}$ it has the same structure as in the $\Wc_{1,p}$-models (see \cite{Quella:2007hr} 
and \cite{Gaberdiel:2007jv}) 
\be
 {\cal H}_{\frac{1}{8}} ~ : \quad
 \begin{array}{rcl} 
   \Wc(\tfrac{1}{8}) \otimes_\Cb \bar\Wc(\tfrac{1}{8})
 &\oplus& \Wc(\tfrac{33}{8}) \otimes_\Cb \bar\Wc(\tfrac{33}{8})
\\
 &\downarrow&
\\
 2 \cdot \Wc(\tfrac{1}{8}) \otimes_\Cb \bar\Wc(\tfrac{33}{8})
 &\oplus& 2 \cdot \Wc(\tfrac{33}{8}) \otimes_\Cb \bar\Wc(\tfrac{1}{8})
\\
 &\downarrow&
\\
 \Wc(\tfrac{1}{8}) \otimes_\Cb \bar\Wc(\tfrac{1}{8})
 &\oplus& \Wc(\tfrac{33}{8}) \otimes_\Cb \bar\Wc(\tfrac{33}{8})\ ,
\end{array}
\quad 
\ee  
where we have written the composition series vertically. This is easier to visualise if we 
represent each direct sum by a little table where we indicate the multiplicity of each term 
$\Wc(h) \otimes_\Cb \bar\Wc(\bar h)$. For example, the composition series for 
${\cal H}_{1/8}$ is then written as
\be\label{eq:1/8_compseries}
\begin{tabular}{c|c|c|}
& $\tfrac{1}{8}$ & $\tfrac{33}{8}$ \\ 
\hline
$\tfrac{1}{8}$  & 1 & 0 \\
\hline
$\tfrac{33}{8}$ & 0 & 1 \\
\hline
\end{tabular}
\quad \longrightarrow \quad
\begin{tabular}{c|c|c|}
& $\tfrac{1}{8}$ & $\tfrac{33}{8}$ \\ 
\hline
$\tfrac{1}{8}$  & 0 & 2 \\
\hline
$\tfrac{33}{8}$ & 2 & 0 \\
\hline
\end{tabular}
\quad \longrightarrow \quad
\begin{tabular}{c|c|c|}
& $\tfrac{1}{8}$ & $\tfrac{33}{8}$ \\ 
\hline
$\tfrac{1}{8}$  & 1 & 0 \\
\hline
$\tfrac{33}{8}$ & 0 & 1 \\
\hline
\end{tabular}\ .
\ee
Here the horizontal direction gives $h$ and the vertical direction $\bar h$. 
The picture for ${\cal H}_{{5}/{8}}$ and ${\cal H}_{1/3}$ looks the same,
with $\{ \tfrac18, \tfrac{33}8 \}$ replaced by $\{ \tfrac58, \tfrac{21}8 \}$ and 
$\{ \tfrac13, \tfrac{10}3 \}$, respectively. For $\Hc_0$ the composition series is more 
complicated, 
\def\stab{\hspace*{-.6em}&\hspace*{-.4em}}
\begin{align}\label{eq:0_compseries}
\begin{tikzpicture}[>=latex,baseline=(F)]
  \node (A) at (0,0) {\scriptsize\begin{tabular}{c|c|c|c|c|c|}
 &\hspace*{-.6em} 0 \stab 1 \stab 2 \stab 5 \stab 7 \hspace*{-.6em}\\
\hline
0 &\hspace*{-.6em}   1 \stab   \stab   \stab   \stab   \hspace*{-.6em}\\
\hline
1 &\hspace*{-.6em}   \stab 1 \stab   \stab   \stab   \hspace*{-.6em}\\
\hline
2 &\hspace*{-.6em}   \stab   \stab 1 \stab   \stab   \hspace*{-.6em}\\
\hline
5 &\hspace*{-.6em}   \stab   \stab   \stab 1 \stab   \hspace*{-.6em}\\
\hline
7 &\hspace*{-.6em}   \stab   \stab   \stab   \stab 1 \hspace*{-.6em}\\
\hline
\end{tabular}};
  \node (B) at (-5,0) {\scriptsize\begin{tabular}{c|c|c|c|c|c|}
 &\hspace*{-.6em} 0 \stab 1 \stab 2 \stab 5 \stab 7 \hspace*{-.6em}\\
\hline
0 &\hspace*{-.6em}    \stab 1  \stab 1  \stab   \stab   \hspace*{-.6em}\\
\hline
1 &\hspace*{-.6em} 1 \stab  \stab   \stab  2 \stab 2  \hspace*{-.6em}\\
\hline
2 &\hspace*{-.6em} 1  \stab   \stab  \stab 2  \stab 2  \hspace*{-.6em}\\
\hline
5 &\hspace*{-.6em}   \stab  2 \stab 2  \stab  \stab   \hspace*{-.6em}\\
\hline
7 &\hspace*{-.6em}   \stab 2  \stab 2  \stab   \stab  \hspace*{-.6em}\\
\hline
\end{tabular}};
  \node (C) at (2.5,4) {\scriptsize\begin{tabular}{c|c|c|c|c|c|}
 &\hspace*{-.6em} 0 \stab 1 \stab 2 \stab 5 \stab 7 \hspace*{-.6em}\\
\hline
0 &\hspace*{-.6em} 1 \stab   \stab   \stab 2 \stab 2 \hspace*{-.6em}\\
\hline
1 &\hspace*{-.6em}  \stab 2 \stab 4 \stab   \stab   \hspace*{-.6em}\\
\hline
2 &\hspace*{-.6em}  \stab 4 \stab 2 \stab  \stab  \hspace*{-.6em}\\
\hline
5 &\hspace*{-.6em} 2 \stab  \stab  \stab 2 \stab 4 \hspace*{-.6em}\\
\hline
7 &\hspace*{-.6em} 2 \stab  \stab  \stab 4 \stab 2 \hspace*{-.6em}\\
\hline
\end{tabular}};
  \node (D) at (-2.5,4) {\scriptsize\begin{tabular}{c|c|c|c|c|c|}
 &\hspace*{-.6em} 0 \stab 1 \stab 2 \stab 5 \stab 7 \hspace*{-.6em}\\
\hline
0 &\hspace*{-.6em}    \stab 1  \stab 1  \stab   \stab   \hspace*{-.6em}\\
\hline
1 &\hspace*{-.6em} 1 \stab  \stab   \stab  2 \stab 2  \hspace*{-.6em}\\
\hline
2 &\hspace*{-.6em} 1  \stab   \stab  \stab 2  \stab 2  \hspace*{-.6em}\\
\hline
5 &\hspace*{-.6em}   \stab  2 \stab 2  \stab  \stab   \hspace*{-.6em}\\
\hline
7 &\hspace*{-.6em}   \stab 2  \stab 2  \stab   \stab  \hspace*{-.6em}\\
\hline
\end{tabular}};
  \node (E) at (-7.5,4) {\scriptsize\begin{tabular}{c|c|c|c|c|c|}
 &\hspace*{-.6em} 0 \stab 1 \stab 2 \stab 5 \stab 7 \hspace*{-.6em}\\
\hline
0 &\hspace*{-.6em}  1  \stab   \stab   \stab   \stab   \hspace*{-.6em}\\
\hline
1 &\hspace*{-.6em}   \stab 1 \stab   \stab   \stab   \hspace*{-.6em}\\
\hline
2 &\hspace*{-.6em}   \stab   \stab 1 \stab   \stab   \hspace*{-.6em}\\
\hline
5 &\hspace*{-.6em}   \stab   \stab   \stab 1 \stab   \hspace*{-.6em}\\
\hline
7 &\hspace*{-.6em}   \stab   \stab   \stab   \stab 1 \hspace*{-.6em}\\
\hline
\end{tabular}};
\node (F) at (-5,2) {};
\node (G) at (2.5,2) {};
\draw[<-] (A) -- (B);
\draw[<-] (B) -- (F.center) -- (G.center) -- (C);
\draw[<-] (C) -- (D);
\draw[<-] (D) -- (E);
\node at (1.5,0) {.};
\end{tikzpicture}
\end{align}
All empty entries are equal to `0'.

Adding all the tables in a composition series reproduces the multiplicities given in 
the partition function for each $\Hc_h$ from \eqref{eq:block-chars}.  The complete 
partition function turns out to be modular invariant, as must be the case for a consistent 
conformal field theory. This represents a non-trivial consistency check on our analysis. 

It is also worth mentioning that a non-degenerate bulk two-point function requires that $\Hc_\text{bulk}$ 
is isomorphic  to its conjugate representation $\Hc_\text{bulk}^*$. A necessary condition for this is 
that the  composition series does not change when reversing all arrows, which indeed holds for the 
series given above and provides another consistency check of our construction.

\subsection{The boundary states}\label{sec:intro-bndstate}

With the detailed knowledge of the proposed bulk theory at hand we can study whether
the boundary conditions of \cite{Gaberdiel:2009ug} can actually be described in terms of appropriate 
boundary states. More specifically, we can ask whether we can reproduce the annulus partition functions
of \cite{Gaberdiel:2009ug}  in terms of suitable boundary states of our proposed bulk theory. 

The first step of this analysis can be done without any detailed knowledge of the bulk theory. The 
proposal of \cite{Gaberdiel:2009ug} for the boundary conditions makes a prediction for the various
annulus partition functions. Because of the two-dimensional null ideal, see eq.\ (\ref{nullideal}), we
expect that these annulus amplitudes can be reproduced by  boundary states that are linear
combinations of  $10$ (rather than $12$) Ishibashi states. This turns out to be correct:
if we label the necessary Ishibashi states in the various sectors of the bulk space \eqref{bulk} as
\begin{eqnarray}\label{eq:Ishibashistates}
{\cal H}_{\frac{-1}{24}} \ , \ {\cal H}_{\frac{35}{24}} & : &
| \tfrac{-1}{24}\rangle\!\rangle \ , \quad
| \tfrac{35}{24}\rangle\!\rangle  \ , \nonumber \\
{\cal H}_{\frac{1}{8}} \ , \ {\cal H}_{\frac{5}{8}} \ , \ {\cal H}_{\frac{1}{3}} & : & 
|  \tfrac{1}{8}, A \rangle\!\rangle  \ , \ 
|  \tfrac{1}{8}, B \rangle\!\rangle  \ , \ 
|  \tfrac{5}{8}, A \rangle\!\rangle \ , \ 
|  \tfrac{5}{8}, B \rangle\!\rangle \ , \ 
|  \tfrac{1}{3}, A \rangle\!\rangle \ , \
|  \tfrac{1}{3}, B \rangle\!\rangle \ , \label{Ishiused} \\
{\cal H}_0 & : &  
|  0, + \rangle\!\rangle   \ , \ 
|  0, - \rangle\!\rangle 
\ , \nonumber 
\end{eqnarray}
and assume --- this question will be addressed momentarily --- that their overlaps can be 
chosen to be (with $q=e^{2\pi i \tau}$)
\begin{align}\label{eq:Ishibashioverlaps}
  \langle\!\langle \tfrac{-1}{24} |\, q^{L_0 + \bar{L}_0}\, | \tfrac{-1}{24} \rangle\!\rangle
  &= \sqrt{3}\,\chi_{{\cal W}(\tfrac{-1}{24})}(q)
\nonumber \\
  \langle\!\langle \tfrac{35}{24} |\, q^{L_0 + \bar{L}_0}\, | \tfrac{35}{24} \rangle\!\rangle
  &= -\sqrt{3}\,\chi_{{\cal W}(\tfrac{35}{24})}(q) \nonumber \\
  \langle\!\langle \tfrac{1}{8},A |\, q^{L_0 + \bar{L}_0}\, | \tfrac{1}{8},A \rangle\!\rangle
  & = -\frac{i \tau}{3} \Bigl( \chi_{{\cal W}(\tfrac{1}{8})}(q)  -  2\, \chi_{{\cal W}(\tfrac{33}{8})}(q)  \Bigr)
\nonumber \\
\langle\!\langle \tfrac{1}{8},A |\, q^{L_0 + \bar{L}_0}\, | \tfrac{1}{8},B \rangle\!\rangle
& =  \frac{2}{\sqrt{3}} \Bigl( \chi_{{\cal W}(\tfrac{33}{8})}(q) + \chi_{{\cal W}(\tfrac{1}{8})}(q)  \Bigr)
\nonumber \\
\langle\!\langle \tfrac{5}{8},A |\, q^{L_0 + \bar{L}_0}\, | \tfrac{5}{8},A \rangle\!\rangle
& =  -\frac{i \tau2}{3} \Bigl( \chi_{{\cal W}(\tfrac{5}{8})}(q) - \tfrac{1}{2} \, \chi_{{\cal W}(\tfrac{21}{8})}(q) \Bigr)
\nonumber \\
\langle\!\langle \tfrac{5}{8},A |\, q^{L_0 + \bar{L}_0}\, | \tfrac{5}{8},B \rangle\!\rangle
& =  \frac{2}{\sqrt{3}} \Bigl( \chi_{{\cal W}(\tfrac{21}{8})}(q) + \chi_{{\cal W}(\tfrac{5}{8})}(q)  \Bigr)
 \\
\langle\!\langle \tfrac{1}{3},A |\, q^{L_0 + \bar{L}_0}\, | \tfrac{1}{3},A \rangle\!\rangle
& =  i \tau2\sqrt{3}
\Bigl(   \chi_{{\cal W}(\tfrac{1}{3})}(q)  -  \chi_{{\cal W}(\tfrac{10}{3})}(q)  \Bigr)
\nonumber \\
\langle\!\langle \tfrac{1}{3},A |\, q^{L_0 + \bar{L}_0}\, | \tfrac{1}{3},B \rangle\!\rangle
& =  \frac{2}{\sqrt{3}} \Bigl( \chi_{{\cal W}(\tfrac{1}{3})}(q) + \chi_{{\cal W}(\tfrac{10}{3})}(q)  \Bigr)
\nonumber \\
\langle\!\langle 0,\pm |\, q^{L_0 + \bar{L}_0}\, | 0 ,\pm \rangle\!\rangle
& = \frac12  \chi_{{\cal W}(0)}(q) = \frac12 \nonumber \\
\langle\!\langle 0,\pm |\, q^{L_0 + \bar{L}_0}\, |0,\mp \rangle\!\rangle
& =  \frac{1}{2}\Bigl(\chi_{{\cal W}(0)}(q)  + 
2 \chi_{{\cal W}(1)}(q) + 2 \chi_{{\cal W}(2)}(q) + 2 \chi_{{\cal W}(5)}(q) + 2 \chi_{{\cal W}(7)}(q) \Bigr)\ ,
\nonumber 
\end{align}
the annulus amplitudes are reproduced by 
\begin{align}\label{eq:boundary}
  \|\Wc(-\tfrac{1}{24})\rangle\!\rangle&=
  |\tfrac{-1}{24}\rangle\!\rangle-|\tfrac{35}{24}\rangle\!\rangle
  +3|\tfrac{1}{3},B\rangle\!\rangle
  +|\tfrac{5}{8},B\rangle\!\rangle-|\tfrac{1}{8},B\rangle\!\rangle \nonumber\\
  \|\Wc(\tfrac{35}{24})\rangle\!\rangle&=
  |\tfrac{-1}{24}\rangle\!\rangle-|\tfrac{35}{24}\rangle\!\rangle
  -3|\tfrac{1}{3},B\rangle\!\rangle
  +|\tfrac{5}{8},B\rangle\!\rangle-|\tfrac{1}{8},B\rangle\!\rangle
  \nonumber\\
  \|\Wc(\tfrac{1}{3})\rangle\!\rangle&=
  \frac{1}{2}\Big[|\tfrac{-1}{24}\rangle\!\rangle+|\tfrac{35}{24}\rangle\!\rangle\Big]
  +\frac{1}{2}|\tfrac{1}{3},A\rangle\!\rangle
  +\frac{1}{2}\Big[|\tfrac{5}{8},B\rangle\!\rangle+|\tfrac{1}{8},B\rangle\!\rangle\Big]
  -\frac{2}{\sqrt{3}}|0+\rangle\!\rangle\nonumber\\
  \|\Wc(\tfrac{10}{3})\rangle\!\rangle&=
  \frac{1}{2}\Big[|\tfrac{-1}{24}\rangle\!\rangle+|\tfrac{35}{24}\rangle\!\rangle\Big]
  -\frac{1}{2}|\tfrac{1}{3},A\rangle\!\rangle
  +\frac{1}{2}\Big[|\tfrac{5}{8},B\rangle\!\rangle+|\tfrac{1}{8},B\rangle\!\rangle\Big]
  +\frac{2}{\sqrt{3}}|0+\rangle\!\rangle\nonumber\\
  \|\Wc(\tfrac{5}{8})\rangle\!\rangle&=
  \frac{1}{3}\Big[|\tfrac{-1}{24}\rangle\!\rangle-|\tfrac{35}{24}\rangle\!\rangle\Big]
  +|\tfrac{1}{3},B\rangle\!\rangle
  +|\tfrac{5}{8},A\rangle\!\rangle+|\tfrac{1}{8},A\rangle\!\rangle
  -\frac{1}{12}\Big[|\tfrac{5}{8},B\rangle\!\rangle-|\tfrac{1}{8},B\rangle\!\rangle\Big]
  -|0-\rangle\!\rangle\nonumber\\
  \|\Wc(\tfrac{33}{8})\rangle\!\rangle&
  =\frac{1}{3}\Big[|\tfrac{-1}{24}\rangle\!\rangle-|\tfrac{35}{24}\rangle\!\rangle\Big]
  -|\tfrac{1}{3},B\rangle\!\rangle
  +|\tfrac{5}{8},A\rangle\!\rangle+|\tfrac{1}{8},A\rangle\!\rangle
  -\frac{1}{12}\Big[|\tfrac{5}{8},B\rangle\!\rangle-|\tfrac{1}{8},B\rangle\!\rangle\Big]
  +|0-\rangle\!\rangle\nonumber\\
  \|\Wc(\tfrac{1}{8})\rangle\!\rangle&=
  \frac{2}{3}\Big[|\tfrac{-1}{24}\rangle\!\rangle-|\tfrac{35}{24}\rangle\!\rangle\Big]
  -2|\tfrac{1}{3},B\rangle\!\rangle
  -|\tfrac{5}{8},A\rangle\!\rangle-|\tfrac{1}{8},A\rangle\!\rangle
  -\frac{5}{12}\Big[|\tfrac{5}{8},B\rangle\!\rangle-|\tfrac{1}{8},B\rangle\!\rangle\Big]
  -|0-\rangle\!\rangle\nonumber\\
  \|\Wc(\tfrac{21}{8})\rangle\!\rangle&=
  \frac{2}{3}\Big[|\tfrac{-1}{24}\rangle\!\rangle-|\tfrac{35}{24}\rangle\!\rangle\Big]
  +2|\tfrac{1}{3},B\rangle\!\rangle
  -|\tfrac{5}{8},A\rangle\!\rangle-|\tfrac{1}{8},A\rangle\!\rangle
  -\frac{5}{12}\Big[|\tfrac{5}{8},B\rangle\!\rangle-|\tfrac{1}{8},B\rangle\!\rangle\Big]
  +|0-\rangle\!\rangle\nonumber\\
  \|R^{(2)}(0,2)_7\rangle\!\rangle&=
  \frac{2}{3}\Big[|\tfrac{-1}{24}\rangle\!\rangle+|\tfrac{35}{24}\rangle\!\rangle\Big]
  +2\Big[|\tfrac{5}{8},A\rangle\!\rangle-|\tfrac{1}{8},A\rangle\!\rangle\Big]
  -\frac{1}{6}\Big[|\tfrac{5}{8},B\rangle\!\rangle+|\tfrac{1}{8},B\rangle\!\rangle\Big]
  \nonumber\\
  \|\Rc^{(2)}(2,7)\rangle\!\rangle&= 
 \|R^{(2)}(0,2)_7\rangle\!\rangle
  \nonumber\\
  \|\Rc^{(2)}(1,5)\rangle\!\rangle&=
  \frac{4}{3}\Big[|\tfrac{-1}{24}\rangle\!\rangle+|\tfrac{35}{24}\rangle\!\rangle\Big]
  -2\Big[|\tfrac{5}{8},A\rangle\!\rangle-|\tfrac{1}{8},A\rangle\!\rangle\Big]
  -\frac{5}{6}\Big[|\tfrac{5}{8},B\rangle\!\rangle+|\tfrac{1}{8},B\rangle\!\rangle\Big]
  \nonumber\\
  \|\Rc^{(2)}(2,5)\rangle\!\rangle&=
  |\tfrac{-1}{24}\rangle\!\rangle+|\tfrac{35}{24}\rangle\!\rangle
  +|\tfrac{1}{3},A\rangle\!\rangle
  -\frac{1}{2}\Big[|\tfrac{5}{8},B\rangle\!\rangle+|\tfrac{1}{8},B\rangle\!\rangle\Big]
  +\frac{2}{\sqrt{3}}|0+\rangle\!\rangle \ . 
\end{align}
This solution was found using the same technique as in
\cite[Sect.\,5.2]{Gaberdiel:2007jv},
{\it i.e.}\ by recursively constructing the boundary states using the knowledge of the
annulus partition functions.  
The $\Wc_{2,3}$-representations corresponding to the above boundary states 
generate all of $\Kdd$ (see \cite[Sect.\,2.4]{Gaberdiel:2009ug}); thus the boundary
state $\|\Rc\rangle\!\rangle$ for an arbitrary representation in $\Kdd$
is a linear combination of the above states with integral coefficients. We should also
stress that in the above boundary states we have only included components 
that have non-zero overlap with at least one other boundary state --- these are the only 
ones that can be detected in annulus amplitudes. The full boundary state must contain 
additional contributions, see Section~\ref{sec:bulk-int-kac}.

Up to now we have not used any knowledge about the bulk spectrum of the theory. 
Next we want to check whether the above  Ishibashi states (\ref{Ishiused}) (out of which our
boundary states were formed) actually exist within our bulk space $\Hc_\text{bulk}$.
It is not difficult to classify all Ishibashi states following \cite{Gaberdiel:2007jv},  and we find 
\begin{eqnarray} \label{Ishicounting}
{\cal H}_{\frac{-1}{24}} \ , \ {\cal H}_{\frac{35}{24}} & : & \hbox{1 Ishibashi state each } \nonumber \\
{\cal H}_{\frac{1}{8}} \ , \ {\cal H}_{\frac{5}{8}} \ , \ {\cal H}_{\frac{1}{3}} & : & 
\hbox{3  Ishibashi states each} \label{Ishigen} \\
{\cal H}_0 & : &  \hbox{9  Ishibashi states.} \nonumber
\end{eqnarray}
The Ishibashi states \eqref{Ishiused}
can be expressed in terms of linear combinations of the general Ishibashi states in 
\eqref{Ishicounting}. Furthermore, the structure of the bulk theory constrains the form 
of the corresponding overlaps, and this is perfectly compatible with \eqref{eq:Ishibashioverlaps}. 
This is a highly non-trivial consistency check of our analysis.

\section{The construction of the bulk space}\label{sec:bulk}

\subsection{From boundary to bulk --- the original construction}

Let us begin by reviewing how the original construction for the space of bulk fields
starting from a given boundary condition works. This construction was developed 
for  rational conformal field theories in \cite{Runkel:1998pm,Felder:1999mq,tft1,unique}, 
and then successfully applied to the logarithmic $\Wc_{1,p}$-models in
\cite{Gaberdiel:2006pp,Gaberdiel:2007jv} 
(for the logarithmic analogue of the
`charge-conjugation modular invariant', {\it i.e.}\ the Cardy case). 

The discussion in Sections~\ref{sec:boundaryfields}--\ref{sec:maximal-nondeg} is generic
and does not rely on $\Wc=\Wc_{2,3}$. On the other hand, the arguments from 
Section~\ref{sec:p0structure} onwards, are specific to $\Wc = \Wc_{2,3}$.

\subsubsection{The algebra of boundary fields}\label{sec:boundaryfields}

We want to explain more precisely what we mean by a consistent boundary theory. First we recall that 
for each $\Wc$-representation $U$ there exists a conjugate representation $U^*$ that is defined on 
the graded dual of $U$ \cite[Def.\,2.35]{Huang:2007ir}. The operation of taking the conjugate of 
a representation defines a contravariant functor from $\Rep(\Wc)$ to itself which we denote by 
$(-)^*$. 

With this definition we can then explain what we mean by a 
$\Wc$-{\em symmetric algebra of boundary fields} (or {\em boundary algebra} for short): it is a 
collection of data $(\Hc_\text{bnd}, m, \eta, \eps)$ where (see also \cite[Sect.\,1.2]{Gaberdiel:2007jv})
\begin{enumerate}
\item 
$\Hc_\text{bnd}$ is the space of boundary fields and forms a $\Wc$-representation.
\item 
$m \in \Hom(\Hc_\text{bnd} \otimes_f \Hc_\text{bnd} , \Hc_\text{bnd})$ is an intertwiner
describing the associative boundary OPE\footnote{
Here $\otimes_f$ denotes the fusion product. By construction of $\otimes_f$, for three 
$\Wc$-representations $R,S,U$, the vector space of intertwining operators 
$V(-,x) : R \otimes_\Cb S \to U$  (we suppress the formal variable and algebraic completion 
in the notation of the target vector space) is canonically isomorphic to
the space of $\Wc$-intertwiners $\Hom(R \otimes_f S,U)$, see
\cite[Def.\,3.18\,\&\,Prop\,4.15]{Huang:2007ir}.}.
That is, if we denote the corresponding intertwining operator by $V_m : \Hc_\text{bnd} \otimes_\Cb \Hc_\text{bnd} \to \Hc_\text{bnd}$,
for each $\psi \in \Hc_\text{bnd}$ we obtain a linear map $V_m(\psi,x)$ from $\Hc_\text{bnd}$ to (a completion of) $\Hc_\text{bnd}$, and
the OPE $\psi(x) \psi'(0)$ of two boundary fields $\psi, \psi' \in \Hc_\text{bnd}$ is 
written as $V_m(\psi,x)\psi'$.
\item $\eta : \Wc \rightarrow \Hc_\text{bnd}$ is an injective $\Wc$-intertwiner
such that $\eta(\Omega)$ is the identity field in $\Hc_\text{bnd}$. Since $\eta$ is injective,
the entire VOA $\Wc$ is contained in $\Hc_\text{bnd}$, as is required for a $\Wc$-symmetric
boundary condition.
\item 
$\eps : \Hc_\text{bnd} \rightarrow \Wc^*$ is a $\Wc$-intertwiner describing one-point functions of 
boundary fields. It has to give rise to a non-degenerate and symmetric two-point function on the boundary.
\end{enumerate}
We will sometimes just write $\Hc_\text{bnd}$ in place of $(\Hc_\text{bnd}, m, \eta, \eps)$.
Given a boundary algebra, the $n$-point function of boundary fields $\psi_1,\dots,\psi_n \in \Hc_\text{bnd}$ 
on the upper half plane is given by
\be
  \big\langle\, \psi_1(x_1) \cdots \psi_n(x_n) \,\big\rangle_{\!\text{uhp}}
  = \big\langle\, \eps\big( V_m(\psi_1,x_1) \cdots  V_m(\psi_{n-1},x_{n-1}) \psi_n \big)\,,\, \Omega \,\big\rangle \ ,
\ee
where we take $x_n=0$ and $x_1>x_2>\dots>x_n$.

\subsubsection{Bulk-boundary maps} \label{sec:bulk-boundary-map}

For the rational theories and the $\Wc_{1,p}$-models one could define 
the bulk space as a solution to a universal property. In order to formulate this
construction, we need the notion of a bulk-boundary map.

Fix a boundary algebra $\Hc_\text{bnd}$ and
let $\Hc$ be a $\Wc \otimes_\Cb \bar\Wc$-representation. 
By a {\em bulk-boundary map} we mean a linear map
$\beta(-,y) : \Hc \rightarrow \Hc_\text{bnd}$ (here $y>0$, and we again suppress the algebraic completion 
of the target from the notation) which satisfies the following two conditions:
\begin{itemize}
\item[(i)] {\em $\beta$ is compatible with the $\Wc$-symmetry:}
Given $\phi \in \Hc$, the element $\beta(\phi,y)$ is to be thought of as an insertion of the bulk
field $\phi$ on the upper half plane at position $iy$, expanded in terms of boundary fields
at position $0$. The compatibility condition is formulated as follows.
Write $\Hc$ as a quotient $\Hc = \hat\Hc/Q$, where $\hat\Hc = \bigoplus_{a} M_a \otimes_\Cb \bar N_a$ 
for some index set $a$ and $Q$ is a sub-representation of $\hat\Hc$. Denote by $\beta_a$ the composition
\be
\beta_a(-,y) = M_a \otimes_\Cb \bar N_a \hookrightarrow \hat\Hc 
\twoheadrightarrow \Hc \xrightarrow{\beta(-,y)} \Hc_\text{bnd}\ .
\ee
Then we require that
\be
  \beta_a(\phi \otimes_\Cb \phi' , y)
  = V_l(\phi,iy)V_r(\phi',-iy) \Omega \ ,
\labl{eq:bb-W-sym}
for appropriate intertwining operators 
$V_r(-,x) : N_a \times \Wc \rightarrow N_a$ and
$V_l(-,x) : M_a \times N_a \rightarrow \Hc_\text{bnd}$. Here, $V_r(-,x)\Omega$ can be 
taken to be just translation by $x$. In particular, the space of all maps $\beta_a$ is isomorphic 
to $\Hom(M_a \otimes_f N_a,\Hc_\text{bnd})$. We denote by 
$\tilde \beta_a \in \Hom(M_a \otimes_f N_a,\Hc_\text{bnd})$ the image of $\beta_a$ under that
isomorphism.

\item[(ii)] {\em $\beta$ is central:} Since $\beta(\phi,y)$ corresponds to a bulk field $\phi$ inserted at $iy$, an 
$n$-point correlator of boundary fields involving $\beta(\phi,y)$ should be continuous when a boundary 
field is taken past zero. This gives rise to the centrality condition: write $\Hc = \hat\Hc/Q$ as in (i). 
The map $\beta(-,y)$ is called {\em central} iff for all $\psi \in \Hc_\text{bnd}$
\be
  \lim_{x \searrow 0} V_m(\psi,x)\beta(\phi,y)
  =
  \lim_{x \searrow 0} V_m( \beta(\phi,y), x )\psi \ .
\labl{eq:bb-central}
This can also be expressed in terms of the braiding $c_{U,V} : U \otimes_f V \rightarrow V \otimes_f U$ 
in the category of $\Wc$-representations, using the component maps $\tilde \beta_a$ introduced in (i),
\be
  m \circ (\id_{\Hc_\text{bnd}} \otimes_f \tilde \beta_a) \circ ( c_{M_a,\Hc_\text{bnd}} \otimes_f \id_{N_a})
  =
  m \circ (\tilde \beta_a \otimes_f \id_{\Hc_\text{bnd}}) \circ (\id_{M_a} \otimes_f c_{\Hc_\text{bnd},N_a}) \ .
\labl{eq:bb-central-cat}
\end{itemize}

A graphical representation of condition \eqref{eq:bb-central-cat} is\footnote{Here we use the
usual graphical notation of braided tensor categories \cite{Joyal:1993}; 
our diagrams are read from bottom to top.}
\begin{align}
\label{eq:bb-central-cat-pic}
\begin{tikzpicture}[baseline=(eq)]
  \node (Mal) at (0,0) {$M_a$};
  \node (Hbl) at (1,0) {$\Hc_\text{bnd}$};
  \node (Nal) at (2,0) {$N_a$};
  \node (Htl) at (1,4) {$\Hc_\text{bnd}$};
  \node[shape=rectangle,minimum width=0.5cm] (mlinv) at (1,3) {\phantom{$m$}};
  \node[shape=rectangle,minimum width=0.5cm] (blinv) at (1.5,1.5) {\phantom{$\tilde\beta_a$}};
  \draw (Nal) .. controls +(up:8mm) and +(down:8mm) .. (blinv.south east);
  \draw (Hbl) .. controls +(up:8mm) and +(down:8mm) .. (0.5,1.6)
  ..controls +(up:8mm) and +(down:8mm) .. (mlinv.south west);
  \draw[-,line width=4pt,white] (Mal) .. controls +(up:8mm) and +(down:8mm) .. 
  (blinv.south west);
  \draw (Mal) .. controls +(up:8mm) and +(down:8mm) .. (blinv.south west);
  \draw (blinv.north) .. controls +(up:8mm) and +(down:8mm) .. (mlinv.south east);
  \draw (mlinv.north) .. controls +(up:8mm) and +(down:8mm) .. (Htl);
  \node[shape=rectangle,draw,minimum width=1cm] (bl) at (1.5,1.5) {$\tilde\beta_a$};
  \node[shape=rectangle,draw,minimum width=1cm] (ml) at (1,3) {$m$};
  \node (Mar) at (4,0) {$M_a$};
  \node (Hbr) at (5,0) {$\Hc_\text{bnd}$};
  \node (Nar) at (6,0) {$N_a$};
  \node (Htr) at (5,4) {$\Hc_\text{bnd}$};
  \node[shape=rectangle,minimum width=0.5cm] (mrinv) at (5,3) {\phantom{$m$}};
  \node[shape=rectangle,minimum width=0.5cm] (brinv) at (4.5,1.5) {\phantom{$\tilde\beta_a$}};
  \draw (Nar) .. controls +(up:8mm) and +(down:8mm) .. (brinv.south east);
  \draw (Mar) .. controls +(up:8mm) and +(down:8mm) .. (brinv.south west);
  \draw[-,line width=4pt,white] (Hbr) .. controls +(up:8mm) and +(down:8mm) .. (5.5,1.6)
  ..controls +(up:8mm) and +(down:8mm) .. (mrinv.south east);
  \draw (Hbr) .. controls +(up:8mm) and +(down:8mm) .. (5.5,1.6)
  ..controls +(up:8mm) and +(down:8mm) .. (mrinv.south east);
  \draw (brinv.north) .. controls +(up:8mm) and +(down:8mm) .. (mrinv.south west);
  \draw (mrinv.north) .. controls +(up:8mm) and +(down:8mm) .. (Htr);
  \node[shape=rectangle,draw,minimum width=1cm] (br) at (4.5,1.5) {$\tilde\beta_a$};
  \node[shape=rectangle,draw,minimum width=1cm] (mr) at (5,3) {$m$};
  \node (eq) at (3,1.5) {$=$};
\end{tikzpicture}
\end{align}
We will sometimes also refer to $\tilde \beta_a$ as a `bulk-boundary map'. In the non-logarithmic 
rational case, bulk-boundary maps where first systematically studied in \cite{Cardy:1991tv,Lewellen:1991tb}; 
the centrality condition amounts to \cite[Fig.\,9d]{Lewellen:1991tb}. In the framework of vertex 
operator algebras and intertwining operators, the bulk-boundary map and its categorical 
expression \eqref{eq:bb-central-cat} have appeared in \cite[Sect.\,1.8\,\&\,3.2]{Kong2006b}.

\subsubsection{Characterising the bulk space} \label{sec:maximality-condition}

After these definitions we can now explain how the bulk space of states could be 
characterised for the rational theories  \cite{Runkel:1998pm,Felder:1999mq,tft1,unique}, 
and for the $\Wc_{1,p}$-models in
the case of the `charge-conjugation modular invariant'  \cite{Gaberdiel:2006pp,Gaberdiel:2007jv}, 
the logarithmic analogue of the Cardy case

Let us fix a boundary algebra $\Hc_\text{bnd}$. We consider pairs $(\Hc,\beta)$, where $\Hc$ is a 
$\Wc \otimes_\Cb \bar\Wc$-representation and $\beta : \Hc \rightarrow \Hc_\text{bnd}$ is a 
bulk-boundary map. By an arrow $f : (\Hc,\beta) \rightarrow (\Hc',\beta')$ between two such pairs we 
mean a $\Wc \otimes_\Cb \bar\Wc$-intertwiner $f : \Hc \rightarrow \Hc'$ such that the diagram
\be
\raisebox{-2em}{
\begin{xy} 
(0,15)*+{\Hc}="a"; (25,15)*+{\Hc'}="b";% 
(12,0)*+{\Hc_\text{bnd}}="c";% 
{\ar@{->} "a";"b"}?*!/_2mm/{f}; 
{\ar@{->} "a";"c"}?*!/_2mm/{\beta}; 
{\ar@{->} "b";"c"}?*!/^2mm/{\beta'}; 
\end{xy}}
\labl{eq:bulk-space-terminal-arrows}
commutes. The {\em space of bulk fields} $\Hc_\text{bulk}$ and the {\em bulk-boundary OPE} 
$\beta_{\text{bb}}(-,y)$ for the boundary algebra $\Hc_{\text{bnd}}$ are then defined to be a terminal 
object in the category  formed by these pairs and arrows. In other words, the pair 
$(\Hc_\text{bulk},\beta_{\text{bb}})$ has the property 
that for all other pairs $(\Hc',\beta')$ there exists a unique $\Wc \otimes_\Cb \bar\Wc$-intertwiner 
$f : \Hc' \rightarrow \Hc_\text{bulk}$ such that we have
$\beta' = \beta_{\text{bb}} \circ f : \Hc' \rightarrow \Hc_\text{bnd}$.

A terminal object need not exist, but if it does, it is unique up to unique isomorphism. 
Indeed, for another terminal object $(\Hc_\text{bulk}',\beta'_{\text{bb}})$, there are unique intertwiners 
$\Hc_\text{bulk} \rightarrow \Hc_\text{bulk}'$ and $\Hc_\text{bulk}' \rightarrow \Hc_\text{bulk}$ 
which -- again by uniqueness -- have to compose to the identity. 

The bulk-boundary OPE $\beta_{\text{bb}}(-,y)$ is necessarily injective. To see this, 
note that if $f : M \rightarrow \Hc_\text{bulk}$ is any $\Wc \otimes_\Cb \bar\Wc$-intertwiner 
such that $\beta_{\text{bb}} \circ f = 0$, then $f$ is an arrow from 
$(M,0)$ to $(\Hc_\text{bulk},\beta)$. As $0$ is also an arrow between these pairs, 
by uniqueness we have $f=0$. This shows that the present formulation is equivalent to 
the one given in \cite[Sect.\,3.1]{Gaberdiel:2007jv}.

For the actual computations the simplest case (and in many situations the only
tractable case) is to take as the boundary condition the `identity' boundary condition,
{\it i.e.}\ the boundary condition whose space of states just consists of the $\Wc$-algebra itself.
Such a boundary condition exists for the $\Wc_{1,p}$-models \cite{Gaberdiel:2006pp,Gaberdiel:2007jv}. 
In that case, the objects of the category are $(\Hc,\beta$), where $\beta$ is a bulk-boundary map 
$\Hc\rightarrow \Wc$. Since the image lies in $\Wc$, the centrality condition (ii) in the definition of the 
bulk-boundary map is trivial, and the problem simplifies considerably. This approach was successfully 
applied to the $\Wc_{1,p}$-models in \cite{Gaberdiel:2007jv}. 

\subsection[The proposal for the bulk space of the $\Wc_{2,3}$-model]{The proposal for the bulk space of the $\boldsymbol{\Wc_{2,3}}$-model}\label{proposal}

Next we want to repeat a similar analysis for the case of the ${\cal W}_{2,3}$-model. 
Unfortunately, as was already mentioned earlier, the ${\cal W}_{2,3}$-model does not have
an `identity brane' whose space of states only consists of $\Wc_{2,3}$ itself. Thus we
cannot directly apply the same method that worked for the $\Wc_{1,p}$-models.
One could obviously try to apply the general formalism to a different boundary condition,
but then the centrality condition (ii) plays an important role, and we have not been able
to characterise the most general bulk-boundary maps in any useful way.

Instead we shall proceed slightly differently. By assumption, any $\Wc$-symmetric
boundary condition has $\Wc$ as a subalgebra in its space of boundary fields $\Hc_{\text{bnd}}$.
Since $\Hc_{\text{bnd}}$ is self-conjugate, it follows that $\Wc^\ast$ can be obtained 
as a quotient of $\Hc_{\text{bnd}}$. It seems reasonable to assume that there is at least 
one boundary condition, for which
the disc correlator of a bulk and a boundary field is already non-degenerate in the bulk entry 
if we take the boundary insertions only from $\Wc$. (This is the natural analogue of the `identity 
brane case'.) The two-point functions on the boundary define a natural pairing between $\Wc$ and 
$\Wc^\ast$. The above supposition therefore implies that the bulk-boundary map is still injective after 
projecting from $\Hc_{\text{bnd}}$ to $\Wc^\ast$. Obviously, for the 
$\Wc_{1,p}$-models $\Wc^\ast\cong \Wc$, and the distinction between $\Wc$ and $\Wc^\ast$ is 
immaterial. 

We thus propose that the natural analogue of the `identity brane' situation is 
to consider bulk-boundary maps $\beta:\Hc\rightarrow \Wc^*$ whose image is $\Wc^*$, rather 
than $\Wc$, and that the terminal
object within this category is our desired bulk space 
$(\Hc_{\text{bulk}},\beta_{\text{bb}})$. As we shall explain in Section~\ref{properties}, 
it satisfies a number of fairly non-trivial consistency conditions.

\subsubsection{Characterising the terminal object}\label{sec:maximal-nondeg}

As advertised above, let us consider the category of pairs $(\Hc,\beta)$, where $\Hc$ is a 
$\Wc \otimes_\Cb \bar\Wc$-representation, and $\beta: \Hc \rightarrow \Wc^*$ is compatible with 
the $\Wc$-symmetry, that is, it satisfies condition (i) of a bulk-boundary map as stated in 
Section~\ref{sec:bulk-boundary-map}. The arrows are intertwiners 
$f : \Hc \rightarrow \Hc'$ which make the diagram
\be
\raisebox{-2em}{
\begin{xy} 
(0,15)*+{\Hc}="a"; (25,15)*+{\Hc'}="b";% 
(12,0)*+{\Wc^*}="c";% 
{\ar@{->} "a";"b"}?*!/_2mm/{f}; 
{\ar@{->} "a";"c"}?*!/_2mm/{\beta}; 
{\ar@{->} "b";"c"}?*!/^2mm/{\beta'}; 
\end{xy}}
\labl{eq:bulk-space-terminal-W*}
commute.
We want to find a terminal object $(\Hc_{\text{bulk}},\beta_{\text{bb}})$ in this category.
As in Section~\ref{sec:bulk-boundary-map}, the terminal object need not exist, but if it does 
it is unique and $\beta_{\text{bb}}$ is injective. 
\medskip

It was argued in \cite[Sect.\,5.1]{Gaberdiel:2009ug} that 
there is an isomorphism 
$\Hom(U,V^*) \rightarrow \Hom(U \otimes_f V, \Wc^*)$.
The image of the 
natural isomorphism $U \rightarrow U^{**}$ under this map provides a non-degenerate pairing 
\be
  ev_{U} : U \otimes_f U^* \longrightarrow \Wc^*
\ee
for all representations $U$. Let $V_{ev_U}(-,x) : U \times U^* \rightarrow \Wc^*$ be the 
corresponding intertwining operator. Non-degeneracy means that for all $u \in U$ there is a 
$u' \in U^*$ such that $V_{ev_U}(u,x)u' \neq 0$, and vice versa.

Suppose that $\Wc$ has a finite number of irreducible representations and that each of 
these has a projective cover $\Pc(a)$. Let us denote by $\hat\Hc$ the space
$\hat\Hc = \bigoplus_a \Pc(a) \otimes_\Cb \bar \Pc(a)^*$,
where the direct sum is over all irreducible representations,
and define the map $\hat\beta = \bigoplus_a V_{ev_{\Pc(a)}} : \hat\Hc \rightarrow \Wc^*$. 
Let $\Hc_{\text{bulk}} = \hat\Hc/ \ker(\hat\beta)$ and let 
$\beta_{\text{bb}} : \Hc_{\text{bulk}} \rightarrow \Wc^*$ 
be the map induced by $\hat\beta$ on the quotient. We claim
\begin{quote}
   $(\Hc_{\text{bulk}},\beta_{\text{bb}})$ is a terminal object  in the above category.
\end{quote}
The proof proceeds analogously to the one in \cite[Sect.\,3.5]{Gaberdiel:2007jv}. 
Every representation $\Hc'$ can be written as
\be\label{ansat}
 \Hc' = \hat  \Hc' /Q'
 \qquad ; \quad \hat \Hc' = \bigoplus_{a,b} N_{ab} \, \Pc(a) \otimes_\Cb \bar \Pc(b)  \ ,
\ee
where $N_{ab}$ are multiplicities and $Q'$ is a sub-representation of the direct sum. 
Suppose that $(\Hc',\beta')$ is a pair in our category, where we write 
$\Hc'$ as in (\ref{ansat}). 
Denote the corresponding projection by $\pi' : \hat \Hc' \rightarrow \Hc'$. If we define 
$\hat \beta' = \beta' \circ \pi'$, then $\pi'$ describes an arrow from 
$(\hat \Hc', \hat \beta')$ to $(\Hc',\beta')$. 

Our first aim is to construct an arrow 
$(\hat \Hc', \hat \beta')\rightarrow (\Hc_{\text{bulk}},\beta_{\text{bb}})$. 
To this end we denote the restriction of $\hat \beta'$ to the $k$'th summand 
$\Pc(a) \otimes_\Cb \Pc(b) \hookrightarrow \hat \Hc'$ by $\hat \beta_k'$. By definition 
of $ev_{\Pc(a)}$, there exist intertwiners $f_{a,b,k} : \Pc(b) \rightarrow \Pc(a)^*$ such that
\be
  \hat\beta_k' = V_{ev_{\Pc(a)}} \circ (\id_{\Pc(a)} \otimes_\Cb f_{a,b,k}) \ .
\ee
Then \smash{$f = \bigoplus_a \sum_{b,k} \id_{\Pc(a)} \otimes_\Cb f_{a,b,k} : \hat \Hc' \rightarrow \hat\Hc$}
provides an arrow $f : (\hat \Hc', \hat \beta') \rightarrow (\hat\Hc,\hat\beta)$. 
We compose this arrow with the projection
$(\hat\Hc,\hat\beta) \rightarrow (\Hc_{\text{bulk}},\beta_{\text{bb}})$ to give the desired arrow
$f' : (\hat \Hc', \hat \beta') \rightarrow (\Hc_{\text{bulk}},\beta_{\text{bb}})$. 

Next we want to show that this gives rise to an arrow on $(\Hc',\beta')$. First we note
that $Q'$ is equal to $\ker(\pi')$, and thus also $\hat \beta'$ vanishes on $Q'$. 
But $\hat \beta' = \beta_{\text{bb}} \circ f'$ (as $f' :\hat \Hc' \rightarrow \Hc_{\text{bulk}}$ is an arrow in our
category), and since $\beta_{\text{bb}}$ is injective, $f'$ vanishes on $Q'$. Thus there is a 
well-defined map $g : \Hc' \rightarrow\Hc_{\text{bulk}}$ on the quotient, which provides the sought-after 
arrow $(\Hc',\beta') \rightarrow (\Hc_{\text{bulk}},\beta_{\text{bb}})$.
Since $\beta_{\text{bb}}$ is injective, this arrow is unique.
\medskip

It remains to find a practical way to compute the kernel $\Nc = \ker \hat\beta$ of 
$\hat\beta : \hat\Hc \rightarrow \Wc^*$ to obtain $\Hc_{\text{bulk}}$ as a quotient. This can be done by 
the construction in  \cite[Sect.\,3.3]{Gaberdiel:2007jv}, which we briefly review. There we 
described $\Nc$ as the image of all possible maps of projectives 
$\Pc(i) \otimes_\Cb \bar \Pc(j)$, for all $i,j$, into $\Nc$. For fixed $i,j$ such a map 
takes the form $h : \Pc(i) \otimes_\Cb \bar \Pc(j) \rightarrow \hat\Hc$, 
\smash{$h = \bigoplus_a f_a \otimes_\Cb \bar g_a^*$}, where 
$f_a : \Pc(i) \rightarrow \Pc(a)$ and 
$g_a : \Pc(a) \rightarrow \Pc(j)^*$. Since
\be\label{2.17}
  \hat\beta \circ h = \bigoplus_a V_{ev_{\Pc(a)}} \circ (f_a \otimes_\Cb \bar g_a^*) 
   = \sum_a V_{ev_{\Pc(i)}} \circ \big(\id_{\Pc(i)}  \otimes_\Cb (\overline{g_a \circ f_a})^*\big)\ ,
\ee
$\hat\beta \circ h$ is zero if and only if  \smash{$\sum_a g_a \circ f_a = 0$}. Since 
$\Nc$ itself is a quotient of a direct sum of 
$\Pc(i) \otimes_\Cb \Pc(j)$ with some multiplicities, the above construction 
produces the entire kernel.

If $\Pc(j)^* \cong \Pc(j)$, a generic example of such maps $f_a$ and $g_a$ is found as follows. 
Let $e : \Pc(i) \rightarrow \Pc(j)$ be any morphism. Then
\be\label{eq:seesaw}
  \big(\Pc(i) \xrightarrow{e} \Pc(j) \xrightarrow{\id} \Pc(j) \cong \Pc(j)^*\big)
  ~+~
  \big(\Pc(i) \xrightarrow{\id} \Pc(i) \xrightarrow{-e} \Pc(j) \cong \Pc(j)^*\big)
  = 0 \ .
\ee
Thus, the image of $\Pc(i) \otimes_\Cb \Pc(j)^*$ under 
$e \otimes_\Cb \id - \id \otimes_\Cb \bar e^*$ lies in $\Nc$.

\subsubsection{Projective covers}\label{sec:p0structure}

We now restrict our attention again to the case $\Wc = \Wc_{2,3}$.  
As we have just seen, the construction of $\Hc_\text{bulk}$ relies
on the existence of projective covers for all irreducible representations. 
For $\Wc(h)$ with $h$  taking all values in \eqref{eq:W23-irreps} except $h=0$, these were given
in \cite{Gaberdiel:2009ug} to be \eqref{eq:proj-cover-list}. The argument in \cite{Gaberdiel:2009ug} 
was based on the assumption that  the projective cover of the irreducible representation 
$\Wc(\frac{-1}{24})$ is $\Wc(\frac{-1}{24})$ itself.
Hence for every representation $U$ that has a dual, also $\Wc(\frac{-1}{24}) \otimes_f U$ is projective 
(see Appendix \ref{app:P0xP0} for details). By the fusion rules in \cite[App.\,A.4]{Gaberdiel:2009ug}, this 
implies that the indecomposable representations listed in \eqref{eq:proj-cover-list} are necessarily 
projective. From their composition series as given in Appendix~\ref{sec:embedding} one sees that 
they provide the projective covers of all $\Wc(h)$ except for $h=0$. 

Regarding $\Pc(0)$, one first notes that
the only representations $R$ in \eqref{eq:W23-irreps} and \eqref{eq:indec-W-rep} which 
allow for a surjection $R \rightarrow \Wc(0)$ are $\Wc(0)$, $\Wc$ and $\Qc$. It is verified in 
\cite[App.\,A.3]{Gaberdiel:2009ug} that these are not projective. On the other hand, in 
\cite[Thm.\,3.24]{Huang:2007mj} it is stated that under certain conditions on $\Wc$, 
including $C_1$-cofiniteness, every irreducible $\Wc$-representation has a projective cover. 
We do not know if $\Wc_{2,3}$ satisfies these conditions, but it seems plausible to us that it 
does (see also \cite{AM1}). We will hence assume that $\Wc(0)$ does have a projective cover. Using these
assumptions as well as the result of the recent calculation of Zhu's algebra \cite{AM}
it follows that $\Pc(0)$ has the composition series
\begin{equation}\label{eq:P0st}
 \Wc(0) \rightarrow \Wc(1)\oplus\Wc(2) \rightarrow \Wc(0)\oplus2\Wc(5)\oplus2\Wc(7) 
  \rightarrow \Wc(1)\oplus\Wc(2)
  \rightarrow \Wc(0) \ .
\end{equation}
The derivation is given in Appendix~\ref{app:P0-cover}, and the corresponding 
embedding diagram can be found in Appendix~\ref{sec:embedding}.

\subsubsection{Computing the Kernel}\label{sec:constr_kernel}

Next we want to work out $(\Hc_{\text{bulk}},\beta_{\text{bb}})$ as defined in 
the previous subsection explicitly. Recall from section~\ref{sec:maximal-nondeg} 
that $\hat\Hc$ is defined as
$\hat\Hc=\bigoplus_a \Pc(a)\otimes_{\mathbb{C}}\bar{\Pc}(a)^\ast$.
We want to determine the quotient space $\Hc_{\text{bulk}} = \hat\Hc / \text{ker}(\hat\beta)$, where 
\begin{align}
\hat\beta=\bigoplus_a V_{ev_{\Pc(a)}}:\hat\Hc\rightarrow \Wc^\ast\ .
\end{align}
First we will show that one may choose representatives of vectors in
$\Hc_{\text{bulk}} $ in 
\begin{equation}\label{H(0)}
\Hc_{\text{bulk}} ^{(0)} =  \bigoplus_a \Wc(a)\otimes_{\mathbb{C}}\bar{\Pc}(a)^\ast \  ,
\end{equation}
where $\Wc(a)$ is the top factor in the composition series of $\Pc(a)$. 
(This is very similar to what happened in  \cite{Quella:2007hr} and \cite{Gaberdiel:2007jv}.)  
Consider some element 
$\Wc(i)\otimes_{\mathbb C} \bar\Wc(j) \subset \Pc(a)\otimes_{\mathbb C} \bar{\Pc}(a)^\ast$ in the composition
series of $\hat\Hc$. Since $\Pc(i)$ is the projective cover of $\Wc(i)$, it follows that there
exists an intertwiner $e: \Pc(i)\rightarrow  \Pc(a)$ such that the top factor $\Wc(i)$ in the composition 
series of $\Pc(i)$ is mapped to the given $\Wc(i)\subset \Pc(a)$. By \eqref{eq:seesaw} it then 
follows that the image of 
\begin{equation}\label{schaukel}
e\otimes_{\mathbb{C}}{\rm id}_{\bar{\Pc}(a)}   - {\rm id}_{\Pc(i)}\otimes_{\mathbb{C}}\bar e^\ast
\end{equation}
lies in the kernel of $\hat\beta$. Thus in $\Hc_{\text{bulk}}$,  
any vector in 
$\Wc(i)\otimes_{\mathbb C} \bar\Wc(j) \subset \Pc(a)\otimes_{\mathbb C} \bar{\Pc}(a)^\ast$,
where $\Wc(i)$ is not the top component of $\Pc(a)$,
lies in the same equivalence class as a vector in 
$\Wc(i)\otimes_{\mathbb C} \bar e^*(\bar\Wc(j)) \subset \Pc(i)\otimes_{\mathbb C} \bar{\Pc}(i)^\ast$,
where $\Wc(i)$ is the top component of $\Pc(i)$.
Thus we have shown that $\Hc_{\text{bulk}} $ is at most as big as $\Hc_{\text{bulk}} ^{(0)}$. 

\smallskip

We can collect the terms of $\hat\Hc$ into sectors
\begin{align}\label{hatHcdec}
\hat\Hc
= \hat\Hc_{0} \oplus \hat\Hc_{\frac{1}{8}} \oplus \hat\Hc_{\frac{5}{8}} \oplus
\hat\Hc_{\frac{1}{3}} \oplus \hat\Hc_{\frac{35}{24}} \oplus  \hat\Hc_{\frac{-1}{24}}  \  ,
\end{align}
with
\begin{align}\label{eq:seesawmaps}
\hat\Hc_{\frac{-1}{24}}&=\Wc(\tfrac{-1}{24})\otimes_{\mathbb{C}}\bar{\Wc}(\tfrac{-1}{24})^\ast&
\hat\Hc_{\frac{35}{24}}&=\Wc(\tfrac{35}{24})\otimes_{\mathbb{C}}\bar{\Wc}(\tfrac{35}{24})^\ast \nonumber \\[6pt]
\hat\Hc_{\frac13}&=\bigoplus_{a=\frac13,\frac{10}{3}}\Pc(a)\otimes_{\mathbb{C}}\bar{\Pc}(a)^\ast&
\hat\Hc_{\frac58}&=\bigoplus_{a=\frac58,\frac{21}{8}}\Pc(a)\otimes_{\mathbb{C}}\bar{\Pc}(a)^\ast \\
\hat\Hc_{\frac18}&=\bigoplus_{a=\frac18,\frac{33}{8}}\Pc(a)\otimes_{\mathbb{C}}\bar{\Pc}(a)^\ast&
\hat\Hc_{0}&=\bigoplus_{a=0,1,2,5,7}\Pc(a)\otimes_{\mathbb{C}}\bar{\Pc}(a)^\ast \ .\nonumber
\end{align}
It is easy to see that the kernel of $\hat\beta$ does not mix the different sectors in 
(\ref{hatHcdec}), and we may therefore consider each of them in turn. 
It follows from the 
arguments of \cite[Sect.\,4.3\,\&\,App.\,D]{Gaberdiel:2007jv} that in all sectors, except possibly for $\hat\Hc_{0}$, 
all relations of the quotient space $\Hc_{\text{bulk}}  = \hat\Hc / \text{ker}(\hat\beta)$ are taken into account
by \eqref{schaukel}
and that $\Hc_{\text{bulk}}$ is isomorphic (as an 
$(L_0^\text{diag},\bar L_0^\text{diag})$-graded vector space, see \eqref{eq:Hbulk-decomp-WxP}) to 
$\Hc_{\text{bulk}} ^{(0)}$. We believe that this will also be the case for $\hat\Hc_{0}$, 
but we have no proof.
We therefore conjecture that $\Hc_{\text{bulk}} ^{(0)}$ is not just an upper
bound for $\Hc_{\text{bulk}} $, but actually isomorphic to it.
This gives then the explicit 
description of $\Hc_{\text{bulk}}$ that was advertised before. (The bulk-boundary map
$\beta_{\text{bb}}$ is the one induced from $\hat\beta$ under the quotient map 
$\hat\Hc \twoheadrightarrow \Hc_{\text{bulk}}$.)

\subsection{Properties of the resulting bulk space}\label{properties}

Next we want to describe the resulting bulk space $\Hc_{\text{bulk}}$ in some more detail. 
First we want to explain how $\Wc\otimes_{\mathbb{C}}\bar{\Wc}$ acts on it. Given that 
we have a description of $\Hc_{\text{bulk}}$ as a quotient  by $\Nc$ of $\hat\Hc$, this can now be 
easily deduced. Since $\hat\Hc$ has a decomposition as in (\ref{hatHcdec}), we get a similar 
decomposition for $\Hc_{\text{bulk}}$, which we write as 
\begin{align}\label{Hcdec}
\Hc_{\text{bulk}} = \hat\Hc / \Nc 
= \Hc_{0} \oplus \Hc_{\frac{1}{8}} \oplus \Hc_{\frac{5}{8}} \oplus
\Hc_{\frac{1}{3}} \oplus \Hc_{\frac{-1}{24}} \oplus  \Hc_{\frac{35}{24}}  \  .
\end{align}
In order to describe the resulting structure we analyse, sector by sector, their composition 
series, following the method outlined in Section~\ref{sec:intro-bulkbnd}. This is to say, we 
identify first the largest direct sum of irreducible subrepresentations; then we quotient by 
these and find the largest direct sum of irreducible subrepresentations in the quotient, {\it etc}. 
The situation is obviously simplest for $\Hc_{- {1}/{24}}$ and $\Hc_{{35}/{24}}$ since 
they are already, by themselves, irreducible. As a consequence, the quotient is trivial
for these sectors, and the resulting composition series just consists of one term.

\subsubsection{The edge of the Kac table}\label{sec:compedge}

The situation is more interesting for those sectors that come from the `edge of the
Kac table', {\it i.e.}\ the sectors $\Hc_{1/8}$,   $\Hc_{1/3}$ and
$\Hc_{5/8}$. In the following we shall only show the calculation for the 
first case ($h=\tfrac{1}{8}$) --- the other cases then follow upon 
replacing $\{\tfrac18,\tfrac{33}{8}\}$ with $\{\tfrac58,\tfrac{21}{8}\}$ or 
$\{\tfrac13,\tfrac{10}{3}\}$, respectively.

As we have explained above, the representatives are described by $\Hc^{(0)}_{1/8}$,
see eq.\ (\ref{H(0)}). It is then not difficult to show that the 
maximal fully reducible subrepresentation of $M_1(\frac18)=\Hc_{{1}/{8}}$ is given by 
\begin{equation}
R_1(\tfrac{1}{8}) =\bigoplus_{i=\frac{1}{8},\frac{33}{8}}\Wc(i)\otimes_{\mathbb{C}}\bar{\Wc}(i) \ ,
\end{equation}
where $\bar{\Wc}(i)$ comes from the bottom entry in $\bar{\Pc}(i)$. 

Next, we consider the quotient $M_2(\frac18)=M_1(\frac18) / R_1(\frac18)$, and repeat the
analysis. The maximal fully reducible subrepresentation of $M_2(\frac18)$ is 
\begin{align}
R_2(\tfrac{1}{8})=2\, \Wc(\tfrac18)\otimes_{\mathbb{C}}\bar{\Wc}(\tfrac{33}{8})\; \oplus \; 
2\, \Wc(\tfrac{33}{8})\otimes_{\mathbb{C}}\bar{\Wc}(\tfrac{1}{8}) \ .
\end{align}
These arise from the middle lines in the embedding diagrams of the corresponding
projective covers $\bar{\Pc}(i)$. Finally, the maximal fully reducible subrepresentation 
of $M_3(\frac18)=M_2(\frac18) /R_2(\frac18)$ equals
\begin{align}
R_3(\tfrac{1}{8})=\bigoplus_{i=\frac{1}{8},\frac{33}{8}}\Wc(i)\otimes_{\mathbb{C}}\bar{\Wc}(i)\ ,
\end{align}
which comes from the top entries of $\bar{\Pc}(i)$. Thus we obtain precisely the 
composition series given in \eqref{eq:1/8_compseries}.

\subsubsection{The interior of the Kac table}

This leaves us with the sector $\Hc_0$, for which the analysis is more complicated.
Proceeding as before we find that the maximal fully reducible subrepresentation 
of $M_1(0)=\Hc_0$ is given by
\begin{align}\label{eq:compseries_first_factor}
R_1(0)&=\bigoplus_{i=0,1,2,5,7}\Wc(i)\otimes_{\mathbb{C}}\bar\Wc(i)\ .
\end{align}
As before, these states arise from the bottom component in $\bar{\Pc}(i)^\ast$.

Next, we find that the maximal fully reducible subrepresentation of $M_2(0)=M_1(0)/R_1(0)$ is
\begin{align}
R_2(0)&=\bigl(\Wc(1)\oplus\Wc(2)\bigr)\otimes_{\mathbb{C}}
    \bigl(2\, \bar{\Wc}(5)\oplus2\, \bar{\Wc}(7)\oplus\bar{\Wc}(0)\bigr) \nonumber \\
   &\quad \oplus \bigl(\Wc(5)\oplus\Wc(7)\bigr)\otimes_{\mathbb{C}}
     \bigl(2\, \bar{\Wc}(2)\oplus2\, \bar{\Wc}(1)\bigr)\\
   &\quad \oplus\Wc(0)\otimes_{\mathbb{C}}\bigl(\bar{\Wc}(1)\oplus \bar{\Wc}(2)\bigr)\ , \nonumber
\end{align}
while that of $M_3(0)=M_2(0)/R_2(0)$ is
\begin{align}
R_3(0)&=\Wc(1)\otimes_{\mathbb{C}}\bigl(2\, \bar{\Wc}(1)\oplus4\, \bar{\Wc}(2)\bigr) \nonumber\\
     &\quad \oplus\Wc(2)\otimes_{\mathbb{C}}\bigl(2\, \bar{\Wc}(2)\oplus\, 4 \bar{\Wc}(1)\bigr)\nonumber\\
      &\quad \oplus\Wc(0)\otimes_{\mathbb{C}}
          \bigl(\bar{\Wc}(0)\oplus2\, \bar{\Wc}(5)\oplus2\, \bar{\Wc}(7) \bigr)\\
      &\quad \oplus\Wc(5)\otimes_{\mathbb{C}}
          \bigl(2\, \bar{\Wc}(0)\oplus 2\, \bar{\Wc}(5)\oplus4\, \bar{\Wc}(7)\bigr)\nonumber\\
      &\quad \oplus\Wc(7)\otimes_{\mathbb{C}}
          \bigl(2\, \bar{\Wc}(0)\oplus2\, \bar{\Wc}(7)\oplus4\, \bar{\Wc}(5)\bigr)\nonumber \ . 
\end{align}
Similarly, the maximal fully reducible subrepresentation of $M_4(0)=M_3(0)/R_3(0)$ is
$R_4(0) = R_2(0)$, while that of $M_5(0)=M_4(0)/R_4(0)$ equals $R_5(0) = R_1(0)$.
This agrees precisely with what we claimed in \eqref{eq:0_compseries}.

\subsubsection{Consistency conditions}\label{consistency}

\noindent Finally, let us comment on the consistency conditions our answer satisfies.
None of the properties discussed below is built into our ansatz from the start, and they give strong 
support to our proposed bulk space.

\smallskip

\noindent {\em Modular invariance:}
Given the composition series \eqref{eq:irred-sector-comp-series}, \eqref{eq:1/8_compseries} 
and \eqref{eq:0_compseries}, it is straightforward to work out the 
partition function of $\Hc_{\text{bulk}}$, resulting in
\begin{equation}
\begin{array}{ll}
Z(q) & = \sum_i \chi_{\Wc(i)}(q) \cdot \chi_{\Pc(i)}(\bar q)
\\[.3em]\displaystyle  
 & = (q \bar q)^{-1/24} + 3 + 2(q \bar q)^{1/8} + 2(q \bar q)^{1/3} 
 + (q{+}\bar q)\cdot(q \bar q)^{-1/24} + 2(q{+}\bar q) + \cdots \ ,
\end{array}\label{eq:ansatz} 
\end{equation}
where the sum runs over all 13 irreducibles (including $\Wc(0)$).
This partition function was already proposed in \cite[Sect.\,4]{Gaberdiel:2009ug}, at least
up to the multiplicity of $(q\bar q)^0$, which was left as an arbitrary positive integer there. 
As was mentioned in \cite{Gaberdiel:2009ug}, $Z(q)$ is modular invariant (see also 
\cite{Feigin:2006iv}).

\medskip\noindent {\em Self-conjugacy:}
The space of bulk states $\Hc_{\text{bulk}}$ is actually isomorphic to
its conjugate representation, $\Hc_{\text{bulk}}^\ast$. This is necessary in order for the 
bulk theory to have a non-degenerate (bulk) two-point function. 

\medskip\noindent {\em Stress tensor:}
Curiously, the composition series \eqref{eq:0_compseries} of $\Hc_{\text{bulk}}$ shows that
$\Hc_{\text{bulk}}$ does {\em not} contain $\Wc \otimes_\Cb \bar\Wc$ as a sub-representation.
To see this we note that the composition series of $\Wc \otimes_\Cb \bar\Wc$ is
\be\label{eq:WxWbar-compseries}
\begin{tabular}{c|c|c|}
& $0$ & $2$ \\ 
\hline
$0$  & 1 & 0 \\
\hline
$2$ & 0 & 0 \\
\hline
\end{tabular}
\quad \longrightarrow \quad
\begin{tabular}{c|c|c|}
& $0$ & $2$ \\ 
\hline
$0$  & 0 & 1 \\
\hline
$2$ & 1 & 0 \\
\hline
\end{tabular}
\quad \longrightarrow \quad
\begin{tabular}{c|c|c|}
& $0$ & $2$ \\ 
\hline
$0$  & 0 & 0 \\
\hline
$2$ & 0 & 1 \\
\hline
\end{tabular} \ .
\ee
In order to embed $\Wc \otimes_\Cb \bar\Wc$ into $\Hc_{\text{bulk}}$, we need to map the 
level $0$ subspace $\Wc(0)\otimes_\Cb \bar\Wc(0)$  to the summand $\Wc(0)\otimes_\Cb \bar\Wc(0)$
of $\Hc_{\text{bulk}}$ at level $2$. However, the composition series \eqref{eq:P0-diagram} of 
$\Pc(0)$ shows that from there 
one can reach {\it e.g.}\ $\Wc(0)\otimes_\Cb \bar\Wc(1)$,
which is not contained in $\Wc \otimes_\Cb \bar\Wc$.
Nonetheless, $\Hc_{\text{bulk}}$ does still
contain a holomorphic field of weight $(2,0)$ and an anti-holomorphic field of weight $(0,2)$,
 which form the candidate stress tensor.
For example, a representative of the equivalence class of the state of  
weight $(2,0)$ at level $3$
of $\Hc_{\text{bulk}}$ (see \eqref{eq:0_compseries}) is
$\Wc(2) \otimes_\Cb \bar\Wc(0) \subset \Pc(2) \otimes_\Cb \bar\Pc(2)^*$,  
where $\bar\Wc(0)$ sits at level $3$.
Acting with $\bar{L}_{-1}$ gives a state of weight $(2,1)$, where the  
factor with weight $1$ has to be at level
$4$ of $\Pc(2)$, which, however, does not contain such a state. Thus
the state of weight $(2,0)$ really defines a holomorphic field.
\smallskip

As an aside, we note that this argument also suggests that the the state  
of weight $(1,0)$ at level
$3$ of $\Hc_{\text{bulk}}$ is {\em not} holomorphic, as acting with $ 
\bar{L}_{-1}$
on its representative
$\Wc(1) \otimes_\Cb \bar\Wc(0) \subset \Pc(1) \otimes_\Cb \bar\Pc(1)^*$,  
where $\bar\Wc(0)$
sits at level $3$, gives a state of weight $(1,1)$ in
$\Wc(1) \otimes_\Cb \bar\Wc(1) \subset \Pc(1) \otimes_\Cb \bar\Pc(1)^*$,
where $\bar\Wc(1)$ sits at level $4$, and it seems highly plausible  
that this state is nonzero.

\medskip\noindent {\em Boundary states:}
As a final consistency check, we can study the boundary states one can construct within
$\Hc_{\text{bulk}}$. In particular, as we shall show in the following section, we are
able to reproduce the boundary theory of \cite{Gaberdiel:2009ug} in this manner.

\section{The boundary state analysis}\label{sec:bnd}

In this section we want to study how the boundary conditions that were proposed in 
\cite{Gaberdiel:2009ug} fit into the present analysis. In particular, we want to show how
the corresponding boundary states can be constructed in our proposed bulk theory 
$\Hc_{\text{bulk}}$.

\subsection{Reconstructing the boundary states}\label{sec:boundary}

Let us begin by identifying the Ishibashi states and their overlaps that 
reproduce the annulus partition functions of  \cite{Gaberdiel:2009ug}; this first part
of the analysis does not require any detailed knowledge about the bulk space $\Hc_{\text{bulk}}$. The 
basic idea of the analysis is simple. Given the proposal of  \cite{Gaberdiel:2009ug}, we know the annulus 
partition functions in the `open string sector'. More specifically,  if $\Rc$ and $\Rc^\prime$ label two 
consistent boundary conditions, then the open string annulus amplitude is simply
\begin{equation}\label{open}
  \chi_{\Rc\otimes_f{\Rc^{\prime}}^\ast}(\tilde q) \ ,
\end{equation}
where $\chi_{\cal S}$ denotes the character of ${\cal S}$ and
$\tilde q = \exp(- 2 \pi i/\tau)$ is open string modular parameter.
Thus the overlap of the corresponding boundary states  
$\|\Rc\rangle\!\rangle$ and $\|\Rc'\rangle\!\rangle$ must equal
\begin{equation}\label{closed}
\langle\!\langle\Rc\|q^{L_0+\bar L_0}\|\Rc^\prime\rangle\!\rangle = 
\chi_{\Rc\otimes_f{\Rc^{\prime}}^\ast}(\tilde q) \ ,
\end{equation}
{\it i.e.}\ the modular $S$-transform of (\ref{open}) --- see appendix 
\ref{sec:S-matrix} for explicit formulae, or \cite{Feigin:2006iv,Semikhatov:2007qp} for 
the formula for general $\Wc_{p,q}$-models.

By considering the different powers of $q$ that appear in (\ref{closed}), it is clear which
contribution arises from which sector 
of $\Hc_\text{bulk}$ in \eqref{Hcdec}.
We can thus identify the overlaps of the various different 
Ishibashi states that appear. 
As was explained in Section~\ref{sec:intro-bndstate}, the  annulus amplitudes
can be described in terms of ten Ishibashi states, reflecting the fact that
the overlaps have a 2-dimensional null ideal, see eq.\ (\ref{nullideal}). 
We should stress, however, that this 
does not mean that the actual boundary states can be written in terms
of these ten Ishibashi states. It only means that for the calculation of the annulus amplitudes,
only the coupling to these ten Ishibashi states is required. (In fact, it seems natural to assume
that the actual boundary states will depend on two more Ishibashi states, but we have not
managed to prove this.)

So far, we have not used any detailed information about the structure of the bulk spectrum. 
The non-trivial consistency condition now arises from the requirement that the 
Ishibashi states (\ref{Ishiused}) and their overlaps (\ref{eq:Ishibashioverlaps})  can indeed be 
obtained within $\Hc_{\text{bulk}}$. In order to see that this is
possible we begin by classifying the most general Ishibashi states in $\Hc_{\text{bulk}}$.

\subsection{The Ishibashi states}

As was explained in \cite[Section~5.1]{Gaberdiel:2007jv}, we may think of the Ishibashi states in
terms of `Ishibashi morphisms'. Indeed, each Ishibashi state defines (and is defined by) a 
`bulk-boundary pairing' $\Hc_{\text{bulk}} \times \Wc \rightarrow {\mathbb C}$ that is compatible 
with the $\Wc$ action. Since the bulk space $\Hc_{\text{bulk}}$ is the quotient of $\hat\Hc$ by 
$\Nc$, the space of all Ishibashi intertwiners  consists of those bulk-boundary pairings on 
$\hat\Hc$  that vanish on $\Nc$. It was furthermore shown in \cite{Gaberdiel:2007jv} that every such
bulk boundary pairing can be written as 
\begin{equation}\label{brho}
b_\rho \bigl( u \otimes_\Cb \bar{v},w\bigr) = \sum_a  \bigl(\text{ev}_{\Pc(a)} (\rho_a(u_a),\bar v_a)\bigr) (w) \ , 
\end{equation}
where $\rho$ is an intertwiner \smash{$\rho=\bigoplus_a \rho_a : \Pc(a) \rightarrow \Pc(a)$}. 
Here \smash{$u \otimes_\Cb \bar{v} = \bigoplus_a u_a \otimes_\Cb \bar v_a \in \hat\Hc$}, and $w\in\Wc$. 
Furthermore, as also explained in \cite{Gaberdiel:2007jv}, the condition 
to vanish on $\Nc$ can be analysed by very similar methods as above in Section 
\ref{sec:maximal-nondeg}. The bulk boundary map $b_\rho$ vanishes on $\Nc$ if and 
only if
\be \label{eq:ish-morph-cond}
  \sum_a g_a\circ\rho_a\circ f_a=0
\ee  
for all $f_a:\Pc(i)\rightarrow\Pc(a)$ and $g_a:\Pc(a)\rightarrow\Pc(j)$ such that
\smash{$(\bigoplus_a f_a\otimes_{\mathbb{C}}\bar g_a^\ast)(\Pc(i)\otimes_{\mathbb{C}}
\bar\Pc(j)^\ast)$} lies in $\Nc$.

As is familiar from the usual rational case, we can analyse the Ishibashi states sector by sector. 
Again, the situation is easiest for the sectors $\Hc_{-{1}/{24}}$ and $\Hc_{{35}/{24}}$,
corresponding to the irreducible representations at the corner of the Kac table. In either case 
Schur's lemma implies that there is only a 
one-dimensional space of intertwiners, and since $\Nc$ 
only intersects these sectors in $0$,  there is no additional constraint to worry about. Thus we have 
one Ishibashi state from each of these two sectors.

\subsubsection{The edge of the Kac table}

The situation is more interesting for the sectors  from the `edge of the
Kac table', {\it i.e.}\ the sectors $\Hc_{1/8}$,   $\Hc_{5/8}$ and $\Hc_{1/3}$. For
concreteness, let us again only consider the case of $\Hc_{1/8}$; the situation in the
other sectors is completely analogous. First we recall that 
\begin{equation}\label{sumss}
\hat\Hc_{\frac{1}{8}} = \Bigl( \Pc(\tfrac{1}{8}) \otimes_{\mathbb C} \bar{\Pc} (\tfrac{1}{8})^\ast\Bigr)
 \, \oplus \, \Bigl(
 \Pc(\tfrac{33}{8}) \otimes_{\mathbb C} \bar{\Pc} (\tfrac{33}{8})^\ast \Bigr) \ .
 \end{equation}
 As was explained in \cite{Gaberdiel:2009ug},  the space of intertwiners 
 $\rho : \Pc(\frac18) \rightarrow \Pc(\frac18)$ is two-dimensional, and similarly for 
$ \Pc(\frac{33}8)$. Thus, in addition to the identity intertwiner $\text{id}_a$, there
is one linearly independent intertwiner which we denote by $n_a\equiv e^{a\rightarrow a}_{2}$;
our notation for the intertwiners is  explained in Appendix~\ref{sec:embedding}. 
The intertwiner $\text{id}_a$ acts as the identity on $\Pc(a)$, and as zero on 
the other summand in (\ref{sumss}), and similarly for $n_a$. In total, we therefore have four such 
intertwiners, and we write the most general ansatz as 
\begin{equation}
\rho =  \rho_{\frac{1}{8}} \oplus \rho_{\frac{33}{8}} = 
\bigoplus_{a=\frac{1}{8},\frac{33}{8}} A_a^{(1)} \, \text{id}^a + A_a^{(2)} n^a \ .
\end{equation}
The space $\Nc$ by which we have to quotient out $\hat\Hc_{1/8}$ is generated by
\begin{align}\label{eq:ishrel1}
\Bigl( e_{1;\alpha}^{a\rightarrow b}\otimes_{\mathbb{C}}{\rm id}-{\rm id}\otimes_{\mathbb{C}}
(e_{1;\alpha}^{a\rightarrow b})^\ast \Bigr) 
\left( \Pc(a) \otimes_{\mathbb C} \bar{\Pc} (b)^\ast \right) \ ,
\end{align}
where either ($a=\frac18$ and $b=\frac{33}8$) or ($a=\frac{33}8$ and $b=\frac18$), and
$\alpha$ denotes the two different choices of such intertwiners. (This is because the 
other relations of the form (\ref{eq:seesaw}), involving intertwiners of higher
degree, are a consequence of these.) 
By \eqref{eq:ish-morph-cond}, the intertwiner $\rho$ vanishes on \eqref{eq:ishrel1} if and 
only if  for each $\alpha$ we have on $\Pc(a)$
\begin{equation}
\rho_b\circ e^{a\rightarrow b}_{1;\alpha}- (e^{a\rightarrow b}_{1;\alpha})  \circ \rho_a = 0  \ .
\end{equation}
Thus there is a three-dimensional solution space with basis
\begin{equation}
\rho^{(1)} = \text{id}^{\frac{1}{8}} +  \text{id}^{\frac{33}{8}} \ , \qquad
\rho^{(a,2)} = n^{a} \ , \qquad a=\tfrac{1}{8}, \tfrac{33}{8} \ ,
\end{equation}
since $e^{a\rightarrow b}_{1;\alpha} \circ n^a = n^b \circ e^{a\rightarrow b}_{1;\alpha} = 0$. 
This analysis is completely analogous to the discussion in \cite{Gaberdiel:2007jv}.

\subsubsection{The interior of the Kac table}

This leaves us with analysing the Ishibashi morphisms in the sector $\Hc_0$, corresponding
to the `interior of the Kac table'. The following calculations make frequent reference to
the embedding diagrams \eqref{eq:P0-diagram}-\eqref{eq:P57-diagram} of the projective 
representations $\Pc(0),\Pc(1),\Pc(2),\Pc(5)$ and
$\Pc(7)$, and of our conventions regarding intertwiners described there. 
As is apparent from these embedding diagrams, the space of intertwiners of each of 
the projective representations $\Pc(a)$ with $a\in\{1,2,5,7\}$ is $4$-dimensional.  For 
each  such $\Pc(a)$ let $\text{id}^{a}$ be the identity intertwiner, 
while $e^{a\rightarrow a}_{4}$ denotes the unique intertwiner of degree four.  The remaining two
intertwiners have degree $2$, and will be denoted by $e^{a\rightarrow a}_{2;\alpha}$, where
$\alpha$ takes the appropriate two values out of $\{1,2,5,7\}$, as explained in 
Appendix~\ref{sec:embedding}. 
The most general ansatz for an intertwiner $\Pc(a)\rightarrow \Pc(a)$, $a=1,2,5,7$ is therefore
\begin{align}
  \rho_a=A^{\rm id}_a\, {\rm id}^{a}+\sum_\alpha A_a^\alpha\, e^{a\rightarrow a}_{2;\alpha}
		  +A^4_a\, e^{a\rightarrow a}_{4} \ .
\end{align}
For the case of $\Pc(0)$, the space of intertwiners from $\Pc(0)$ to itself is $3$-dimensional;
apart from the identity intertwiner $\text{id}^{0}$, 
there are the intertwiners $e^{0\rightarrow 0}_2$ and $e^{0\rightarrow 0}_4$ of degree $2$ and $4$, 
respectively. The most general ansatz for an intertwiner $\Pc(0)\rightarrow\Pc(0)$ is therefore
\begin{align}
  \rho_0= A^{\rm id}_0\, {\rm id}^0+A_0^2\, e^{0\rightarrow0}_2 +A^4_0\, e^{0\rightarrow0}_4\ .
\end{align}
Before taking  into account the constraints coming from $\Nc$, the number of intertwiners 
we need to consider in the sector $\hat\Hc_0$ is therefore $19=4\times 4 + 3$.
\smallskip

\noindent The intersection of $\Nc$ with $\hat\Hc_0$ is again generated by the relations of 
the form (\ref{eq:seesaw}) involving intertwiners of degree 1
\begin{align}
&\bigl( e_{1;\beta}^{a\rightarrow b}\otimes_{\mathbb{C}}{\rm id}^b
-{\rm id}^a\otimes_{\mathbb{C}} (\overline{e_{1;\beta}^{a\rightarrow b}})^\ast  \bigr) \, 
\bigl(\Pc(a) \otimes_{\mathbb C} \bar{\Pc}(b)^\ast \bigr) \nonumber \\
& \bigl(e_1^{0\rightarrow d}\otimes_{\mathbb{C}}{\rm id}^d
-{\rm id}^0\otimes_{\mathbb{C}} (\overline{e_1^{0\rightarrow d}})^\ast \bigr) \,
\bigl(\Pc(0) \otimes_{\mathbb C} \bar{\Pc}(d)^\ast \bigr) \\
& \bigl( e_1^{d\rightarrow 0}\otimes_{\mathbb{C}}{\rm id}^0
-{\rm id}^d\otimes_{\mathbb{C}} (\overline{e_1^{d\rightarrow 0}})^\ast \bigr)  \,
\bigl(\Pc(d) \otimes_{\mathbb C} \bar{\Pc}(0)^\ast \bigr) \ ,  \nonumber 
\end{align}
where $(a,b)=(1,5), (1,7), (2,5), (2,7)$ or $(a,b)=(5,1),(7,1),(5,2),(7,2)$ and 
$d\in\{1,2\}$. By 
\eqref{eq:ish-morph-cond},  the intertwiner $\rho$ vanishes on \eqref{eq:ishrel1} if and only if
\begin{align}\label{eq:isheqs}
\rho_b\circ e_{1;\beta}^{a\rightarrow b}-e_{1;\beta}^{a\rightarrow b}\circ\rho_a&=0
\quad \text{on $\Pc(a)$ ,} \nonumber\\
\rho_d\circ e_1^{0\rightarrow d}-e_1^{0\rightarrow d}\circ\rho_0&=0 \quad \text{on $\Pc(0)$ ,} \\
\rho_0\circ e_1^{d\rightarrow 0}- e_1^{d\rightarrow 0}\circ\rho_d&=0 \quad \text{on $\Pc(d)$ .}
\nonumber 
\end{align}
Evaluating the first equation of \eqref{eq:isheqs} we get
\begin{align}
  0&=\rho_b\circ e_{1;\beta}^{a\rightarrow b}-e_{1;\beta}^{a\rightarrow b}\circ\rho_a
  \nonumber \\
  &=\Big(A^{\rm id}_b\,{\rm id}^b+\sum_\eta A^\eta_b\,e^{b\rightarrow b}_{2;\eta}
  +A^4_b\,e^{b\rightarrow b}_4\Big)\circ e_{1;\beta}^{a\rightarrow b}
  -e_{1;\beta}^{a\rightarrow b}\circ\Big(A^{\rm id}_a{\rm id}^a
  +\sum_\eta A^\eta_a e^{a\rightarrow a}_{2;\eta}
  +A^4_a\,e^{a\rightarrow a}_4\Big)  \nonumber \\
  &=(A^{\rm id}_b-A^{\rm id}_a)e_{1;\beta}^{a\rightarrow b}
     +\sum_{\gamma} \Big( \sum_{\eta}A^\eta_b C_{\eta \beta}^{\gamma}
        - \sum_{\eta}A^\eta_a   \hat{C}_{\beta \eta}^{\gamma}  \Big) 
        e_{3;\gamma}^{a\rightarrow b}\ ,
\end{align}
where we have used that $e^{b\rightarrow b}_{4} \circ e^{a\rightarrow b}_{1;\beta}= 0 = 
e^{a\rightarrow b}_{1;\beta} \circ e^{a\rightarrow a}_{4}$, as well as 
defined the structure constants via (see Appendix~\ref{sec:embedding} for some examples)
\begin{equation}
e^{b\rightarrow b}_{2;\alpha} \circ e^{a\rightarrow b}_{1;\beta} = \sum_\gamma C^{\gamma}_{\alpha,\beta}\, 
e_{3;\gamma}^{a\rightarrow b}  \ , \qquad
e^{a\rightarrow b}_{1;\beta} \circ e^{a\rightarrow a}_{2;\alpha} = \sum_\gamma \hat{C}_{\beta,\alpha}^{\gamma}
\, e_{3;\gamma}^{a\rightarrow b} \ . 
\end{equation}
The second and third equations of \eqref{eq:isheqs} lead to equivalent relations. 
Concentrating on the former we obtain 
\begin{align}
  0&=\rho_d\circ e_1^{0\rightarrow d}-e_1^{0\rightarrow d}\circ\rho_0
  \nonumber\\
  &=(A^{\rm id}_d\,{\rm id}^d+A^5_d\,e^{d\rightarrow d}_{2,5}+A^{7}_d\,e^{d\rightarrow d}_{2,7}
  +A^4_d\,e^{d\rightarrow d}_4)\circ e_1^{0\rightarrow d}
  -e_1^{0\rightarrow d}\circ(A_0^{\rm id}\,{\rm id}^0+A^2_0\,e^{0\rightarrow0}_2
  +A^4_0\,e^{0\rightarrow0}_4)  \nonumber\\
  &=(A_d^{\rm id}-A_0^{\rm id})e_1^{0\rightarrow d}+(A^5_d+A_d^7-A_0^2)e^{0\rightarrow d}_3\ ,
\end{align}
where we have again used that 
$e_{4}^{d\rightarrow d} \circ e_1^{0\rightarrow d} = 0 = e_1^{0\rightarrow d} \circ e_4^{0\rightarrow 0}$. 
 A basis for the space of solutions is given by the following nine Ishibashi morphisms\footnote{We believe
that by an appropriate rescaling of the intertwiners, the coefficients $C^{\gamma}_{\alpha,\beta}$ and 
$\hat{C}^{\gamma}_{\beta, \alpha}$ can be chosen to be either zero or one.
This was used to arrive at \eqref{eq:ishbasis}.}
\begin{align}
  \rho^{({\rm id})}&={\rm id}^0+{\rm id}^1+{\rm id}^2+{\rm id}^5+{\rm id}^7 \nonumber \\
  \rho^{(\mu)}&= e_2^{0\rightarrow0}+e_{2,7}^{1\rightarrow1}
  +e_{2,5}^{2\rightarrow2}+e_{2,2}^{5\rightarrow5}
  +e_{2,1}^{7\rightarrow7}\nonumber\\
  \rho^{(\nu)}&= e_2^{0\rightarrow0}+e_{2,5}^{1\rightarrow1}
  +e_{2,7}^{2\rightarrow2}+e_{2,1}^{5\rightarrow5}
  +e_{2,2}^{7\rightarrow7}  \label{eq:ishbasis}  \\
  \rho^{(\delta)}&=e_{2,5}^{2\rightarrow2}-e_{2,7}^{2\rightarrow2}+e_{2,2}^{5\rightarrow5}-e_{2,2}^{7\rightarrow7}
  \nonumber\\
  \rho^{(\sigma_i)}&=e^{i\rightarrow i}_4\qquad i=0,1,2,5,7\ .\nonumber
\end{align}

\subsection{Overlaps of Ishibashi states}

In order to identify the Ishibashi states (\ref{eq:Ishibashistates}) 
with linear combinations of the Ishibashi states corresponding to the intertwiners 
\eqref{eq:ishbasis}, we need to determine the different overlaps between the latter.
Obviously, the overlaps between Ishibashi states from different sectors vanish, but since there is 
generically more than one Ishibashi state in each sector, the relative overlaps between them
are more complicated. The first step in identifying the structure of these overlaps is to
understand the relation between the Ishibashi morphisms $\rho$ and the corresponding Ishibashi 
states $|\rho\rangle\!\rangle$, thought of as elements in a
completion of $\Hc_{\text{bulk}}$. 
Suppose that $w\in \Hc_{\text{bulk}}$ is an arbitrary bulk state, then we have 
\begin{align}\label{Ishimor}
b_\rho(w,\Omega)=B(|\rho\rangle\!\rangle,w)\ ,
\end{align}
where $B(-,-)$ is the bulk 2-point function and $b_\rho$ is defined by 
\eqref{brho}. 
That $|\rho\rangle\!\rangle$ is annihilated by 
$(W_m -  (-1)^{h_W} \bar W_{-m})$ is related to the fact that $b_\rho(-,\Omega)$ is a chiral
2-point block on the sphere (see {\it e.g.} \cite[App.\,A]{Gaberdiel:2007jv}). 
Since the bulk 2-point function is non-degenerate, it follows that $\rho$
determines uniquely the corresponding Ishibashi state $|\rho\rangle\!\rangle$, and
vice versa. Given the knowledge of the corresponding Ishibashi state we can then
work out the cylinder overlaps since we have
\begin{equation}
\langle\!\langle \rho_1 | q^{L_0+\bar{L}_0} | \rho_2 \rangle\!\rangle = 
B\bigl( |\rho_1\rangle\!\rangle, q^{L_0+\bar{L}_0} | \rho_2 \rangle\!\rangle \bigr) \ .
\end{equation}
It follows from the definition of \eqref{brho} that $b_\rho(u \otimes_\Cb \bar{v} , \Omega)$ 
is only nonzero provided that $\rho_a(u_a)$ and $\bar{v}_a$ contain summands that are
conjugate to one another. 
Here we have written $u \otimes_\Cb \bar v = \sum_a u_a\otimes_\Cb \bar v_a$ with 
$u_a\otimes_\Cb \bar v_a\in \Pc(a)\otimes_{\mathbb{C}}\bar\Pc(a)^\ast$.

As regards the bulk 2-point function $B(-,-)$, we choose the following convention. 
Fixing a bulk 2-point function is equivalent to giving an isomorphism 
$\Hc_{\text{bulk}} \rightarrow \Hc_{\text{bulk}}^*$, which one can then precompose
with the canonical pairing $\Hc_{\text{bulk}}\times \Hc_{\text{bulk}}^\ast \rightarrow {\mathbb C}$. 
There is no unique choice, but there is a preferred such isomorphism:
if we keep the vector space decomposition implicit in \eqref{eq:0_compseries}, 
the conjugation map flips the direction of all arrows, and effectively turns the composition diagram upside down.
We pick the isomorphism which maps the level 0 states of $\Hc_{\text{bulk}}$ to states of $\Hc_{\text{bulk}}^\ast$
which have components only at level 0 (and not also at levels 2 and 4). With this choice,
the bulk 2-point function is non-vanishing only for combinations of states  at opposite
points in the bulk composition series.

Combining these two considerations we can then determine
the components of the bulk composition series in which the Ishibashi state has non-trivial
components, as we shall now see.

\subsubsection{The corner of the Kac table}

Let us begin with the Ishibashi states from the sectors $\Hc_{-1/24}$ and
$\Hc_{35/24}$. For these sectors the Ishibashi morphisms are proportional to the 
identity and the Ishibashi states are just those built upon the highest weight states. 
We can then normalise the two resulting Ishibashi states 
$| \tfrac{-1}{24} \rangle\!\rangle$ and $| \tfrac{35}{24} \rangle\!\rangle$
such that their overlaps are
\be
\langle\!\langle \tfrac{-1}{24} |\, q^{L_0 + \bar{L}_0}\, | \tfrac{-1}{24} \rangle\!\rangle
=  \sqrt{3} \,\chi_{{\cal W}(\tfrac{-1}{24})}(q)
\qquad \text{and} \qquad
\langle\!\langle \tfrac{35}{24} |\, q^{L_0 + \bar{L}_0}\, | \tfrac{35}{24} \rangle\!\rangle
= - \sqrt{3} \,\chi_{{\cal W}(\tfrac{35}{24})}(q)\ ,
\ee
in agreement with \eqref{eq:Ishibashioverlaps}.

\subsubsection{The edge of the Kac table}

For the sectors along the `edge of the Kac table', let us first work out the non-trivial components
of the Ishibashi states. As in the previous sections we shall do this explicitly in the $\Hc_{1/8}$ sector;
the analysis in the other edge sectors is similar. The Ishibashi morphism $b_{\rho^{\rm id}}$ 
is non-vanishing on $u \otimes_\Cb \bar{v} \in \Hc_{1/8}$ 
provided that the components $u_a \otimes_\Cb \bar{v}_a$
are conjugate to one another in $\Pc(a)\otimes_{\mathbb{C}}\bar\Pc(a)$. 
If we choose the representatives $u_a\otimes_\Cb \bar{v}_a$ according to \eqref{H(0)},
then $u_a$ lies in the top $\Wc(a)$ at level $0$ in $\Pc(a)$. In order for $\bar v_a$ to be 
conjugate to $u_a$ it must then lie in the bottom $\Wc(a)$, {\it i.e.}\ at level $2$ in $\bar \Pc(a)^\ast$. 
In terms of the analysis of Section~\ref{sec:compedge}, it then follows that $u \otimes_\Cb \bar{v}$ must lie
in the grey components of the composition series,
\be\label{eq:1/8_compseriesIshi}
b_{\rho^{({\rm id})}} \neq 0 \text{ only on} \quad
\begin{tabular}{c|c|c|}
& $\tfrac{1}{8}$ & $\tfrac{33}{8}$ \\ 
\hline
$\tfrac{1}{8}$  & 1 & 0 \\
\hline
$\tfrac{33}{8}$ & 0 & 1 \\
\hline
\end{tabular}
\quad \longrightarrow \quad
\begin{tabular}{c|c|c|}
& $\tfrac{1}{8}$ & $\tfrac{33}{8}$ \\ 
\hline
$\tfrac{1}{8}$  & 0 & 2 \\
\hline
$\tfrac{33}{8}$ & 2 & 0 \\
\hline
\end{tabular}
\quad \longrightarrow \quad
\begin{tabular}{c|c|c|}
& $\tfrac{1}{8}$ & $\tfrac{33}{8}$ \\ 
\hline
$\tfrac{1}{8}$  & \cellcolor[gray]{.8}1 & 0 \\
\hline
$\tfrac{33}{8}$ & 0 & \cellcolor[gray]{.8}1 \\
\hline
\end{tabular} \ ,
\ee
where we have used the same notation as in (\ref{eq:1/8_compseries}). The corresponding 
Ishibashi state (that we shall denote by $|{\rm id}\rangle\!\rangle$) then has non-zero components in 
the conjugate sectors. 
\be \label{eq:|id>-in-1/8-sector}
|{\rm id}\rangle\!\rangle \in \quad
\begin{tabular}{c|c|c|}
& $\tfrac{1}{8}$ & $\tfrac{33}{8}$ \\ 
\hline
$\tfrac{1}{8}$  & \cellcolor[gray]{.8}1 & 0 \\
\hline
$\tfrac{33}{8}$ & 0 & \cellcolor[gray]{.8}1 \\
\hline
\end{tabular}
\quad \longrightarrow \quad
\begin{tabular}{c|c|c|}
& $\tfrac{1}{8}$ & $\tfrac{33}{8}$ \\ 
\hline
$\tfrac{1}{8}$  & 0 & 2 \\
\hline
$\tfrac{33}{8}$ & 2 & 0 \\
\hline
\end{tabular}
\quad \longrightarrow \quad
\begin{tabular}{c|c|c|}
& $\tfrac{1}{8}$ & $\tfrac{33}{8}$ \\ 
\hline
$\tfrac{1}{8}$  & 1 & 0 \\
\hline
$\tfrac{33}{8}$ & 0 & 1 \\
\hline
\end{tabular} \ .
\ee
\medskip

The analysis for the Ishibashi morphism \smash{$\rho^{(a,2)}$} is similar. Again, \smash{$\rho^{(a,2)}$}
is non-vanishing on $u \otimes_\Cb \bar{v}\in \Hc_{1/8}$ provided that $n_a(u)$ is conjugate to $\bar{v}$. 
Since $n_a(u)$ maps the top $\Wc(a)$ to the bottom $\Wc(a)$ in $\Pc(a)$, it follows that 
$u \otimes_\Cb \bar{v}$ has to have non-zero components in the sectors
\be\label{eq:1/8_compseriesIshi18}
b_{\rho^{(1/8,2)}} \neq 0 \text{ only on} \quad
\begin{tabular}{c|c|c|}
& $\tfrac{1}{8}$ & $\tfrac{33}{8}$ \\ 
\hline
$\tfrac{1}{8}$  & \cellcolor[gray]{.8}1 & 0 \\
\hline
$\tfrac{33}{8}$ & 0 & 1 \\
\hline
\end{tabular}
\quad \longrightarrow \quad
\begin{tabular}{c|c|c|}
& $\tfrac{1}{8}$ & $\tfrac{33}{8}$ \\ 
\hline
$\tfrac{1}{8}$  & 0 & 2 \\
\hline
$\tfrac{33}{8}$ & 2 & 0 \\
\hline
\end{tabular}
\quad \longrightarrow \quad
\begin{tabular}{c|c|c|}
& $\tfrac{1}{8}$ & $\tfrac{33}{8}$ \\ 
\hline
$\tfrac{1}{8}$  & 1 & 0 \\
\hline
$\tfrac{33}{8}$ & 0 & 1 \\
\hline
\end{tabular} 
\ee
and
\be\label{eq:1/8_compseriesIshi338}
b_{\rho^{(33/8,2)}} \neq 0 \text{ only on} \quad
\begin{tabular}{c|c|c|}
& $\tfrac{1}{8}$ & $\tfrac{33}{8}$ \\ 
\hline
$\tfrac{1}{8}$  & 1 & 0 \\
\hline
$\tfrac{33}{8}$ & 0 & \cellcolor[gray]{.8}1 \\
\hline
\end{tabular}
\quad \longrightarrow \quad
\begin{tabular}{c|c|c|}
& $\tfrac{1}{8}$ & $\tfrac{33}{8}$ \\ 
\hline
$\tfrac{1}{8}$  & 0 & 2 \\
\hline
$\tfrac{33}{8}$ & 2 & 0 \\
\hline
\end{tabular}
\quad \longrightarrow \quad
\begin{tabular}{c|c|c|}
& $\tfrac{1}{8}$ & $\tfrac{33}{8}$ \\ 
\hline
$\tfrac{1}{8}$  & 1 & 0 \\
\hline
$\tfrac{33}{8}$ & 0 & 1 \\
\hline
\end{tabular} \ ,
\ee
respectively. The corresponding Ishibashi states will be denoted by $|n_{1/8}\rangle\!\rangle$
and $|n_{33/8}\rangle\!\rangle$, and they must have non-trivial components in the conjugate
sectors, {\it i.e.}\  in
\be
\label{eq:1/8_compseriesIshi181}
|n_{1/8}\rangle\!\rangle \in \quad
\begin{tabular}{c|c|c|}
& $\tfrac{1}{8}$ & $\tfrac{33}{8}$ \\ 
\hline
$\tfrac{1}{8}$  & 1 & 0 \\
\hline
$\tfrac{33}{8}$ & 0 & 1 \\
\hline
\end{tabular}
\quad \longrightarrow \quad
\begin{tabular}{c|c|c|}
& $\tfrac{1}{8}$ & $\tfrac{33}{8}$ \\ 
\hline
$\tfrac{1}{8}$  & 0 & 2 \\
\hline
$\tfrac{33}{8}$ & 2 & 0 \\
\hline
\end{tabular}
\quad \longrightarrow \quad
\begin{tabular}{c|c|c|}
& $\tfrac{1}{8}$ & $\tfrac{33}{8}$ \\ 
\hline
$\tfrac{1}{8}$  & \cellcolor[gray]{.8}1 & 0 \\
\hline
$\tfrac{33}{8}$ & 0 & 1 \\
\hline
\end{tabular} 
\ee
and
\be
\label{eq:1/8_compseriesIshi3381}
|n_{33/8}\rangle\!\rangle \in \quad
\begin{tabular}{c|c|c|}
& $\tfrac{1}{8}$ & $\tfrac{33}{8}$ \\ 
\hline
$\tfrac{1}{8}$  & 1 & 0 \\
\hline
$\tfrac{33}{8}$ & 0 & 1 \\
\hline
\end{tabular}
\quad \longrightarrow \quad
\begin{tabular}{c|c|c|}
& $\tfrac{1}{8}$ & $\tfrac{33}{8}$ \\ 
\hline
$\tfrac{1}{8}$  & 0 & 2 \\
\hline
$\tfrac{33}{8}$ & 2 & 0 \\
\hline
\end{tabular}
\quad \longrightarrow \quad
\begin{tabular}{c|c|c|}
& $\tfrac{1}{8}$ & $\tfrac{33}{8}$ \\ 
\hline
$\tfrac{1}{8}$  & 1 & 0 \\
\hline
$\tfrac{33}{8}$ & 0 & \cellcolor[gray]{.8}1 \\
\hline
\end{tabular} \ .
\ee

Given this information, we can then directly work out the non-trivial overlaps between
these Ishibashi states. For example, for the overlaps of $|n_{1/8}\rangle\!\rangle$ 
(or $|n_{33/8}\rangle\!\rangle$) with themselves vanish, since both of these states 
only have a component in the `bottom' factor, but none in the conjugate `top' factor. 
On the other hand, the overlap with  $|\text{id}\rangle\!\rangle$ is non-zero, and we 
can choose the normalisation of $|n_{1/8}\rangle\!\rangle$ and $|n_{33/8}\rangle\!\rangle$
such that
\begin{align}\label{3.28}
  \langle\!\langle{\rm id}|q^{L_0+\bar L_0}|n_{1/8}\rangle\!\rangle&=\chi_{\Wc(\tfrac{1}{8})}(q) &
  \langle\!\langle{\rm id}|q^{L_0+\bar L_0}|n_{33/8}\rangle\!\rangle&=\chi_{\Wc(\tfrac{33}{8})} (q) \ .
\end{align}

Let us decompose the action of $L_0$ on a representation ${\cal R}$ as 
$L_0 = L_0^\text{diag} + L_0^\text{nil}$, where $L_0^\text{nil}$ denotes the nilpotent part.
Since for a mode $W_m$ of a homogeneous generator $W$ of $\Wc_{2,3}$ we have
$[L_0,W_m] = -m W_m$ as well as $[L_0^\text{diag},W_m] = -m W_m$ (after all, $L_0^\text{diag}$ 
just gives the grading of ${\cal R}$ and $W_m$ is a map of degree $-m$), it follows that
$[L_0^\text{nil},W_m]=0$, {\em i.e.}\ the nilpotent part of $L_0$ is an intertwiner from ${\cal R}$ to 
itself. In particular, if $|i\rangle\!\rangle$ is an Ishibashi state, so is 
$(L_0^\text{nil} + \bar L_0^\text{nil})^n|i\rangle\!\rangle$
for any $n \ge 0$.

The above discussion, together with the observations in \eqref{eq:|id>-in-1/8-sector}, 
\eqref{eq:1/8_compseriesIshi181} and \eqref{eq:1/8_compseriesIshi3381} shows that
$(L_0^\text{nil} + \bar L_0^\text{nil})|{\rm id}\rangle\!\rangle$ is again an Ishibashi state, and 
it has to be a linear combination of $|n_{1/8}\rangle\!\rangle$ and $|n_{33/8}\rangle\!\rangle$. 
We can choose the normalisation of $|{\rm id}\rangle\!\rangle$ such that
\begin{align} \label{eq:alpha-condition}
\frac{1}{2\pi}\,  
(L_0^\text{nil} + \bar L_0^\text{nil}) |{\rm id}\rangle\!\rangle = 
  |n_{1/8}\rangle\!\rangle + \alpha_{33/8} |n_{33/8}\rangle\!\rangle\ ,
\end{align}
for some constant $\alpha_{33/8} \in \Cb$.
We should note that $\alpha_{33/8}$ is in principle determined by the structure of 
$\Hc_{\text{bulk}}$, but that with our current limited understanding of the latter
we cannot actually calculate it from first principles. In any case it follows that
\begin{align}\label{3.30}
  \langle\!\langle{\rm id}|q^{L_0+\bar L_0}|{\rm id}\rangle\!\rangle=
   i   \tau 
\Big( 
\chi_{\Wc(\tfrac{1}{8})}(q) +  \alpha_{33/8} \cdot \chi_{\Wc(\tfrac{33}{8})}(q) \Big)\ .
\end{align}
If we now set
\be\label{18Ishi}
  |\tfrac{1}{8},A\rangle\!\rangle=\frac{i}{\sqrt{3}}\, |{\rm id}\rangle\!\rangle
  \quad , \qquad
  |\tfrac{1}{8},B\rangle\!\rangle=-2i \,\big(\,|n_{1/8}\rangle\!\rangle+|n_{33/8}\rangle\!\rangle\big)
  \quad , \qquad
  \alpha_{33/8}=-2\ ,
\ee
we precisely reproduce the overlaps \eqref{eq:Ishibashioverlaps}.

We should also mention that the two Ishibashi states in (\ref{18Ishi}) are characterised
by the property that their overlaps do not lead to any $\tilde\tau = -1/\tau$ 
terms in the open string. 
Indeed, the additional linear combination, which we could take to be 
$|\tfrac{1}{8},C\rangle\!\rangle=|n_{1/8}\rangle\!\rangle-|n_{33/8}\rangle\!\rangle$
cannot enter the boundary state construction since it leads, in the open string 
loop diagram, to a term proportional to $\tilde\tau$. For example, we find
\begin{align}
  \langle\!\langle \tfrac{1}{8},A | q^{L_0+\bar{L}_0}  |\tfrac{1}{8},C\rangle\!\rangle
  &= \frac{i}{\sqrt{3}} \Big( \chi_{\Wc(\tfrac18)}(q)-\chi_{\Wc(\tfrac{33}{8})}(q) \Big) \\
  &= -\frac{i}{18}\Big(\chi_{\Wc(\tfrac18)}(\tilde q)
    +\chi_{\Wc(\tfrac{33}{8})}(\tilde q)\Big)
  +\frac{\tilde\tau}{3 \sqrt{3}} \Big(\chi_{\Wc(\tfrac{1}{8})}(\tilde q)
    - 2 \chi_{\Wc(\tfrac{33}8)}(\tilde q)\Big) \nonumber\\[.3em]
   &\phantom{=} ~ +\left(\text{contributions from other sectors}\right) \ ,\nonumber
\end{align}
where $\tilde\tau = -1/\tau$ is the modular parameter in the open string channel.
The situation is therefore completely analogous to what was found in \cite{Gaberdiel:2007jv}
for the $\Wc_{1,p}$-models. 

The analysis in the other two sectors is essentially identical. Indeed, the analogues of 
\eqref{3.28} and \eqref{3.30} hold also for the Ishibashi states in $\Hc_{5/8}$ and $\Hc_{1/3}$, 
respectively, and this determines their overlaps up to the constants $\alpha_{21/8}$ and 
$\alpha_{10/3}$ as in \eqref{eq:alpha-condition}. The Ishibashi states of (\ref{eq:Ishibashistates}) can then
be identified with
\begin{align}
  |\tfrac58,A\rangle\!\rangle&= i \, \sqrt{\tfrac23}\, |{\rm id}\rangle\!\rangle&
  |\tfrac58,B\rangle\!\rangle&=-i \,\sqrt{2}\, \bigl(|n_{5/8}\rangle\!\rangle+|n_{21/8}\rangle\!\rangle\bigr)\\
    |\tfrac13,A\rangle\!\rangle&=3^{\frac14} \, \sqrt{2} \, |{\rm id}\rangle\!\rangle&\nonumber
  |\tfrac13,B\rangle\!\rangle&= 3^{-\frac34}\,\sqrt{2}  \, 
  \bigl(|n_{1/3}\rangle\!\rangle+|n_{10/3}\rangle\!\rangle\bigr) \ ,
\end{align}
and we reproduce precisely the overlaps  (\ref{eq:Ishibashioverlaps}) provided that
\be
\alpha_{21/8} =-\tfrac{1}{2} \ , \qquad  \alpha_{10/3} =-1 \ .
\ee
The third Ishibashi state in each of these sectors cannot contribute to the boundary states
since it would lead to terms proportional to $\tilde\tau$ in the open string channel.

\subsubsection{The interior of the Kac table}\label{sec:bulk-int-kac}

It therefore only remains to identify the Ishibashi states with components in $\Hc_0$, the 
`interior of the Kac table'. The Ishibashi morphism $b_{\rho^{({\rm id})}}$ is 
non-vanishing for $u \otimes_\Cb \bar v\in\Hc_{0}$ 
provided that the left and right tensor factor of the representatives in
$\hat\Hc_0$ are conjugate to one another. By the same logic as above, this 
can only be the case provided that $u \otimes_\Cb \bar v$ lies in the grey components 
\begin{align}
  \begin{tikzpicture}
    [>=angle 90,factor/.style={rectangle, draw, inner sep=0pt, minimum size=1cm}
    ,baseline=(fact1)]
    \node[factor] (fact1) {};
    \node[factor] (fact2) [right=of fact1]{};
    \node[factor] (fact3) [right=of fact2]{};
    \node[factor] (fact4) [right=of fact3]{};
    \node (fact5) [right=of fact4]{
      \scriptsize\begin{tabular}{c|c|c|c|c|c|}
        &\hspace*{-.6em} 0 \stab 1 \stab 2 \stab 5 \stab 7 \hspace*{-.6em}\\
        \hline
        0 &\hspace*{-.6em} \cellcolor[gray]{.8}1 \stab   \stab   \stab   \stab   \hspace*{-.6em}\\
        \hline
        1 &\hspace*{-.6em}   \stab \cellcolor[gray]{.8}1 \stab   \stab   \stab   \hspace*{-.6em}\\
        \hline
        2 &\hspace*{-.6em}   \stab   \stab \cellcolor[gray]{.8}1 \stab   \stab   \hspace*{-.6em}\\
        \hline
        5 &\hspace*{-.6em}   \stab   \stab   \stab \cellcolor[gray]{.8}1 \stab   \hspace*{-.6em}\\
        \hline
        7 &\hspace*{-.6em}   \stab   \stab   \stab   \stab \cellcolor[gray]{.8}1 \hspace*{-.6em}\\
        \hline
      \end{tabular}
      };
    \node (ish) [left=of fact1] {$b_{\rho^{({\rm id})}}\neq 0 \text{ only on} \hspace{-2em}$};
    \draw[->] (fact1) -- (fact2);
    \draw[->] (fact2) -- (fact3);
    \draw[->] (fact3) -- (fact4);
    \draw[->] (fact4) -- (fact5);
  \end{tikzpicture}\ ,
\end{align}
where the empty squares are the remaining composition factors of \eqref{eq:0_compseries},
which we have left blank to keep the notation compact.
The corresponding Ishibashi state $|{\rm id}\rangle\!\rangle$ 
therefore has non-vanishing components in the conjugate sectors
\begin{align}
  \begin{tikzpicture}
    [>=angle 90,factor/.style={rectangle, draw, inner sep=0pt, minimum size=1cm}
    ,baseline=(fact1)]
    \node (fact1) {
      \scriptsize\begin{tabular}{c|c|c|c|c|c|}
        &\hspace*{-.6em} 0 \stab 1 \stab 2 \stab 5 \stab 7 \hspace*{-.6em}\\
        \hline
        0 &\hspace*{-.6em} \cellcolor[gray]{.8}1 \stab   \stab   \stab   \stab   \hspace*{-.6em}\\
        \hline
        1 &\hspace*{-.6em}   \stab \cellcolor[gray]{.8}1 \stab   \stab   \stab   \hspace*{-.6em}\\
        \hline
        2 &\hspace*{-.6em}   \stab   \stab \cellcolor[gray]{.8}1 \stab   \stab   \hspace*{-.6em}\\
        \hline
        5 &\hspace*{-.6em}   \stab   \stab   \stab \cellcolor[gray]{.8}1 \stab   \hspace*{-.6em}\\
        \hline
        7 &\hspace*{-.6em}   \stab   \stab   \stab   \stab \cellcolor[gray]{.8}1 \hspace*{-.6em}\\
        \hline
      \end{tabular}};
    \node[factor] (fact2) [right=of fact1]{};
    \node[factor] (fact3) [right=of fact2]{};
    \node[factor] (fact4) [right=of fact3]{};
    \node[factor] (fact5) [right=of fact4]{};
    \node (ish) [left=of fact1] {$|{\rm id}\rangle\!\rangle \in $};
    \draw[->] (fact1) -- (fact2);
    \draw[->] (fact2) -- (fact3);
    \draw[->] (fact3) -- (fact4);
    \draw[->] (fact4) -- (fact5);
  \end{tikzpicture}\ .
\end{align}
The Ishibashi morphisms $b_{\rho^{(\sigma_i)}}$ are non-vanishing for $u \otimes_\Cb \bar v\in\Hc_{0}$
provided that $\rho^{(\sigma_i)}(u)$ is conjugate to $\bar v$, {\it i.e.} provided that 
$u \otimes_\Cb \bar v$ lies in the $\Wc(i)\otimes_{\mathbb{C}}\bar\Wc(i)$ component
at level~0 of the $\Hc_0$ composition series, {\it e.g.}\ for $i=2$
\begin{align}\label{sigsupp}
  \begin{tikzpicture}
    [>=angle 90,factor/.style={rectangle, draw, inner sep=0pt, minimum size=1cm}
    ,baseline=(fact1)]
    \node (fact1) {
      \scriptsize\begin{tabular}{c|c|c|c|c|c|}
        &\hspace*{-.6em} 0 \stab 1 \stab 2 \stab 5 \stab 7 \hspace*{-.6em}\\
        \hline
        0 &\hspace*{-.6em} 1 \stab   \stab   \stab   \stab   \hspace*{-.6em}\\
        \hline
        1 &\hspace*{-.6em}   \stab 1 \stab   \stab   \stab   \hspace*{-.6em}\\
        \hline
        2 &\hspace*{-.6em}   \stab   \stab \cellcolor[gray]{.8}1 \stab   \stab   \hspace*{-.6em}\\
        \hline
        5 &\hspace*{-.6em}   \stab   \stab   \stab 1 \stab   \hspace*{-.6em}\\
        \hline
        7 &\hspace*{-.6em}   \stab   \stab   \stab   \stab 1 \hspace*{-.6em}\\
        \hline
      \end{tabular}};
    \node[factor] (fact2) [right=of fact1]{};
    \node[factor] (fact3) [right=of fact2]{};
    \node[factor] (fact4) [right=of fact3]{};
    \node[factor] (fact5) [right=of fact4]{};
    \node (ish) [left=of fact1] {$b_{\rho^{(\sigma_2)}} \neq 0 \text{ only on} \hspace{-2em}$};
    \draw[->] (fact1) -- (fact2);
    \draw[->] (fact2) -- (fact3);
    \draw[->] (fact3) -- (fact4);
    \draw[->] (fact4) -- (fact5);
  \end{tikzpicture}\ .
\end{align}
The corresponding Ishibashi state $|\sigma_2\rangle\!\rangle$ must therefore have a 
non-vanishing component in the conjugate sector
\begin{align}
  \begin{tikzpicture}
    [>=angle 90,factor/.style={rectangle, draw, inner sep=0pt, minimum size=1cm}
    ,baseline=(fact1)]
    \node[factor] (fact1) {};
    \node[factor] (fact2) [right=of fact1]{};
    \node[factor] (fact3) [right=of fact2]{};
    \node[factor] (fact4) [right=of fact3]{};
    \node (fact5) [right=of fact4]{
      \scriptsize\begin{tabular}{c|c|c|c|c|c|}
        &\hspace*{-.6em} 0 \stab 1 \stab 2 \stab 5 \stab 7 \hspace*{-.6em}\\
        \hline
        0 &\hspace*{-.6em} 1 \stab   \stab   \stab   \stab   \hspace*{-.6em}\\
        \hline
        1 &\hspace*{-.6em}   \stab 1 \stab   \stab   \stab   \hspace*{-.6em}\\
        \hline
        2 &\hspace*{-.6em}   \stab   \stab \cellcolor[gray]{.8}1 \stab   \stab   \hspace*{-.6em}\\
        \hline
        5 &\hspace*{-.6em}   \stab   \stab   \stab 1 \stab   \hspace*{-.6em}\\
        \hline
        7 &\hspace*{-.6em}   \stab   \stab   \stab   \stab 1 \hspace*{-.6em}\\
        \hline
      \end{tabular}
      };
    \node (ish) [left=of fact1] {$|\sigma_2\rangle\!\rangle \in$};
    \draw[->] (fact1) -- (fact2);
    \draw[->] (fact2) -- (fact3);
    \draw[->] (fact3) -- (fact4);
    \draw[->] (fact4) -- (fact5);
  \end{tikzpicture}\ .
\end{align}
The analysis for the other Ishibashi states $|\sigma_i\rangle\!\rangle$ for $i=0,1,5,7$ is
similar, with $\Wc(2)\otimes_{\mathbb{C}}\bar\Wc(2)$ being replaced by 
$\Wc(i)\otimes_{\mathbb{C}}\bar\Wc(i)$.

Finally, we consider the Ishibashi morphisms $b_{\rho^{(\mu)}}$, $b_{\rho^{(\nu)}}$ and 
$b_{\rho^{(\delta)}}$. For them the analysis is more complicated because of 
the multiplicities of $2$ on the diagonal of the third factor of the composition series 
\eqref{eq:0_compseries}. In \eqref{H(0)} these two components come from
$\Wc(a)\otimes_{\mathbb C} \bar{\Wc}(a)$, where $\Wc(a)$ lies at level $0$, while 
$\bar\Wc(a)$ lies on the very left or very right at level $2$ in $\bar\Pc(a)^\ast$; we shall
therefore denote these sectors by $\ell$ and $r$, respectively. Note that the sector labelled by 
$\ell$ is conjugate to that labelled by $r$, and vice versa. With this notation, 
we have
\begin{align}
  \begin{tikzpicture}
    [>=angle 90,factor/.style={rectangle, draw, inner sep=0pt, minimum size=1cm}
    ,baseline=(fact1)]
    \node[factor] (fact1) {};
    \node[factor] (fact2) [right=of fact1]{};
    \node (fact3) [right=of fact2]{
      \scriptsize\begin{tabular}{c|c|c|c|c|c|}
        &\hspace*{-.6em} 0 \stab 1 \stab 2 \stab 5 \stab 7 \hspace*{-.6em}\\
        \hline
        0 &\hspace*{-.6em} \cellcolor[gray]{.8}1 \stab   \stab   \stab   \stab   \hspace*{-.6em}\\
        \hline
        1 &\hspace*{-.6em}   \stab \cellcolor[gray]{.8}$\ell$ \stab   \stab   \stab   \hspace*{-.6em}\\
        \hline
        2 &\hspace*{-.6em}   \stab   \stab \cellcolor[gray]{.8}$\ell$ \stab   \stab   \hspace*{-.6em}\\
        \hline
        5 &\hspace*{-.6em}   \stab   \stab   \stab \cellcolor[gray]{.8}$\ell$ \stab   \hspace*{-.6em}\\
        \hline
        7 &\hspace*{-.6em}   \stab   \stab   \stab   \stab \cellcolor[gray]{.8}$\ell$ \hspace*{-.6em}\\
        \hline
      \end{tabular}};
    \node[factor] (fact4) [right=of fact3]{};
    \node[factor] (fact5) [right=of fact4]{};
    \node (ish) [left=of fact1] {$b_{\rho^{(\mu)}} \neq 0 \text{ only on} \hspace{-2em}$};
    \draw[->] (fact1) -- (fact2);
    \draw[->] (fact2) -- (fact3);
    \draw[->] (fact3) -- (fact4);
    \draw[->] (fact4) -- (fact5);
  \end{tikzpicture}\ ,
\end{align}
\begin{align}
  \begin{tikzpicture}
    [>=angle 90,factor/.style={rectangle, draw, inner sep=0pt, minimum size=1cm}
    ,baseline=(fact1)]
    \node[factor] (fact1) {};
    \node[factor] (fact2) [right=of fact1]{};
    \node (fact3) [right=of fact2]{
      \scriptsize\begin{tabular}{c|c|c|c|c|c|}
          &\hspace*{-.6em} 0 \stab 1 \stab 2 \stab 5 \stab 7 \hspace*{-.6em}\\
        \hline
        0 &\hspace*{-.6em} \cellcolor[gray]{.8}1 \stab   \stab   \stab   \stab   \hspace*{-.6em}\\
        \hline
        1 &\hspace*{-.6em}   \stab \cellcolor[gray]{.8}$r$ \stab   \stab   \stab   \hspace*{-.6em}\\
        \hline
        2 &\hspace*{-.6em}   \stab   \stab \cellcolor[gray]{.8}$r$ \stab   \stab   \hspace*{-.6em}\\
        \hline
        5 &\hspace*{-.6em}   \stab   \stab   \stab \cellcolor[gray]{.8}$r$ \stab   \hspace*{-.6em}\\
        \hline
        7 &\hspace*{-.6em}   \stab   \stab   \stab   \stab \cellcolor[gray]{.8}$r$ \hspace*{-.6em}\\
        \hline
      \end{tabular}};
    \node[factor] (fact4) [right=of fact3]{};
    \node[factor] (fact5) [right=of fact4]{};
    \node (ish) [left=of fact1] {$b_{\rho^{(\nu)}} \neq 0 \text{ only on} \hspace{-2em}$};
    \draw[->] (fact1) -- (fact2);
    \draw[->] (fact2) -- (fact3);
    \draw[->] (fact3) -- (fact4);
    \draw[->] (fact4) -- (fact5);
  \end{tikzpicture}\ ,
\end{align}
\begin{align}
  \begin{tikzpicture}
    [>=angle 90,factor/.style={rectangle, draw, inner sep=0pt, minimum size=1cm}
    ,baseline=(fact1)]
    \node[factor] (fact1) {};
    \node[factor] (fact2) [right=of fact1]{};
    \node (fact3) [right=of fact2]{
      \scriptsize\begin{tabular}{c|c|c|c|c|c|}
          &\hspace*{-.6em} 0 \stab 1 \stab 2 \stab 5 \stab 7 \hspace*{-.6em}\\
        \hline
        0 &\hspace*{-.6em} 1 \stab   \stab   \stab   \stab   \hspace*{-.6em}\\
        \hline
        1 &\hspace*{-.6em}   \stab 2 \stab   \stab   \stab   \hspace*{-.6em}\\
        \hline
        2 &\hspace*{-.6em}   \stab   \stab \cellcolor[gray]{.8}2 \stab   \stab   \hspace*{-.6em}\\
        \hline
        5 &\hspace*{-.6em}   \stab   \stab   \stab \cellcolor[gray]{.8}$\ell$ \stab   \hspace*{-.6em}\\
        \hline
        7 &\hspace*{-.6em}   \stab   \stab   \stab   \stab \cellcolor[gray]{.8}$r$ \hspace*{-.6em}\\
        \hline
      \end{tabular}};
    \node[factor] (fact4) [right=of fact3]{};
    \node[factor] (fact5) [right=of fact4]{};
    \node (ish) [left=of fact1] {$b_{\rho^{(\delta)}} \neq 0 \text{ only on} \hspace{-2em}$};
    \draw[->] (fact1) -- (fact2);
    \draw[->] (fact2) -- (fact3);
    \draw[->] (fact3) -- (fact4);
    \draw[->] (fact4) -- (fact5);
  \end{tikzpicture}\ .
\end{align}
The above diagrams imply in turn that the corresponding Ishibashi states must 
have non-vanishing components in
\begin{align}
  \begin{tikzpicture}
    [>=angle 90,factor/.style={rectangle, draw, inner sep=0pt, minimum size=1cm}
    ,baseline=(fact1)]
    \node (state) {$|\mu\rangle\!\rangle \in$};
    \node[factor] (fact1) [right=of state]{};
    \node[factor] (fact2) [right=of fact1]{};
    \node (fact3) [right=of fact2]{
      \scriptsize\begin{tabular}{c|c|c|c|c|c|}
          &\hspace*{-.6em} 0 \stab 1 \stab 2 \stab 5 \stab 7 \hspace*{-.6em}\\
        \hline
        0 &\hspace*{-.6em} \cellcolor[gray]{.8}1 \stab   \stab   \stab   \stab   \hspace*{-.6em}\\
        \hline
        1 &\hspace*{-.6em}   \stab\cellcolor[gray]{.8}$r$\stab   \stab   \stab   \hspace*{-.6em}\\
        \hline
        2 &\hspace*{-.6em}   \stab   \stab\cellcolor[gray]{.8}$r$\stab   \stab   \hspace*{-.6em}\\
        \hline
        5 &\hspace*{-.6em}   \stab   \stab   \stab\cellcolor[gray]{.8}$r$\stab   \hspace*{-.6em}\\
        \hline
        7 &\hspace*{-.6em}   \stab   \stab   \stab   \stab\cellcolor[gray]{.8}$r$\hspace*{-.6em}\\
        \hline
      \end{tabular}};
    \node[factor] (fact4) [right=of fact3]{};
    \node[factor] (fact5) [right=of fact4]{};
    \draw[->] (fact1) -- (fact2);
    \draw[->] (fact2) -- (fact3);
    \draw[->] (fact3) -- (fact4);
    \draw[->] (fact4) -- (fact5);
  \end{tikzpicture}\ ,
\end{align}
\begin{align}
  \begin{tikzpicture}
    [>=angle 90,factor/.style={rectangle, draw, inner sep=0pt, minimum size=1cm}
    ,baseline=(fact1)]
    \node (state) {$|\nu\rangle\!\rangle \in$};
    \node[factor] (fact1) [right=of state]{};
    \node[factor] (fact2) [right=of fact1]{};
    \node (fact3) [right=of fact2]{
      \scriptsize\begin{tabular}{c|c|c|c|c|c|}
          &\hspace*{-.6em} 0 \stab 1 \stab 2 \stab 5 \stab 7 \hspace*{-.6em}\\
        \hline
        0 &\hspace*{-.6em} \cellcolor[gray]{.8}1 \stab   \stab   \stab   \stab   \hspace*{-.6em}\\
        \hline
        1 &\hspace*{-.6em}   \stab\cellcolor[gray]{.8}$\ell$\stab   \stab   \stab   \hspace*{-.6em}\\
        \hline
        2 &\hspace*{-.6em}   \stab   \stab\cellcolor[gray]{.8}$\ell$\stab   \stab   \hspace*{-.6em}\\
        \hline
        5 &\hspace*{-.6em}   \stab   \stab   \stab\cellcolor[gray]{.8}$\ell$\stab   \hspace*{-.6em}\\
        \hline
        7 &\hspace*{-.6em}   \stab   \stab   \stab   \stab\cellcolor[gray]{.8}$\ell$\hspace*{-.6em}\\
        \hline
      \end{tabular}};
    \node[factor] (fact4) [right=of fact3]{};
    \node[factor] (fact5) [right=of fact4]{};
    \draw[->] (fact1) -- (fact2);
    \draw[->] (fact2) -- (fact3);
    \draw[->] (fact3) -- (fact4);
    \draw[->] (fact4) -- (fact5);
  \end{tikzpicture}\ ,
\end{align}
\begin{align}
  \begin{tikzpicture}
    [>=angle 90,factor/.style={rectangle, draw, inner sep=0pt, minimum size=1cm}
    ,baseline=(fact1)]
    \node (state) {$|\delta\rangle\!\rangle \in$};
    \node[factor] (fact1) [right=of state]{};
    \node[factor] (fact2) [right=of fact1]{};
    \node (fact3) [right=of fact2]{
      \scriptsize\begin{tabular}{c|c|c|c|c|c|}
          &\hspace*{-.6em} 0 \stab 1 \stab 2 \stab 5 \stab 7 \hspace*{-.6em}\\
        \hline
        0 &\hspace*{-.6em} 1 \stab   \stab   \stab   \stab   \hspace*{-.6em}\\
        \hline
        1 &\hspace*{-.6em}   \stab 2 \stab   \stab   \stab   \hspace*{-.6em}\\
        \hline
        2 &\hspace*{-.6em}   \stab   \stab \cellcolor[gray]{.8}2 \stab   \stab   \hspace*{-.6em}\\
        \hline
        5 &\hspace*{-.6em}   \stab   \stab   \stab\cellcolor[gray]{.8}$r$\stab   \hspace*{-.6em}\\
        \hline
        7 &\hspace*{-.6em}   \stab   \stab   \stab   \stab\cellcolor[gray]{.8}$\ell$\hspace*{-.6em}\\
        \hline
      \end{tabular}};
    \node[factor] (fact4) [right=of fact3]{};
    \node[factor] (fact5) [right=of fact4]{};
    \draw[->] (fact1) -- (fact2);
    \draw[->] (fact2) -- (fact3);
    \draw[->] (fact3) -- (fact4);
    \draw[->] (fact4) -- (fact5);
  \end{tikzpicture}\ .
\end{align}
This predicts the following overlaps up to a number of constants that cannot be determined
directly in this manner
\begin{align*}
  &\langle\!\langle{\rm id}|q^{L_0+\bar{L}_0}|{\rm id}\rangle\!\rangle= \tau^2 \big(
\alpha_0\,  \chi_{\mathcal{W}(0)}(q)+ \alpha_1\,\chi_{\mathcal{W}(1)}(q)
  + \alpha_2\,\chi_{\mathcal{W}(2)}(q)+\alpha_5 \,\chi_{\mathcal{W}(5)}(q)
  + \alpha_7\,\chi_{\mathcal{W}(7)}(q) \big)\\
  &\langle\!\langle{\rm id}|q^{L_0+\bar{L}_0}|\mu\rangle\!\rangle=
 \tau \Bigl( \beta^\mu_0\,  \chi_{\mathcal{W}(0)}(q)
 +  \beta^\mu_1\, \chi_{\mathcal{W}(1)}(q)
 +\beta^\mu_2\,\chi_{\mathcal{W}(2)}(q)
 +\beta^\mu_5\,\chi_{\mathcal{W}(5)}(q)
 +\beta^\mu_7\,\chi_{\mathcal{W}(7)}(q) \Bigr) \\
  &\langle\!\langle{\rm id}|q^{L_0+\bar{L}_0}|\nu\rangle\!\rangle=
   \tau \Bigl( \beta^\nu_0\, \chi_{\mathcal{W}(0)}(q)
   +\beta^\nu_1\, \chi_{\mathcal{W}(1)}(q)
  +\beta^\nu_2\,\chi_{\mathcal{W}(2)}(q)
  +\beta^\nu_5\,\chi_{\mathcal{W}(5)}(q)
  +\beta^\nu_7\, \chi_{\mathcal{W}(7)}(q) \Bigr)\\
  &\langle\!\langle{\rm id}|q^{L_0+\bar{L}_0}|\delta\rangle\!\rangle=
   \tau \Bigl( \beta^\delta_2\, \chi_{\mathcal{W}(2)}(q)
   +\beta^\delta_5\, \chi_{\mathcal{W}(5)}(q)
  +\beta^\delta_7\, \chi_{\mathcal{W}(7)}(q) \Bigr) \\
  &\langle\!\langle{\rm id}|q^{L_0+\bar{L}_0}|\sigma_i\rangle\!\rangle=
  \chi_{\mathcal{W}(i)}(q)\\
  &\langle\!\langle \mu|q^{L_0+\bar{L}_0}|\mu\rangle\!\rangle=\alpha^\mu \,  \chi_{\mathcal{W}(0)}(q) \ ,
  \qquad
  \langle\!\langle \nu|q^{L_0+\bar{L}_0}|\nu\rangle\!\rangle=\alpha^\nu \, \chi_{\mathcal{W}(0)}(q) \\
  &\langle\!\langle \delta|q^{L_0+\bar{L}_0}|\delta\rangle\!\rangle=
  \alpha^\delta \, \chi_{\mathcal{W}(2)}(q)\\
  &\langle\!\langle \mu|q^{L_0+\bar{L}_0}|\nu\rangle\!\rangle=
  \gamma^{\mu\nu}_0\, \chi_{\mathcal{W}(0)}(q)
  +  \gamma^{\mu\nu}_1\,  \chi_{\mathcal{W}(1)}(q)
  +   \gamma^{\mu\nu}_2\,  \chi_{\mathcal{W}(2)}(q)
  +   \gamma^{\mu\nu}_5\, \chi_{\mathcal{W}(5)}(q)
  +   \gamma^{\mu\nu}_7\, \chi_{\mathcal{W}(7)}(q)\\
  &\langle\!\langle \mu|q^{L_0+\bar{L}_0}|\delta\rangle\!\rangle
  =  \gamma^{\mu\delta}_2\,  \chi_{\mathcal{W}(2)}(q)
   + \gamma^{\mu\delta}_7\, \chi_{\mathcal{W}(7)}(q)\\
   & \langle\!\langle \nu|q^{L_0+\bar{L}_0}|\delta\rangle\!\rangle
 = \gamma^{\nu\delta}_2\, \chi_{\mathcal{W}(2)}(q)
 +  \gamma^{\nu\delta}_5\, \chi_{\mathcal{W}(5)}(q) \ .
\end{align*}
Some of these constants can be fixed by rescaling the Ishibashi states appropriately. 
Note that $\langle\!\langle \delta|q^{L_0+\bar{L}_0}|\delta\rangle\!\rangle$ probably
vanishes ({\it i.e.}\ $\alpha^\delta=0$) because the two contributions from the $2$'s in the middle 
sector appear to cancel against each other, see \eqref{eq:ishbasis}. 

Comparing to \eqref{eq:Ishibashioverlaps}, we see that the Ishibashi states that enter the 
boundary state analysis can be identified with
\be
  |0+\rangle\!\rangle=|\mu\rangle\!\rangle
  \qquad \text{and} \qquad
  |0-\rangle\!\rangle=|\nu\rangle\!\rangle
\ee
provided that 
$\alpha^\mu=\alpha^\nu=\frac12$, $\gamma^{\mu\nu}_0=\frac{1}{2}$, and $\gamma^{\mu\nu}_j=1$ for $ j=1,2,5,7$.
Thus we can again obtain all the relevant Ishibashi states within $\Hc_{\text{bulk}}$. 

As before we can also ask whether there are additional Ishibashi states that could
contribute to consistent boundary states. By a straightforward calculation, using the
modular transformation properties of Appendix~\ref{sec:S-matrix}, one can check that neither
$|{\rm id}\rangle\!\rangle$ nor $|\delta\rangle\!\rangle$ can appear in boundary states
since $\langle\!\langle{\rm id}|q^{L_0+\bar{L}_0}|{\rm id}\rangle\!\rangle$
and $\langle\!\langle \mu|q^{L_0+\bar{L}_0}|\delta\rangle\!\rangle$ both lead to $\tilde\tau$ terms
in the open string loop diagrams.
On the other hand, a similar argument 
does not apply to the Ishibashi states $|\sigma_i\rangle\!\rangle$ since their relative overlaps, as well as 
their overlaps with $|\mu\rangle\!\rangle$ and $|\nu\rangle\!\rangle$ all vanish. Thus
these Ishibashi states do not contribute to the cylinder diagrams, and we cannot 
decide whether they appear in boundary states based on cylinder diagrams alone. 

Given that the subgroup $\Kdd$ of the Grothendieck group is $12$-dimensional
but only ten Ishibashi states are 
required for the description of the cylinder diagrams, we suspect that at least certain linear 
combinations of the $|\sigma_i\rangle\!\rangle$ also contribute 
to the boundary states. In fact, this is even
required in order for the bulk-boundary map
to be non-degenerate in the bulk entry. To see this 
we observe that the bulk states from the top level in $\Hc_0$ only contribute in overlaps
with the $|\sigma_i\rangle\!\rangle$ Ishibashi states, see in particular eq.~(\ref{sigsupp}). Thus 
if the $|\sigma_i\rangle\!\rangle$ Ishibashi did not appear in any of the boundary states, 
all bulk-boundary maps would vanish on these top states, in contradiction with the
assumed non-degeneracy. However, since the  $|\sigma_i\rangle\!\rangle$ Ishibashi states
do not contribute to any cylinder diagrams the above analysis does not determine their 
coefficients.

\section{Conclusion}\label{sec:conclusion}

In this paper we have made a proposal for the bulk space
for the `charge conjugation' modular invariant of the logarithmic
${\cal W}_{2,3}$ triplet model. 
The basic idea of our construction is to obtain the bulk space from a given
boundary condition as the largest space for which a suitable bulk-boundary map is 
non-degenerate in the bulk entry. Unlike the situation for the logarithmic 
${\cal W}_{1,p}$-models where the starting point of the analysis was taken to be 
the `identity brane' whose space of boundary fields just consists of 
$\Wc_{1,p}$, no such brane exists for ${\cal W}_{2,3}$. We have therefore
not been able to perform this analysis starting from an actual boundary condition. Instead
we have argued that for the purpose of identifying the `charge-conjugation' bulk spectrum we may
take the space of boundary fields to consist just of $\Wc^\ast$, 
the representation conjugate to $\Wc$.
With this assumption the analysis could then be performed in a similar
way to what was done for the ${\cal W}_{1,p}$-models. The resulting bulk space 
satisfies a number of non-trivial consistency conditions: the space of bulk states is self-conjugate (as has to be the case in order for the 
bulk $2$-point function to be non-degenerate), the partition function is modular invariant, 
and we could identify boundary states whose annulus partition functions reproduce
the results of \cite{Gaberdiel:2009ug}. None of these consistency conditions are
part of our ansatz, and we therefore regard these tests as very
good evidence for the correctness of our proposal.

The resulting bulk space could be written as a quotient of the direct sum of
tensor products of the projective covers of the irreducible representations. Furthermore,
the representatives of the quotient space could be taken to be described by the direct
sum of tensor products of each irreducible representation with its projective cover. 
The bulk spectrum has therefore exactly the same structure as for the ${\cal W}_{1,p}$-models 
\cite{Gaberdiel:2007jv}, or the supergroup models of \cite{Quella:2007hr}. It is plausible
that the `charge-conjugation' theory of a generic (logarithmic) conformal field theory
may therefore have this structure. 

On the other hand, there are also important differences to what was found for the 
${\cal W}_{1,p}$-models. One surprising property is that the bulk space does not 
contain $\Wc_{2,3}\otimes_{\mathbb C} \bar{\Wc}_{2,3}$ as a sub-representation, but
only as a sub-quotient. The other unusual feature is that there are non-trivial Ishibashi states
that do not contribute to annulus amplitudes. As a consequence, the boundary states of the
various boundary conditions could not be fixed uniquely by these amplitudes alone. It would
be interesting to see whether there are other (easily accessible) amplitudes that would allow
one to determine the boundary states uniquely. One of us has recently studied the action of 
topological defects on these boundary states \cite{defects}, but, unfortunately, this does not seem 
to constrain the boundary states any further. (However, it is encouraging that the
defect analysis also seems to work nicely for this theory.)

On a technical level, probably the most tricky part of our analysis was the description 
of the projective covers of all irreducible representations, see Appendix~\ref{sec:embedding} 
and in particular Appendix~\ref{app:P0-cover}. Some aspects of the corresponding composition 
series could be deduced from the conjectures of \cite{Rasmussen:2008ii}, 
while for the determination of the projective cover $\Pc(0)$ 
of the trivial representation $\Wc(0)$ we had to use other arguments, in particular the 
recent calculation of Zhu's algebra \cite{AM}. The fact that everything
fits together nicely makes us confident that the composition series of these projective covers
are indeed correct, but a direct confirmation, for example by using
the Coulomb gas description of \cite{Nagatomo:2009xp,AM1}, is still missing.
\smallskip

As far as we know, the proposal of this paper is the first example of an interesting\footnote{Here
`interesting' refers to the fact that the theory is not just the tensor product of 
rational theories whose total central charge adds up to zero.}
(seemingly) consistent bulk theory with a discrete spectrum at $c=0$
(there are earlier examples with continuous spectrum, see \cite{Saleur:2006tf}).
It would therefore be very  instructive to study its properties further. For example, it would be interesting 
to determine the relevant critical exponents, and to understand which correlators are
logarithmic, {\it etc}. In particular, this may give some insight into the structure of closely related
bulk theories, such as those describing polymers or percolation.

\section*{Acknowledgements}

We thank Alexei Davydov, J\"urgen Fuchs, Azat Gainutdinov, Volker Schomerus, Christoph Schwei\-gert 
and Akihiro Tsuchiya for useful conversations. 
The research of MRG and SW is partially supported by a grant from the Swiss
National Science Foundation.
The work of IR is supported by the German Science Foundation (DFG) 
within the Collaborative Research Center 676 ``Particles, Strings and the Early Universe''. 
MRG and IR thank the Erwin-Schr\"odinger-Institute in Vienna for hospitality during the workshop
``Quantum field theory in curved spacetimes and target-spaces'', where part of this work was completed.

\appendix

\section{\texorpdfstring{$\boldsymbol{\Wc_{2,3}}$}{W2,3}-representations}\label{app:W23-reps}

\subsection{Composition series}\label{sec:embedding}

Here we list the embedding diagrams for the projective covers $\Pc(0)$, 
$\Pc(1)={\cal R}^{(3)}(0,0,1,1)$, $\Pc(2)={\cal R}^{(3)}(0,0,2,2)$, $\Pc(5)={\cal R}^{(3)}(0,1,2,5)$ and 
$\Pc(7)={\cal R}^{(3)}(0,1,2,7)$. The structure of $\Pc(0)$ is explained in 
Appendix~\ref{app:P0-cover}, while the structure of the other projective covers was deduced from 
the conjectures of \cite{Rasmussen:2008ii} and from the analogy with quantum group representations
described in \cite[Sect.\,6.1]{Feigin:2006iv} and \cite[Sect.\,2.2]{Feigin:2006xa}.  
\begin{align}\label{eq:P0-diagram}
  \begin{tikzpicture}
    [baseline=(lev33),scale=1, vert/.style={inner sep=1,
      text centered,circle,anchor=base}]
    \node at (0,1) {$\Pc(0)$};
    \path
    (0,0) node[vert] (lev11) {0}
    ++(-0.5,-1) node[vert] (lev21) {1} edge[-] (lev11)
    ++(1,0) node[vert] (lev22) {2} edge[-] (lev11)
    ++(1.5,-1) node[vert] (lev35) {7} edge[-] (lev21) edge[-] (lev22)
    ++(-1,0) node[vert] (lev34) {7} edge[-] (lev21) edge[-] (lev22)
    ++(-1,0) node[vert] (lev33) {0} edge[-] (lev21) edge[-] (lev22)
    ++(-1,0) node[vert] (lev32) {5} edge[-] (lev21) edge[-] (lev22)
    ++(-1,0) node[vert] (lev31) {5} edge[-] (lev21) edge[-] (lev22)
    ++(1.5,-1) node[vert] (lev41) {2} edge[-] (lev31) edge[-] (lev32) edge[-] (lev33) edge[-] (lev34) edge[-] (lev35)
    ++(1,0) node[vert] (lev42) {1} edge[-] (lev31) edge[-] (lev32) edge[-] (lev33) edge[-] (lev34) edge[-] (lev35)
    ++(-0.5,-1) node[vert] (lev51) {0} edge[-] (lev41) edge[-] (lev42);
    \path
    (-3,1) node[vert] {Level}
    ++(0,-1) node[vert] {0}
    ++(0,-1) node[vert] {1}
    ++(0,-1) node[vert] {2}
    ++(0,-1) node[vert] {3}
    ++(0,-1) node[vert] {4};
  \end{tikzpicture}
\end{align}
\begin{align}
  \begin{tikzpicture}
    [vert/.style={inner sep=1, text centered,circle,anchor=base},baseline=(B)]
    \node at (0,1) {$\Pc(1)$};
    \path
    (0,0) node[vert] (A) {1}
    ++(-2,-1) node[vert] (B) {7} edge (A)
    ++(1,0) node[vert] (C) {7} edge (A)
    ++(1,0) node[vert] (D) {0} edge (A)
    ++(1,0) node[vert] (E) {5} edge (A)
    ++(1,0) node[vert] (F) {5} edge (A)
    ++(0.5,-1) node[vert] (G) {1} edge (F) edge (E) edge (D)
    ++(-1,0) node[vert] (H) {2} edge (F) edge (C)
    ++(-1,0) node[vert] (I) {2} edge (F) edge (D) edge (B)
    ++(-1,0) node[vert] (J) {2} edge (C) edge (D) edge (E)
    ++(-1,0) node[vert] (K) {2} edge (B) edge (E)
    ++(-1,0) node[vert] (L) {1} edge (B) edge (C) edge (D)
    ++(0.5,-1) node[vert] (M) {5} edge (L) edge (K) edge (J)
    ++(1,0) node[vert] (N) {5} edge (L) edge (I) edge (H)
    ++(1,0) node[vert] (O) {0} edge (L) edge (J) edge (I) edge (G)
    ++(1,0) node[vert] (P) {7} edge (K) edge (I) edge (G)
    ++(1,0) node[vert] (Q) {7} edge (J) edge (H) edge (G)
    ++(-2,-1) node[vert] (R) {1} edge (M) edge (N) edge (O) edge (P) edge (Q);
        \path
    (-3.5,1) node[vert] {Level}
    ++(0,-1) node[vert] {0}
    ++(0,-1) node[vert] {1}
    ++(0,-1) node[vert] {2}
    ++(0,-1) node[vert] {3}
    ++(0,-1) node[vert] {4};
  \end{tikzpicture}\qquad
  \begin{tikzpicture}
    [vert/.style={inner sep=1, text centered,circle,anchor=base},baseline=(B)]
    \node at (0,1) {$\Pc(2)$};
    \path
    (0,0) node[vert] (A) {2}
    ++(-2,-1) node[vert] (B) {5} edge (A)
    ++(1,0) node[vert] (C) {5} edge (A)
    ++(1,0) node[vert] (D) {0} edge (A)
    ++(1,0) node[vert] (E) {7} edge (A)
    ++(1,0) node[vert] (F) {7} edge (A)
    ++(0.5,-1) node[vert] (G) {2} edge (F) edge (E) edge (D)
    ++(-1,0) node[vert] (H) {1} edge (F) edge (C)
    ++(-1,0) node[vert] (I) {1} edge (F) edge (D) edge (B)
    ++(-1,0) node[vert] (J) {1} edge (C) edge (D) edge (E)
    ++(-1,0) node[vert] (K) {1} edge (B) edge (E)
    ++(-1,0) node[vert] (L) {2} edge (B) edge (C) edge (D)
    ++(0.5,-1) node[vert] (M) {7} edge (L) edge (K) edge (J)
    ++(1,0) node[vert] (N) {7} edge (L) edge (I) edge (H)
    ++(1,0) node[vert] (O) {0} edge (L) edge (J) edge (I) edge (G)
    ++(1,0) node[vert] (P) {5} edge (K) edge (I) edge (G)
    ++(1,0) node[vert] (Q) {5} edge (J) edge (H) edge (G)
    ++(-2,-1) node[vert] (R) {2} edge (M) edge (N) edge (O) edge (P) edge (Q);
  \end{tikzpicture}
\end{align}
\begin{align}\label{eq:P57-diagram}
  \begin{tikzpicture}
    [vert/.style={inner sep=1, text centered,circle,anchor=base},baseline=(B)]
    \node at (0,1) {$\Pc(5)$};
    \path
    (0,0) node[vert] (A) {5}
    ++(-2,-1) node[vert] (B) {2} edge (A)
    ++(0.8,0) node[vert] (C) {2} edge (A)
    ++(2.4,0) node[vert] (D) {1} edge (A)
    ++(0.8,0) node[vert] (E) {1} edge (A)
    ++(0.8,-1) node[vert] (F) {5} edge (E) edge (D)
    ++(-0.8,0) node[vert] (G) {7} edge (C) edge (E)
    ++(-0.8,0) node[vert] (H) {7} edge (C) edge (D)
    ++(-0.8,0) node[vert] (I) {0} edge (C) edge (E)
    ++(-0.8,0) node[vert] (J) {0} edge (B) edge (D)
    ++(-0.8,0) node[vert] (K) {7} edge (B) edge (E)
    ++(-0.8,0) node[vert] (L) {7} edge (B) edge (D)
    ++(-0.8,0) node[vert] (M) {5} edge (B) edge (C)
    ++(0.8,-1) node[vert] (N) {1} edge (M) edge (L) edge (J) edge (H)
    ++(0.8,0) node[vert] (O) {1} edge (M) edge (K) edge (I) edge (G)
    ++(2.4,0) node[vert] (P) {2} edge (L) edge (K) edge (J) edge (F)
    ++(0.8,0) node[vert] (Q) {2} edge (I) edge (H) edge (G) edge (F)
    ++(-2,-1) node[vert] (R) {5} edge (N) edge (O) edge (P) edge (Q);
            \path
    (-3.6,1) node[vert] {Level}
    ++(0,-1) node[vert] {0}
    ++(0,-1) node[vert] {1}
    ++(0,-1) node[vert] {2}
    ++(0,-1) node[vert] {3}
    ++(0,-1) node[vert] {4};
  \end{tikzpicture}\
  \begin{tikzpicture}
    [vert/.style={inner sep=1, text centered,circle,anchor=base},baseline=(B)]
    \node at (0,1) {$\Pc(7)$};
    \path
    (0,0) node[vert] (A) {7}
    ++(-2,-1) node[vert] (B) {1} edge (A)
    ++(0.8,0) node[vert] (C) {1} edge (A)
    ++(2.4,0) node[vert] (D) {2} edge (A)
    ++(0.8,0) node[vert] (E) {2} edge (A)
    ++(0.8,-1) node[vert] (F) {7} edge (E) edge (D)
    ++(-0.8,0) node[vert] (G) {5} edge (C) edge (E)
    ++(-0.8,0) node[vert] (H) {5} edge (C) edge (D)
    ++(-0.8,0) node[vert] (I) {0} edge (C) edge (E)
    ++(-0.8,0) node[vert] (J) {0} edge (B) edge (D)
    ++(-0.8,0) node[vert] (K) {5} edge (B) edge (E)
    ++(-0.8,0) node[vert] (L) {5} edge (B) edge (D)
    ++(-0.8,0) node[vert] (M) {7} edge (B) edge (C)
    ++(0.8,-1) node[vert] (N) {2} edge (M) edge (L) edge (J) edge (H)
    ++(0.8,0) node[vert] (O) {2} edge (M) edge (K) edge (I) edge (G)
    ++(2.4,0) node[vert] (P) {1} edge (L) edge (K) edge (J) edge (F)
    ++(0.8,0) node[vert] (Q) {1} edge (I) edge (H) edge (G) edge (F)
    ++(-2,-1) node[vert] (R) {7} edge (N) edge (O) edge (P) edge (Q);
  \end{tikzpicture}
\end{align}
Looking at the embedding diagrams from top to bottom, we say the 
representations at the top are at level zero, the representations in the second line from the top
are at level one, {\it etc}. (This is indicated in the above diagrams by the small number written
on the left.) 

Let us explain the meaning of the lines in these diagrams by taking $\Pc(2)$ as an example.
The embedding diagram for $\Pc(2)$ states that one can 
choose a direct sum decomposition of $\Pc(2)$ as a {\em vector space} as
\begin{equation}\label{a.4}
\Pc(2) \cong 2 \Wc(0) \oplus 4 \Wc(1) \oplus 4 \Wc(2) \oplus 4 \Wc(5) \oplus 4 \Wc(7) \ .
\end{equation}
Furthermore, the action of a mode in $\Wc_{2,3}$ on a state in 
$\Wc(h)$ gives a vector in the direct sum of the given $\Wc(h)$ and all 
summands joined to it by downward directed lines (in one or more steps). In particular,
these diagrams are therefore in general 
{\em not} unique. For example, there must also exist a different direct sum decomposition of $\Pc(2)$ 
as in (\ref{a.4}) such that 
$\Wc(0)$ at level 1 is joined only to one $\Wc(1)$ and one $\Wc(2)$ at level $2$. 
(This has to be possible for there to be an intertwiner 
$\Pc(0) \rightarrow \Pc(2)$ whose image contains $\Wc(0) \subset \Pc(2)$ at level $1$.)

For completeness we also give the embedding diagrams for the other indecomposable
representations that appear in our analysis:
\begin{align}
  \begin{tabular}[t]{cc}
\begin{tikzpicture}
  [vert/.style={inner sep=1, text centered,circle,anchor=base},minimum width=0.5cm,baseline=(A)]
  \path
  (0,0) node[vert] (A) {$a$}
  ++(-1,-1) node[vert] (B) {$b$} edge (A)
  ++(2,0) node[vert] (C) {$b$} edge (A)
  ++(-1,-1) node[vert] (D) {$a$} edge (B) edge (C);
\end{tikzpicture}
&
    $
    \begin{array}[t]{rlrl}
      \mathcal{R}^{(2)}(\frac13,\frac13):& a=\frac13,\ b=\frac{10}{3}&      \mathcal{R}^{(2)}(2,7):& a=7,\ b=2\\
      \mathcal{R}^{(2)}(\frac13,\frac{10}{3}):& a=\frac{10}{3},\ b=\frac13&      \mathcal{R}^{(2)}(2,5):& a=5,\ b=2\\
      \mathcal{R}^{(2)}(\frac58,\frac58):& a=\frac58,\ b=\frac{21}{8}&      \mathcal{R}^{(2)}(1,7):& a=7,\ b=1\\
      \mathcal{R}^{(2)}(\frac58,\frac{21}{8}):& a=\frac{21}{8},\ b=\frac{5}{8}&      \mathcal{R}^{(2)}(1,5):& a=5,\ b=1\\
      \mathcal{R}^{(2)}(\frac{1}{8},\frac{1}{8}):& a=\frac{1}{8},\ b=\frac{33}{8}\\
      \mathcal{R}^{(2)}(\frac{1}{8},\frac{33}{8}):& a=\frac{33}{8},\ b=\frac{1}{8}\\
    \end{array}
    $
  \end{tabular}
\end{align}
and
\begin{align}
  \begin{tabular}[t]{cc}
\begin{tikzpicture}
  [vert/.style={inner sep=1, text centered,circle,anchor=base,minimum width=0.5cm},baseline=(A)]
  \path
  (0,0) node[vert] (A) {$a$}
  ++(-1,-1) node[vert] (B) {$b$} edge (A)
  ++(1,0) node[vert] (0) {$0$} edge (A)
  ++(1,0) node[vert] (C) {$b$} edge (A)
  ++(-1,-1) node[vert] (D) {$a$} edge (B) edge (C) edge (0);
\end{tikzpicture}
    &
    $
    \begin{array}[t]{cc}
      \mathcal{R}^{(2)}(0,2)_7:& a=2,\ b=7\\
      \mathcal{R}^{(2)}(0,2)_5:& a=2,\ b=5\\
      \mathcal{R}^{(2)}(0,1)_7:& a=1,\ b=7\\
      \mathcal{R}^{(2)}(0,1)_5:& a=1,\ b=5
    \end{array}
    $
  \end{tabular}
\end{align}
For the analysis of the Ishibashi states we also need to introduce some notation for the 
various intertwiners. We say an intertwiner $e:\Pc(a)\rightarrow \Pc(b)$ 
has degree $l$ if it adds $l$ to the level of the states, {\it i.e.}\ if it maps the representation
$\Wc(a)$ at level zero to a representation at level $l$ in $\Pc(b)$. 
We denote such an intertwiner by $e^{a\rightarrow b}_{l}$. 
Sometimes, this information does not specify an intertwiner uniquely; for example, there are two
intertwiners of degree $1$ corresponding to $e^{1\rightarrow 5}_{1}$. Whenever an intertwiner 
is not uniquely specified in this manner, we introduce an additional label,
{\it i.e.}\ we write $e^{1\rightarrow 5}_{1;\alpha}$ where $\alpha$ takes two values. In the
present context, it is natural to take $\alpha\in\{L,R\}$, where $e^{1\rightarrow 5}_{1;R}$
is the intertwiner that maps the circled nodes in the following diagram to one another,
\begin{align}
  \begin{tikzpicture}
    [vert/.style={inner sep=1, text centered,circle,anchor=base},baseline=(GL)]
    \path
    (0,0) node[vert,draw] (AL) {1}
    ++(-2,-1) node[vert,draw] (BL) {7} edge (AL)
    ++(1,0) node[vert,draw] (CL) {7} edge (AL)
    ++(1,0) node[vert,draw] (DL) {0} edge (AL)
    ++(1,0) node[vert,draw] (EL) {5} edge (AL)
    ++(1,0) node[vert] (FL) {5} edge (AL)
    ++(0.5,-1) node[vert] (GL) {1} edge (FL) edge (EL) edge (DL)
    ++(-1,0) node[vert] (HL) {2} edge (FL) edge (CL)
    ++(-1,0) node[vert] (IL) {2} edge (FL) edge (DL) edge (BL)
    ++(-1,0) node[vert,draw] (JL) {2} edge (CL) edge (DL) edge (EL)
    ++(-1,0) node[vert,draw] (KL) {2} edge (BL) edge (EL)
    ++(-1,0) node[vert,draw] (LL) {1} edge (BL) edge (CL) edge (DL)
    ++(0.5,-1) node[vert,draw] (ML) {5} edge (LL) edge (KL) edge (JL)
    ++(1,0) node[vert] (NL) {5} edge (LL) edge (IL) edge (HL)
    ++(1,0) node[vert] (OL) {0} edge (LL) edge (JL) edge (IL) edge (GL)
    ++(1,0) node[vert] (PL) {7} edge (KL) edge (IL) edge (GL)
    ++(1,0) node[vert] (QL) {7} edge (JL) edge (HL) edge (GL)
    ++(-2,-1) node[vert] (RL) {1} edge (ML) edge (NL) edge (OL) edge (PL) edge (QL);
    \path
    (8,0) node[vert] (AR) {5}
    ++(-2,-1) node[vert] (BR) {2} edge (AR)
    ++(0.8,0) node[vert] (CR) {2} edge (AR)
    ++(2.4,0) node[vert] (DR) {1} edge (AR)
    ++(0.8,0) node[vert,draw] (ER) {1} edge (AR)
    ++(0.8,-1) node[vert,draw] (FR) {5} edge (ER) edge (DR)
    ++(-0.8,0) node[vert,draw] (GR) {7} edge (CR) edge (ER)
    ++(-0.8,0) node[vert] (HR) {7} edge (CR) edge (DR)
    ++(-0.8,0) node[vert,draw] (IR) {0} edge (CR) edge (ER)
    ++(-0.8,0) node[vert] (JR) {0} edge (BR) edge (DR)
    ++(-0.8,0) node[vert,draw] (KR) {7} edge (BR) edge (ER)
    ++(-0.8,0) node[vert] (LR) {7} edge (BR) edge (DR)
    ++(-0.8,0) node[vert] (MR) {5} edge (BR) edge (CR)
    ++(0.8,-1) node[vert] (NR) {1} edge (MR) edge (LR) edge (JR) edge (HR)
    ++(0.8,0) node[vert,draw] (OR) {1} edge (MR) edge (KR) edge (IR) edge (GR)
    ++(2.4,0) node[vert,draw] (PR) {2} edge (LR) edge (KR) edge (JR) edge (FR)
    ++(0.8,0) node[vert,draw] (QR) {2} edge (IR) edge (HR) edge (GR) edge (FR)
    ++(-2,-1) node[vert,draw] (RR) {5} edge (NR) edge (OR) edge (PR) edge (QR);
    \draw[->,shorten >=3pt,shorten <=3pt,>=latex] (AL) to [out=15,in=135] (ER) node [near start,above] {$e^{1\rightarrow 5}_{1;R}$};
  \end{tikzpicture}
\end{align}
and similarly for $e^{1\rightarrow 5}_{1;L}$. 
For the case of  the intertwiners from $\Pc(h)\rightarrow \Pc(h)$ with $h\in\{1,2,5,7\}$, it is 
convenient to  label the different intertwiners at level $2$ by the representations to which the image 
node at level $2$ is connected, {\it e.g.}\ the intertwiner $e^{1\rightarrow 1}_{2;5}$ maps the circled
nodes to one another, {\it etc.}
\begin{align}
  \begin{tikzpicture}
    [vert/.style={inner sep=1, text centered,circle,anchor=base},baseline=(Gl)]
    \path
    (0,0) node[vert,draw] (Al) {1}
    ++(-2,-1) node[vert] (Bl) {7} edge (Al)
    ++(1,0) node[vert] (Cl) {7} edge (Al)
    ++(1,0) node[vert,draw] (Dl) {0} edge (Al)
    ++(1,0) node[vert,draw] (El) {5} edge (Al)
    ++(1,0) node[vert,draw] (Fl) {5} edge (Al)
    ++(0.5,-1) node[vert,draw] (Gl) {1} edge (Fl) edge (El) edge (Dl)
    ++(-1,0) node[vert] (Hl) {2} edge (Fl) edge (Cl)
    ++(-1,0) node[vert] (Il) {2} edge (Fl) edge (Dl) edge (Bl)
    ++(-1,0) node[vert] (Jl) {2} edge (Cl) edge (Dl) edge (El)
    ++(-1,0) node[vert] (Kl) {2} edge (Bl) edge (El)
    ++(-1,0) node[vert] (Ll) {1} edge (Bl) edge (Cl) edge (Dl)
    ++(0.5,-1) node[vert] (Ml) {5} edge (Ll) edge (Kl) edge (Jl)
    ++(1,0) node[vert] (Nl) {5} edge (Ll) edge (Il) edge (Hl)
    ++(1,0) node[vert] (Ol) {0} edge (Ll) edge (Jl) edge (Il) edge (Gl)
    ++(1,0) node[vert] (Pl) {7} edge (Kl) edge (Il) edge (Gl)
    ++(1,0) node[vert] (Ql) {7} edge (Jl) edge (Hl) edge (Gl)
    ++(-2,-1) node[vert] (Rl) {1} edge (Ml) edge (Nl) edge (Ol) edge (Pl) edge  (Ql);
    \path
    (8,0) node[vert] (Ar) {1}
    ++(-2,-1) node[vert] (Br) {7} edge (Ar)
    ++(1,0) node[vert] (Cr) {7} edge (Ar)
    ++(1,0) node[vert] (Dr) {0} edge (Ar)
    ++(1,0) node[vert] (Er) {5} edge (Ar)
    ++(1,0) node[vert] (Fr) {5} edge (Ar)
    ++(0.5,-1) node[vert] (Gr) {1} edge (Fr) edge (Er) edge (Dr)
    ++(-1,0) node[vert] (Hr) {2} edge (Fr) edge (Cr)
    ++(-1,0) node[vert] (Ir) {2} edge (Fr) edge (Dr) edge (Br)
    ++(-1,0) node[vert] (Jr) {2} edge (Cr) edge (Dr) edge (Er)
    ++(-1,0) node[vert] (Kr) {2} edge (Br) edge (Er)
    ++(-1,0) node[vert,draw] (Lr) {1} edge (Br) edge (Cr) edge (Dr)
    ++(0.5,-1) node[vert,draw] (Mr) {5} edge (Lr) edge (Kr) edge (Jr)
    ++(1,0) node[vert,draw] (Nr) {5} edge (Lr) edge (Ir) edge (Hr)
    ++(1,0) node[vert,draw] (Or) {0} edge (Lr) edge (Jr) edge (Ir) edge (Gr)
    ++(1,0) node[vert] (Pr) {7} edge (Kr) edge (Ir) edge (Gr)
    ++(1,0) node[vert] (Qr) {7} edge (Jr) edge (Hr) edge (Gr)
    ++(-2,-1) node[vert,draw] (Rr) {1} edge (Mr) edge (Nr) edge (Or) edge (Pr) edge (Qr);
    \draw[->,shorten >=3pt,shorten <=3pt,>=latex] (Al) to [out=0,in=120] (Lr) node [midway,above] {$e^{1\rightarrow 1}_{2;5}$};
  \end{tikzpicture}
\end{align}
These diagrams then allow one to compute the composition of intertwiners. For example
we have 
$e^{1\rightarrow 5}_{1;R} \circ e^{1\rightarrow 1}_{2;5} = e^{1\rightarrow 5}_{3;R}$, where 
$e^{1\rightarrow 5}_{3;R}$ maps the top $\Wc(1)$ of $\Pc(1)$ to the `right' $\Wc(1)$ at level $3$ of $\Pc(5)$. 
By the same reasoning, $e^{1\rightarrow 7}_{1;L/R} \circ e^{1\rightarrow 1}_{2;5} = 0$.

\subsection{Modular transformations}\label{sec:S-matrix}

In this appendix we briefly review the modular $S$-matrix of the $\Wc_{2,3}$-model following
\cite{Semikhatov:2007qp}\footnote{There 
are two small typos in \cite{Semikhatov:2007qp} ({\tt arXiv v2}): in (2.5) there should be an 
additional $(-1)^{p p'}$ in front of $\chi_{s,s'}^-$, and in the expression for 
$\widetilde{{\cal S}}_{r,r';s,s'}^+(\tau)$, the summand $-2 i \pi p p' \tau$ should read $- i p p' \tau / (2\pi)$.}.
To keep the notation for the $S$-modular transformation of the 
characters of all irreducible 
representations compact, we will label each character by the position of its corresponding
irreducible representation in the Kac table \eqref{eq:W23-irreps}. Because each cell of the
Kac table contains two or three entries, we will denote the character of the right most
irreducible representation in the cell by $\chi_{(r,s,-)}$ and the second to right 
representations $\chi_{(r,s,+)}$. The two cells in the interior of the Kac table both contain
an additional weight 0 representation; this is just $\Wc(0)$ and we will continue to refer
to the character of $\Wc(0)$ by $\chi_{\Wc(0)}$.

The $S$-modular transformation of the characters is then, for $\varepsilon = \pm$,
\begin{align}
  \chi_{\Wc(0)}(-\tfrac{1}{\tau})&=\chi_{\Wc(0)}(\tau) \nonumber\\
  \chi_{(\rho,\sigma,\varepsilon)}(-\tfrac{1}{\tau})&= \widetilde\Sc_{\rho,\sigma}^\varepsilon(\tau)\chi_{\Wc(0)}(\tau) +
  \sum_{r=1}^2\sum_{s=1}^3 \varepsilon^{r}
  \Sc_{\rho,\sigma;r,s}(\tau) \Big(\chi_{(r,s,+)}(\tau)+(-1)^{\rho}\chi_{(r,s,-)}(\tau) \Big)
\end{align}
where $\rho \in \{1,2\}$, $\sigma \in \{1,2,3\}$ and the $\tau$ dependant $S$-matrix coefficients
are given by
\begin{align}
  \Sc_{\rho,\sigma;1,s}(\tau)&=\frac{1}{3\sqrt{3}}(-1)^{\rho s+\sigma}
  \left(\rho\cos\tfrac{\pi 3\rho}{2}-i\tau \sin\tfrac{\pi 3\rho}{2})
  \right) \,
  \left(\sigma\cos\tfrac{\pi 2\sigma s}{3}-i\tau(3{-}s)
    \sin\tfrac{\pi 2 \sigma s}{3}\right)\,,&s \in \{1,2\} \nonumber\\
  \Sc_{\rho,\sigma;1,3}(\tau)&=\frac{\sigma}{6\sqrt{3}}
  (-1)^{\sigma+\rho}\left(\rho \cos\tfrac{\pi 3\rho }{2}
    -i\tau \sin\tfrac{\pi 3\rho }{2}\right)\,, \nonumber\\
  \Sc_{\rho,\sigma;2,s}(\tau)&=\frac{\rho}{6\sqrt{3}}
  (-1)^{(s+1)\rho}\left( \sigma \cos\tfrac{\pi 2\sigma s}{3}
    -i\tau(3{-}s)\sin\tfrac{\pi 2\sigma s}{3}\right)\,,&s \in \{1,2\} \nonumber\\
  \Sc_{\rho,\sigma;2,3}&=\frac{\rho\sigma}{12\sqrt{3}} \nonumber
\end{align}
\begin{align}
  \widetilde\Sc_{\rho,\sigma}^+(\tau)&=(-1)^{\rho+\sigma}\frac{1}{36\sqrt{3}}
  \bigg(6\rho \sigma\cos\tfrac{\pi 3\rho}{2}\cos\tfrac{\pi 2 \sigma}{3}
-i3\rho\tau\cos\tfrac{\pi 3\rho}{2}\sin\tfrac{\pi 2\sigma}{3}
  +i2\sigma\tau\cos\tfrac{\pi 2\sigma}{3}\sin\tfrac{\pi 3\rho}{2} \nonumber\\
  &\hspace{15em}\left.\phantom{=}+\left(\frac{1}{2}\tau^2-\frac{i3\tau}{\pi}
    +\frac{4\sigma^2+9\rho^2}{2}\right)
  \sin\tfrac{\pi 3\rho}{2}\sin\tfrac{\pi 2\sigma}{3}\right) \nonumber\\
  \widetilde\Sc^-_{\rho,\sigma}&=-\widetilde\Sc_{\rho,\sigma}^+(\tau)
  +(-1)^{\rho+\sigma }\frac{1}{\sqrt{12}}
  \sin\tfrac{\pi 3 \rho}{2}\sin\tfrac{\pi 2\sigma}{3}\ .
\end{align}

\section{Projective cover of \texorpdfstring{$\boldsymbol{\Wc(0)}$}{W(0)}}
\label{app:P0-cover}

Given the results of \cite[Thm.\,3.24]{Huang:2007mj} it seems plausible that $\Wc(0)$ must
have a projective cover $\Pc(0)$. In this appendix we want to deduce the structure of $\Pc(0)$,
assuming its existence, and using input from a recent calculation of Zhu's algebra \cite{AM}. 
For the following it will be convenient to introduce the abbreviation 
\be
  \big\langle U , V \big\rangle = \dim\Hom(U,V) \ ,
\ee
where $U$ and $V$ are two $\Wc$-representations. As was argued in \cite[Sect.\,3.1]{Gaberdiel:2009ug}
we have isomorphisms $\pi_{U,V} : \Hom(U,V^*) \rightarrow \Hom(U \otimes_f V, \Wc^*)$ that are 
natural in $U$ and $V$ and thus
\be\label{eq:P0-fusion-calc-conditions}
\big\langle U , V^* \big\rangle = 
  \big\langle U \otimes_f V , \Wc^* \big\rangle  \ .
\ee
Finally, we note that  $\dim\Hom(\Pc(h),U)$ gives the number of occurrences of $\Wc(h)$ 
in the composition series of $U$, {\it i.e.}
\begin{equation}\label{eq:P0-fusion-calc-conditions1}
  \big\langle \Pc(h) , U \big\rangle = \big( \text{$\#$ of $\Wc(h)$ in comp.\ series of $U$}\big)
\end{equation}

\subsection{Fusion rules} \label{app:P0xP0}

Even without knowledge of the structure of $\Pc(0)$, we can
partially deduce its fusion rules. 

\subsubsection*{Fusion with $\Wc(0)$}

It was argued in \cite[Sect.\,2]{Gaberdiel:2009ug}, the fusion of $\Wc(0)$ with any 
$\Wc$-representation can only produce multiples of $\Wc(0)$. Thus 
$\Pc(0) \otimes_f \Wc(0) \cong n \Wc(0)$ for some $n \ge 0$.  Because of 
$\dim\Hom(\Pc(0),\Wc(0))=1$, it follows from \eqref{eq:P0-fusion-calc-conditions} and
\eqref{eq:P0-fusion-calc-conditions1} that 
\be
  1 
  = \big\langle \Pc(0) , \Wc(0) \big\rangle 
  = \big\langle \Pc(0) \otimes_f \Wc(0) , \Wc^* \big\rangle 
  = n \big\langle  \Wc(0) , \Wc^* \big\rangle 
  = n \ .
\ee  
Thus we conclude that 
\be
  \Pc(0) \otimes_f \Wc(0) \cong \Wc(0) \ .
\labl{eq:fus-P0-W0}

\subsubsection*{Fusion with representations $U$ that have a dual}

Given an object $U$ in a monoidal category $\Cc$, we call an object 
$U^\vee \in \Cc$ a {\em (left and right) dual} 
of $U$ iff there are intertwiners
\begin{align}
  \label{eq:duality-maps}
  b_U &: \one \rightarrow U \otimes_f U^\vee \ ,& 
  \tilde b_U &: \one \rightarrow U^\vee \otimes_f U \ ,\\
  d_U &: U^\vee \otimes_f U \rightarrow \one  \ ,&
  \tilde d_U &: U \otimes_f U^\vee \rightarrow \one  \ ,\nonumber
\end{align}
subject to the usual conditions, see {\it e.g.}\ \cite[Def.\,2.1.1]{BaKi-book}.
We want to deduce the fusion rules of $\Pc(0)$ with representations that have a dual. 
Our analysis is based on the following result 
(see {\it e.g.}\ \cite[p.\,441,\,Cor.\,1\,\&\,2]{Kazhdan:1994}).

\begin{Lem} \label{lem:fus-proj}
Let $\Cc$ be a monoidal category and let  $U \in \Cc$ have (left and right) dual $U^\vee$.
\\
(i) If $f : X \rightarrow Y$ is epi, so are $f \otimes \id_U$ and $\id_U \otimes f$.
\\
(ii) If $P \in \Cc$ is projective, so are $P \otimes U$ and $U \otimes P$.
\end{Lem}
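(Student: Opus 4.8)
The plan is to deduce both parts from the single structural fact that a two-sided dual $U^\vee$ makes tensoring with $U$ into a functor with both a left and a right adjoint, and then to run every argument through the resulting \emph{natural Hom-isomorphisms} rather than through any (co)limits, so that nothing beyond the bare monoidal structure of $\Cc$ is used. Concretely, from the data $b_U:\one\to U\otimes U^\vee$ and $d_U:U^\vee\otimes U\to\one$ of \eqref{eq:duality-maps} together with the zig-zag identities, one builds for all $Y,Z\in\Cc$ a bijection
\[
  \Phi_{Y,Z}:\Hom(Y\otimes U,Z)\xrightarrow{\ \cong\ }\Hom(Y,Z\otimes U^\vee),\qquad
  \Phi_{Y,Z}(g)=(g\otimes\id_{U^\vee})\circ(\id_Y\otimes b_U),
\]
whose inverse sends $k\mapsto(\id_Z\otimes d_U)\circ(k\otimes\id_U)$; this is the standard adjunction $(-\otimes U)\dashv(-\otimes U^\vee)$, and $\Phi$ is natural in both $Y$ and $Z$. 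The maps $\tilde b_U,\tilde d_U$ furnish the mirror isomorphism $\Hom(U\otimes Y,Z)\cong\Hom(Y,U^\vee\otimes Z)$ for the left-handed statements; I would only spell out the right-handed case, the left one being verbatim identical.

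For part (i), I would argue that $f\otimes\id_U$ is epi purely formally. Suppose $g\circ(f\otimes\id_U)=h\circ(f\otimes\id_U)$ for $g,h:Y\otimes U\to Z$. Naturality of $\Phi$ in the first variable gives $\Phi(g\circ(f\otimes\id_U))=\Phi(g)\circ f$ and likewise for $h$, so $\Phi(g)\circ f=\Phi(h)\circ f$. Since $f$ is epi this forces $\Phi(g)=\Phi(h)$, and since $\Phi$ is a bijection, $g=h$. Hence $f\otimes\id_U$ is epi, and the statement for $\id_U\otimes f$ follows from the mirror isomorphism.

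For part (ii), recall that $P$ projective means exactly that $\Hom(P,-)$ carries epimorphisms to surjections of hom-sets (the lifting property against epis). To test projectivity of $P\otimes U$, I would fix an epi $f:X\to Y$ and consider $f_*:\Hom(P\otimes U,X)\to\Hom(P\otimes U,Y)$. Naturality of $\Phi$ in the second variable identifies this, through the bijections $\Phi_{P,X}$ and $\Phi_{P,Y}$, with $(f\otimes\id_{U^\vee})_*:\Hom(P,X\otimes U^\vee)\to\Hom(P,Y\otimes U^\vee)$. The key input is now part (i) applied to the object $U^\vee$: since $U^\vee$ is a two-sided dual of $U$, the object $U$ is in turn a two-sided dual of $U^\vee$ (read the four maps of \eqref{eq:duality-maps} with the roles of $U$ and $U^\vee$ interchanged), so part (i) gives that $f\otimes\id_{U^\vee}$ is again epi. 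Projectivity of $P$ then makes $(f\otimes\id_{U^\vee})_*$ surjective, and transporting back along the bijections shows $f_*$ is surjective. As $f$ was arbitrary, $P\otimes U$ is projective; $U\otimes P$ is handled by the mirror isomorphism.

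The only genuine care needed — and the step I would treat as the main obstacle — is the bookkeeping of the duality conventions: verifying that $b_U,d_U$ (and not $\tilde b_U,\tilde d_U$) are the maps entering $\Phi$, that the zig-zag identities really render $\Phi$ and the proposed formula mutually inverse, and that the passage "$U$ is a dual of $U^\vee$" is legitimate. Everything else is a formal consequence of the naturality and bijectivity of $\Phi$, which is exactly why the argument requires no hypotheses on $\Cc$ beyond the existence of the dual $U^\vee$.
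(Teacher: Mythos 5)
Your proof is correct and is essentially the paper's own argument: your adjunction isomorphism $\Phi$ is exactly the map $C(k)=(k\otimes\id_{U^\vee})\circ(\id_Y\otimes b_U)$ used in the paper, part (i) is the same epi-cancellation through $C$, and part (ii) likewise hinges on applying (i) to $U^\vee$ (using that $U$ is a dual of $U^\vee$) and then lifting against $f\otimes\id_{U^\vee}$ via projectivity of $P$ --- the paper merely writes the resulting lift $h'=(\id_X\otimes d_U)\circ(q\otimes\id_U)$ explicitly where you phrase it as surjectivity of the transported $\Hom$-map. The only cosmetic difference is that you package the duality data as the adjunction $(-\otimes U)\dashv(-\otimes U^\vee)$ with its naturality, while the paper verifies the same identities by hand from the zig-zag relations.
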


\begin{proof}
Part (i): We will show that $f \otimes \id_U$ is epi, the other case follows analogously. 
We need to check that for any two morphisms $a,b : Y \otimes U \rightarrow Z$, 
$a \circ (f \otimes \id_U) = b \circ (f \otimes \id_U)$ implies $a=b$. For 
$k : Y \otimes U \rightarrow Z$ define $C(k) : Y \rightarrow Z \otimes U^\vee$ as 
(we do not write out associators and unit isomorphisms)
\be
  C(k) = (k \otimes \id_{U^\vee}) \circ (\id_Y \otimes b_U) \ .
\ee
From the properties of duality morphisms it is easy to see that 
$a \circ (f \otimes \id_U) = b \circ (f \otimes \id_U)$ implies $C(a) \circ f = C(b) \circ f$. 
By assumption, $f$ is epi, so that $C(a) = C(b)$. Again by properties of duality maps, 
this in turn implies $a=b$.

\medskip\noindent
Part (ii): We will show that $P \otimes U$ is projective, the other case follows analogously. 
We have to check that for any epi $f : X \rightarrow Y$ and any 
$h : P \otimes U \rightarrow Y$ there is a $h' : P \otimes U \rightarrow X$ such that $h = f \circ h'$. 
Note that since $U$ has left and right dual $U^\vee$, the left and right dual of 
$U^\vee$ can be chosen to be $U$. Applying part (i) to $U^\vee$ we conclude that 
$f \otimes \id_{U^\vee}$ is epi. Given the map $C(h) : P \rightarrow Y \otimes U^\vee$, since 
$P$ is projective there exists a map $q : P \rightarrow X \otimes U^\vee$ such that 
$C(h) = (f \otimes \id_{U^\vee}) \circ q$. The map
\be
  h' = (\id_X \otimes d_U) \circ (q \otimes \id_U)
\ee
satisfies $h = f \circ h'$, as can again easily be seen using the properties of duality maps.
\end{proof}

The map $C$ used in the above proof shows also that
\be
  \text{$U$ has a dual}
  \quad \Rightarrow \quad
  \big\langle A \otimes U , B \big\rangle = 
  \big\langle A , B \otimes U^\vee \big\rangle 
  \text{ for all $A,B$} \ .
\ee
Of the 35 indecomposable representations listed in \eqref{eq:W23-irreps} and \eqref{eq:indec-W-rep}, all representations not in grey boxes have a dual, as do $\Wc$ and $\Qc$. Let us collect these 28 representations in a set $\mathcal{D}$. By Lemma \ref{lem:fus-proj}, for any $U \in \mathcal{D}$, 
$\Pc(0) \otimes_f U$ is projective, that is, we can write
\be
  \Pc(0) \otimes_f U \cong \bigoplus_h n_h(U) \, \Pc(h) \ ,
\labl{eq:P0xU-direct-sum-1}
where the sum runs over the 13 values that $h$ takes for the irreducible representations $\Wc(h)$ according to \eqref{eq:W23-irreps}. The multiplicities $n_h(U)$ are obtained via
\be
  n_h(U) 
  = \big\langle \Pc(0) \otimes_f U , \Wc(h) \big\rangle 
  = \big\langle \Pc(0) , \Wc(h) \otimes_f U^\vee \big\rangle \ ,  
\labl{eq:P0xU-direct-sum-2}
where the right hand side can be computed form the fusion rules in 
\cite[App.\,A.4]{Gaberdiel:2009ug} and the composition series given 
Appendix~\ref{sec:embedding}. Since $\Wc(0) \otimes_f U^\vee = 0$ for all 
$U \in \mathcal{D}$, we conclude $n_0(U) = 0$. The other multiplicities are 
obtained case by case, for example
\be
  n_1(\Wc(\tfrac13)) 
  = \big\langle \Pc(0) , \Wc(1) \otimes_f \Wc(\tfrac13) \big\rangle   
  = \big\langle \Pc(0) , \Rc^{(2)}(0,1)_7 \big\rangle  = 1 \ .
\ee
Altogether, this gives
\begin{align}
  \label{eq:P0x(with-dual)}
  \Wc(\tfrac13)\otimes_{f}\Pc(0) &= \Pc(\tfrac{1}{3})\oplus\Pc(1)\\
  \Wc(\tfrac{10}3)\otimes_{f}\Pc(0) &= \Pc(\tfrac{10}{3})\oplus\Pc(5)\nonumber\\
  \Wc(\tfrac58)\otimes_{f}\Pc(0) &= \Pc(\tfrac{1}{8})\oplus\Pc(\tfrac{5}{8})\nonumber\\
  \Wc(\tfrac{33}8)\otimes_{f}\Pc(0) &= \Pc(\tfrac{33}{8})\oplus\Pc(\tfrac{21}{8})\nonumber\\
  \Wc(\tfrac18)\otimes_{f}\Pc(0) &= \Pc(\tfrac{1}{8})\oplus\Pc(\tfrac{5}{8})\oplus2\,\Wc(\tfrac{-1}{24})\oplus2\,\Wc(\tfrac{35}{24})\nonumber\\
  \Wc(\tfrac{21}8)\otimes_{f}\Pc(0) &= \Pc(\tfrac{33}{8})\oplus\Pc(\tfrac{21}{8})\oplus2\,\Wc(\tfrac{-1}{24})\oplus2\,\Wc(\tfrac{35}{24})\nonumber\\
  \Wc(\tfrac{-1}{24})\otimes_{f}\Pc(0) &= 2\,\Pc(\tfrac{1}{8})\oplus2\,\Pc(\tfrac{21}{8})\oplus2\,\Wc(\tfrac{-1}{24})\oplus2\,\Wc(\tfrac{35}{24})\nonumber\\
  \Wc(\tfrac{35}{24})\otimes_{f}\Pc(0) &= 2\,\Pc(\tfrac{1}{8})\oplus2\,\Pc(\tfrac{21}{8})\oplus2\,\Wc(\tfrac{-1}{24})\oplus2\,\Wc(\tfrac{35}{24})\nonumber\\
  \Rc^{(2)}(0,2)_7\otimes_{f}\Pc(0) &= \Pc(1)\oplus\Pc(2)\nonumber\\
  \Rc^{(2)}(2,7)\otimes_{f}\Pc(0) &= \Pc(5)\oplus\Pc(7)\nonumber\\
  \Rc^{(2)}(0,1)_5\otimes_{f}\Pc(0) &= 2\,\Pc(\tfrac{10}{3})\oplus2\,\Pc(\tfrac{1}{3})\oplus\Pc(1)\oplus\Pc(2)\nonumber\\
  \Rc^{(2)}(1,5)\otimes_{f}\Pc(0) &= 2\,\Pc(\tfrac{10}{3})\oplus2\,\Pc(\tfrac{1}{3})\oplus\Pc(5)\oplus\Pc(7)\nonumber\\
  \Rc^{(2)}(0,2)_5\otimes_{f}\Pc(0) &= 2\,\Pc(\tfrac{10}{3})\oplus\Pc(1)\oplus\Pc(2)\nonumber\\
  \Rc^{(2)}(2,5)\otimes_{f}\Pc(0) &= 2\,\Pc(\tfrac{10}{3})\oplus\Pc(5)\oplus\Pc(7)\nonumber\\
  \Rc^{(2)}(0,1)_7\otimes_{f}\Pc(0) &= 2\,\Pc(\tfrac{1}{3})\oplus\Pc(1)\oplus\Pc(2)\nonumber\\
  \Rc^{(2)}(1,7)\otimes_{f}\Pc(0) &= 2\,\Pc(\tfrac{1}{3})\oplus\Pc(5)\oplus\Pc(7)\nonumber\\
  \Pc(\tfrac{1}{3})\otimes_{f}\Pc(0) &= 2\,\Pc(\tfrac{10}{3})\oplus2\,\Pc(\tfrac{1}{3})\oplus2\,\Pc(1)\oplus2\,\Pc(5)\nonumber\\
  \Pc(\tfrac{10}{3})\otimes_{f}\Pc(0) &= 2\,\Pc(\tfrac{10}{3})\oplus2\,\Pc(\tfrac{1}{3})\oplus2\,\Pc(1)\oplus2\,\Pc(5)\nonumber\\
  \Pc(\tfrac{5}{8})\otimes_{f}\Pc(0) &= 2\,\Pc(\tfrac{1}{8})\oplus2\,\Pc(\tfrac{33}{8})\oplus2\,\Pc(\tfrac{21}{8})\oplus2\,\Pc(\tfrac{5}{8})
    \oplus4\,\Wc(\tfrac{-1}{24})\oplus4\,\Wc(\tfrac{35}{24})\nonumber\\
  \Pc(\tfrac{21}{8})\otimes_{f}\Pc(0) &= 2\,\Pc(\tfrac{1}{8})\oplus2\,\Pc(\tfrac{33}{8})\oplus2\,\Pc(\tfrac{21}{8})\oplus2\,\Pc(\tfrac{5}{8})
  \oplus4\,\Wc(\tfrac{-1}{24})\oplus4\,\Wc(\tfrac{35}{24})\nonumber\\
  \Pc(\tfrac{1}{8})\otimes_{f}\Pc(0) &= 2\,\Pc(\tfrac{1}{8})\oplus2\,\Pc(\tfrac{33}{8})\oplus2\,\Pc(\tfrac{21}{8})\oplus2\,\Pc(\tfrac{5}{8})
  \oplus4\,\Wc(\tfrac{-1}{24})\oplus4\,\Wc(\tfrac{35}{24})\nonumber\\
  \Pc(\tfrac{33}{8})\otimes_{f}\Pc(0) &= 2\,\Pc(\tfrac{1}{8})\oplus2\,\Pc(\tfrac{33}{8})\oplus2\,\Pc(\tfrac{21}{8})\oplus2\,\Pc(\tfrac{5}{8})
  \oplus4\,\Wc(\tfrac{-1}{24})\oplus4\,\Wc(\tfrac{35}{24})\nonumber\\
  \Pc(1)\otimes_{f}\Pc(0) &= 4\,\Pc(\tfrac{10}{3})\oplus4\,\Pc(\tfrac{1}{3})\oplus2\,\Pc(1)\oplus2\,\Pc(2)
  \oplus2\,\Pc(5)\oplus2\,\Pc(7)\nonumber\\
  \Pc(2)\otimes_{f}\Pc(0) &= 4\,\Pc(\tfrac{10}{3})\oplus4\,\Pc(\tfrac{1}{3})\oplus2\,\Pc(1)\oplus2\,\Pc(2)
  \oplus2\,\Pc(5)\oplus2\,\Pc(7)\nonumber\\
  \Pc(5)\otimes_{f}\Pc(0) &= 4\,\Pc(\tfrac{10}{3})\oplus4\,\Pc(\tfrac{1}{3})\oplus2\,\Pc(1)\oplus2\,\Pc(2)
  \oplus2\,\Pc(5)\oplus2\,\Pc(7)\nonumber\\
  \Pc(7)\otimes_{f}\Pc(0) &= 4\,\Pc(\tfrac{10}{3})\oplus4\,\Pc(\tfrac{1}{3})\oplus2\,\Pc(1)\oplus2\,\Pc(2)
  \oplus2\,\Pc(5)\oplus2\,\Pc(7)\nonumber\\
  \Wc\otimes_{f}\Pc(0) &= \Pc(0)\nonumber\\
  \Qc\otimes_{f}\Pc(0) &= \Pc(0)\oplus\Pc(\tfrac{1}{3})\nonumber
\end{align}
This analysis also leads to a consistency check of the assumption that 
$\Wc(0)$ has a projective cover. We have arrived at the above list without 
making use of associativity of the fusion rules. However, it turns out that 
\eqref{eq:fus-P0-W0} and \eqref{eq:P0x(with-dual)} extend the fusion rules in 
\cite[App.\,A.4]{Gaberdiel:2009ug} in an associative manner.

\subsubsection*{$\Pc(0)$ cannot have a dual}

Unfortunately, the fusion rules of $\Pc(0)$ with itself cannot be determined by this approach. Indeed,
as we shall show in the following, $\Pc(0) \otimes_f \Pc(0)$ cannot be isomorphic to a direct sum of 
projectives. Because of Lemma \ref{lem:fus-proj} this therefore implies that $\Pc(0)$ does not have a 
dual.

Suppose that $\Pc(0) \otimes_f \Pc(0) \cong \bigoplus_h n_h \Pc(h)$, where the sum runs over the 
13 values that $h$ takes for the irreducible representations $\Wc(h)$ according to \eqref{eq:W23-irreps}. 
Then
\be
  n_0 
  = \big\langle \Pc(0) \otimes_f \Pc(0) , \Wc(0) \big\rangle 
  = \big\langle \Pc(0) \otimes_f \Pc(0) \otimes_f \Wc(0) , \Wc^* \big\rangle 
  = 1 \ ,
\ee
by \eqref{eq:P0-fusion-calc-conditions} and \eqref{eq:fus-P0-W0}, and similarly, using also 
\eqref{eq:P0x(with-dual)},
\bea
  \big\langle \Pc(0) \otimes_f \Pc(0) , \Rc^{(2)}(0,2)_7 \big\rangle 
  = \big\langle \Pc(0) \otimes_f \Pc(0) \otimes_f \Rc^{(2)}(0,2)_7 , \Wc^* \big\rangle 
\enl
  = \big\langle 4 \Pc(1) \oplus 4 \Pc(2) \oplus 4 \Pc(5) \oplus 4 \Pc(7) \oplus 8 \Pc(\tfrac13) \oplus 8 \Pc(\tfrac13) , \Wc^*\big\rangle 
  =  4 \ .
\eear\ee
On the other hand,
\be
  \big\langle \Pc(0) \otimes_f \Pc(0) , \Rc^{(2)}(0,2)_7 \big\rangle 
  = 2 n_2 + 2 n_7 + 1 \ ,
\ee
which is always odd. Thus $\Pc(0) \otimes_f \Pc(0)$ cannot be isomorphic to a 
direct sum of projectives. 

\subsection{Composition series}

For any representation $R$ we have 
\begin{equation}
\langle \Pc(h) , R \rangle = \langle \Pc(h) , R^* \rangle 
\end{equation}
since $R$ and $R^\ast$ have the same composition factors. Thus it follows that 
\be
  \langle \Pc(h) , \Pc(0) \rangle
  = \langle \Pc(h) , \Pc(0)^* \rangle
  = \langle \Pc(0) , \Pc(h)^* \rangle
  = \langle \Pc(0) , \Pc(h) \rangle \ .
\ee
We deduce that the composition series of $\Pc(0)$ only contains 
$\Wc(h)$ with $h \in \{0,1,2,5,7\}$ and for 
$h \in \{1,2,5,7\}$, the multiplicity is $2$. 

Now consider the composition series of $\Pc(h)$ for $h = 1,2$ as given in 
Appendix~\ref{sec:embedding}. Let us denote by $N_h$ the 
representation generated by $\Wc(0)$ at level 1 in $\Pc(h)$; for the two cases $h=1,2$, it has the 
composition series
\begin{align}
  N_2 &: 
  \Wc(0) \rightarrow a\Wc(1)\oplus b\Wc(2) \rightarrow \Wc(0)\oplus 2\Wc(5)\oplus 2\Wc(7) \rightarrow \Wc(2)
\nonumber\\
  N_1 &: 
  \Wc(0) \rightarrow c\Wc(1)\oplus d\Wc(2) \rightarrow \Wc(0)\oplus 2\Wc(5)\oplus 2\Wc(7) \rightarrow \Wc(1)
\end{align}
for some $a,b,c,d \in \Zb_{\ge 0}$. 
Independent of $a,b,c,d$, $N_1$ and $N_2$ have $\Wc(1)$ and $\Wc(2)$ at level 3.
Thus $\Pc(0)$ can at most have one copy of $\Wc(1)$ and $\Wc(2)$ at level 1. Thus
$a=b=c=d=1$ (as they have to be at least 1). 
But then $\Wc(0)$ in $N_2$ gets mapped to a linear combination of two of the $\Wc(1)$ at level 
2 in $\Pc(2)$. Since each generates a different set of $\Wc(5)$ and $\Wc(7)$ at level 3, these 
have to appear with multiplicity $2$. This fixes the composition series of $\Pc(0)$ up to the 
appearances of $\Wc(0)$. Since $\Wc(0)$ does not allow non-trivial extensions by itself, the most 
general ansatz is 
\be
  \Wc(0) \rightarrow \Wc(1)\oplus \Wc(2) \rightarrow m\Wc(0)\oplus 2\Wc(5)\oplus 2\Wc(7) 
  \rightarrow \Wc(1)\oplus \Wc(2)
  \rightarrow n \Wc(0) 
\ee
for some $m \ge 1$ and $n \ge 0$. In order to have maps $\Pc(0) \rightarrow N_h$ 
with $h=1,2$ we need $m=1$. Since $\Rc^{(2)}(0,2)_7$ is self-dual, we have
\be
  \big\langle \Rc^{(2)}(0,2)_7, \Pc(0) \big\rangle 
             = \big\langle \Wc , \Pc(0) \otimes_f \Rc^{(2)}(0,2)_7 \big\rangle   
  = \big\langle \Wc , \Pc(1) \oplus  \Pc(2) \big\rangle = 1 \ .
\ee
Hence $\Wc(2)$ at level 3 of $\Pc(0)$ can be joined to at most one $\Wc(0)$. 
Repeating the argument with $\Rc^{(2)}(0,1)_7$ gives the same result for $\Wc(1)$ at level 3. 
Thus $n \in \{0,1,2\}$. But $n=2$ is inconsistent with 
$n+m+1 = \langle \Pc(0) , \Pc(0) \rangle = \langle \Pc(0)^* , \Pc(0)^* \rangle = 3$ 
(the identity, and mapping each of the top $\Wc(0)$ to the bottom $\Wc(0)$). The remaining 
two  possibilities for the composition series of $\Pc(0)$ are therefore
\begin{eqnarray}
\text{(a):} & \qquad  &   \Wc(0) \rightarrow \Wc(1)\oplus \Wc(2) \rightarrow \Wc(0)\oplus 2\Wc(5)\oplus 2\Wc(7) 
  \rightarrow \Wc(1)\oplus \Wc(2)  \nonumber  \\
\text{(b):} & \qquad  & 
\Wc(0) \rightarrow \Wc(1)\oplus \Wc(2) \rightarrow \Wc(0)\oplus 2\Wc(5)\oplus 2\Wc(7) 
  \rightarrow \Wc(1)\oplus \Wc(2)
  \rightarrow \Wc(0) \ . \nonumber
\end{eqnarray}
Very recently, in \cite{AM} Zhu's algebra was determined for the $\Wc_{2,3}$-model. It was found (see 
Proposition~5.1) that Zhu's algebra possesses a representation for which $L_0$ has a Jordan 
block of rank three at $h=0$. Because of Zhu's theorem \cite{Zhu} this implies
that $\Wc_{2,3}$ has a highest weight representation with the same property. Since the projective
covers have surjections to all highest weight representations, $\Pc(0)$ must then 
have a Jordan block of at least this size.
As a consequence, we deduce that the composition series (b) is correct, while (a) does not 
satisfy this condition. 

We should mention in passing that the other findings of \cite{AM}  also fit perfectly with
our conjectured embedding diagrams for the other projective covers $\Pc(h)$. In particular,
according to \cite{AM} there are rank $2$ Jordan blocks for $h=1,2,\tfrac{1}{8},\tfrac{5}{8},\tfrac{1}{3}$, while for
the remaining values of $h$ (namely $h=5,7,\tfrac{33}{8},\tfrac{21}{8},\tfrac{10}{3}$ as well as 
$h=\tfrac{-1}{24},\tfrac{35}{24}$) there are no Jordan blocks. On the face of it this seems to
disagree with our proposed structure for say $\Pc(1)$, since one would guess that the
top `1' should be part of a Jordan block of rank $3$. However, one has to recall that for
the analysis of Zhu's algebra, only the proper highest weight states contribute, {\it i.e.}\
those states that are annihilated by all positive modes, see for example 
\cite{Gaberdiel:1998fs} for a physicist's explanation of this. Given our embedding diagram of 
$\Pc(1)$, it is clear that the top `1' of $\Pc(1)$ is not highest weight since there is an arrow to 
the `0' at level 1. On the other hand, of the two `1's at level 2, there will be one linear
combination that will not map to the `0' at level 3, and that is thus highest weight. It follows that 
on the level of highest weight states $\Pc(1)$ only involves a Jordan block of rank $2$ for
$h=1$, in perfect agreement with the analysis of \cite{AM}. The other cases work similarly.

\small

\newcommand\arxiv[2]      {\href{http://arXiv.org/abs/#1}{\tt #2}}
\newcommand\doi[2]        {\href{http://dx.doi.org/#1}{#2}}
\newcommand\httpurl[2]    {\href{http://#1}{#2}}


\begin{thebibliography}{10}

\bibitem{Gurarie:1993xq}
V.~Gurarie,
{\it Logarithmic operators in conformal field theory},
\doi{10.1016/0550-3213(93)90528-W}{Nucl.\ Phys.\  B {\bf 410} (1993) 535}
\arxiv{hep-th/9303160}{[hep-th/9303160]}.
  %%CITATION = NUPHA,B410,535;%%

\bibitem{Cardy:gen}
J.L.~Cardy,
{\it Logarithmic correlations in quenched random magnets and polymers},
\arxiv{cond-mat/9911024}{cond-math/9911024}.

\bibitem{Gurarie:1999yx}
V.~Gurarie and A.W.W.~Ludwig,
{\it Conformal algebras of 2D disordered systems},
\doi{10.1088/0305-4470/35/27/101}{J.\ Phys.\ A  {\bf 35} (2002) L377}
\arxiv{cond-mat/9911392}{[cond-mat/9911392]}.
%%CITATION = JPAGB,A35,L377;%%

%percolation
\bibitem{Cardy:1991cm}
J.L.~Cardy,
{\it Critical percolation in finite geometries},
\doi{10.1088/0305-4470/25/4/009}{J.\ Phys.\ A  {\bf 25} (1992) L201}
\arxiv{hep-th/9111026}{[hep-th/9111026]}
%%CITATION = JPAGB,A25,L201;%%

\bibitem{Watts:1996yh}
G.M.T.~Watts,
{\it A crossing probability for critical percolation in two dimensions},
\doi{10.1088/0305-4470/29/14/002}{J.\ Phys.\ A  {\bf 29} (1996) L363}
\arxiv{cond-mat/9603167}{[cond-mat/9603167]}.
%%CITATION = JPAGB,A29,L363;%%

\bibitem{Flohr:2005ai}
M.A.I.~Flohr and A.~M\"uller-Lohmann,
{\it Conformal field theory properties of two-dimensional percolation},
\doi{10.1088/1742-5468/2005/12/P12004}{J.\ Stat.\ Mech.\ (2005) P12004}
\arxiv{hep-th/0507211}{[hep-th/0507211]}
  %%CITATION = JSTAT,0512,P004;%%
  
\bibitem{Rasmussen:2007pc}
J.~Rasmussen and P.A.~Pearce,
{\it Fusion algebra of critical percolation},
\doi{10.1088/1742-5468/2007/09/P09002}{J.\ Stat.\ Mech.\ (2007) P09002}
\arxiv{0706.2716}{0706.2716 [hep-th]}.
%%CITATION = ARXIV:0706.2716;%%

\bibitem{Mathieu:2007pe}
P.~Mathieu and D.~Ridout,
{\it From percolation to logarithmic conformal field theory},
\doi{10.1016/j.physletb.2007.10.007}{Phys.\ Lett.\  B {\bf 657} (2007) 120}
\arxiv{0708.0802}{[0708.0802 [hep-th]]}.
%%CITATION = PHLTA,B657,120;%%

\bibitem{Rasmussen:2008ii}
J.~Rasmussen and P.A.~Pearce,
{\it W-extended fusion algebra of critical percolation},
\doi{10.1088/1751-8113/41/29/295208}{J.\ Phys.\ A  {\bf 41} (2008) 295208}
\arxiv{0804.4335}{[0804.4335 [hep-th]]}.
%%CITATION = JPAGB,A41,295208;%%

\bibitem{Ridout:2008cv}
D.~Ridout,
{\it On the percolation BCFT and the crossing probability of Watts},
\doi{10.1016/j.nuclphysb.2008.09.038}{Nucl.\ Phys.\  B {\bf 810} (2009) 503}
\arxiv{0808.3530}{[0808.3530 [hep-th]]}.
  %%CITATION = NUPHA,B810,503;%%


%polymers
\bibitem{Saleur:1991hk}
H.~Saleur,
{\it Polymers and percolation in two-dimensions and twisted N=2 supersymmetry},
\doi{10.1016/0550-3213(92)90657-W}{Nucl.\ Phys.\  B {\bf 382} (1992) 486}
\arxiv{hep-th/9111007}{[hep-th/9111007]}.
  %%CITATION = NUPHA,B382,486;%%

\bibitem{Read:2001pz}
N.~Read and H.~Saleur,
{\it Exact spectra of conformal supersymmetric nonlinear sigma models in two dimensions},
\doi{10.1016/S0550-3213(01)00395-9}{Nucl.\ Phys.\  B {\bf 613} (2001) 409}
\arxiv{hep-th/0106124}{[hep-th/0106124]}.
  %%CITATION = NUPHA,B613,409;%%

\bibitem{Pearce:2006we}
P.A.~Pearce and J.~Rasmussen,
{\it Solvable critical dense polymers},
\doi{10.1088/1742-5468/2007/02/P02015}{J.\ Stat.\ Mech.\ (2007) P02015}
\arxiv{hep-th/0610273}{[hep-th/0610273]}.
%%CITATION = JSTAT,0702,P015;%%

\bibitem{DJS}
J.~Dubail, J.L.~Jacobsen and H.~Saleur,
{\it Conformal field theory at central charge c=0: a measure of the indecomposability (b) parameters},
\doi{10.1016/j.nuclphysb.2010.02.016}{Nucl.\ Phys.\ B {\bf 834} (2010) 399}
\arxiv{1001.1151}{1001.1151[math-ph]}.


%c=0 LCFT in AdS3
\bibitem{Li:2008dq}
W.~Li, W.~Song and A.~Strominger,
{\it Chiral gravity in three dimensions},
\doi{10.1088/1126-6708/2008/04/082}{JHEP {\bf 0804} (2008) 082}
\arxiv{0801.4566}{[0801.4566 [hep-th]]}.
  %%CITATION = JHEPA,0804,082;%%

\bibitem{Grumiller:2008es}
D.~Grumiller and N.~Johansson,
{\it Consistent boundary conditions for cosmological topologically massive
gravity at the chiral point},
\doi{10.1142/S0218271808014096}{Int.\ J.\ Mod.\ Phys.\  D {\bf 17} (2009) 2367}
\arxiv{0808.2575}{[0808.2575 [hep-th]]}.
%%CITATION = IMPAE,D17,2367;%%

\bibitem{Maloney:2009ck}
A.~Maloney, W.~Song and A.~Strominger,
{\it Chiral gravity, log gravity and extremal CFT},
\arxiv{0903.4573}{0903.4573 [hep-th]}.
%%CITATION = ARXIV:0903.4573;%%


%c=-2 triplet
\bibitem{Kausch:1990vg}
H.G.~Kausch,
{\it Extended conformal algebras generated by a multiplet of primary fields},
\doi{10.1016/0370-2693(91)91655-F}{Phys.\ Lett.\  B {\bf 259} (1991) 448}.
%%CITATION = PHLTA,B259,448;%%
  
\bibitem{Gaberdiel:1996np}
M.R.~Gaberdiel and H.G.~Kausch,
{\it A rational logarithmic conformal field theory},
\doi{10.1016/0370-2693(96)00949-5}{Phys.\ Lett.\  B {\bf 386} (1996) 131}
\arxiv{hep-th/9606050}{[hep-th/9606050]}.
%%CITATION = PHLTA,B386,131;%%

\bibitem{Gaberdiel:1998ps}
M.R.~Gaberdiel and H.G.~Kausch,
{\it A local logarithmic conformal field theory},
\doi{10.1016/S0550-3213(98)00701-9}{Nucl.\ Phys.\  B {\bf 538} (1999) 631}
\arxiv{hep-th/9807091}{[hep-th/9807091]}.
  %%CITATION = NUPHA,B538,631;%%
  
\bibitem{Kawai:2001ur}
S.~Kawai and J.F.~Wheater,
{\it Modular transformation and boundary states in logarithmic conformal  field theory},
\doi{10.1016/S0370-2693(01)00503-2}{Phys.\ Lett.\  B {\bf 508} (2001) 203}
\arxiv{hep-th/0103197}{[hep-th/0103197]}.
%%CITATION = PHLTA,B508,203;%%
   
\bibitem{Gaberdiel:2006pp}
M.R.~Gaberdiel and I.~Runkel,
{\it The logarithmic triplet theory with boundary},
\doi{10.1088/0305-4470/39/47/016}{J.\ Phys.\ A  {\bf 39} (2006) 14745}
\arxiv{hep-th/0608184}{[hep-th/0608184]}.
%%CITATION = JPAGB,A39,14745;%%


%LCFTs in supergroup models
\bibitem{Rozansky:1992rx}
L.~Rozansky and H.~Saleur,
{\it Quantum field theory for the multivariable Alexander-Conway polynomial},
\doi{10.1016/0550-3213(92)90118-U}{Nucl.\ Phys.\  B {\bf 376} (1992) 461}.
%%CITATION = NUPHA,B376,461;%%

\bibitem{Zirnbauer:1999ua}
M.R.~Zirnbauer,
{\it Conformal field theory of the integer quantum Hall plateau transition},
\arxiv{hep-th/9905054}{hep-th/9905054}.
%%CITATION = HEP-TH/9905054;%%

\bibitem{Schomerus:2005bf}
V.~Schomerus and H.~Saleur,
{\it The $Gl(1|1)$ WZW model: from supergeometry to logarithmic CFT},
\doi{10.1016/j.nuclphysb.2005.11.013}{Nucl.\ Phys.\  B {\bf 734} (2006) 221}
\arxiv{hep-th/0510032}{[hep-th0510032]}.
%%CITATION = NUPHA,B734,221;%%

\bibitem{Gotz:2006qp}
G.~G\"otz, T.~Quella and V.~Schomerus,
{\it The WZNW model on $psu(1,1|2)$},
\doi{10.1088/1126-6708/2007/03/003}{JHEP {\bf 0703} (2007) 003}
\arxiv{hep-th/0610070}{[hep-th/0610070]}.
%%CITATION = JHEPA,0703,003;%%

\bibitem{Saleur:2006tf}
H.~Saleur and V.~Schomerus,
{\it On the $SU(2|1)$ WZNW model and its statistical mechanics applications},
\doi{10.1016/j.nuclphysb.2007.02.031}{Nucl.\ Phys.\  B {\bf 775} (2007) 312}
\arxiv{hep-th/0611147}{[hep-th/0611147]}.
%%CITATION = NUPHA,B775,312;%%

\bibitem{Quella:2007hr}
T.~Quella and V.~Schomerus,
{\it Free fermion resolution of supergroup WZNW models},
\doi{10.1088/1126-6708/2007/09/085}{JHEP {\bf 0709} (2007) 085}
\arxiv{0706.0744}{[0706.0744 [hep-th]]}.
%%CITATION = JHEPA,0709,085;%%

%WZW fractional level
\bibitem{Gaberdiel:2001ny}
M.R.~Gaberdiel,
{\it Fusion rules and logarithmic representations of a WZW model at  fractional level},
\doi{10.1016/S0550-3213(01)00490-4}{Nucl.\ Phys.\  B {\bf 618} (2001) 407}
\arxiv{hep-th/0105046}{[hep-th/0105046]}.
  %%CITATION = NUPHA,B618,407;%%

\bibitem{Lesage:2002ch}
F.~Lesage, P.~Mathieu, J.~Rasmussen and H.~Saleur,
{\it The $\widehat{su}(2)_{-1/2}$ WZW model and the $\beta\gamma$ system},
\doi{10.1016/S0550-3213(02)00905-7}{Nucl.\ Phys.\  B {\bf 647} (2002) 363}
\arxiv{hep-th/0207201}{[hep-th/0207201]}.
  %%CITATION = NUPHA,B647,363;%%

\bibitem{Lesage:2003kn}
F.~Lesage, P.~Mathieu, J.~Rasmussen and H.~Saleur,
{\it Logarithmic lift of the $\widehat{su}(2)_{-1/2}$ model},
\doi{10.1016/j.nuclphysb.2004.02.039}{Nucl.\ Phys.\  B {\bf 686} (2004) 313}
\arxiv{hep-th/0311039}{[hep-th/0311039]}.
  %%CITATION = NUPHA,B686,313;%%

\bibitem{Ridout:2010qx}
D.~Ridout,
{\it sl(2)$_{-1/2}$ and the triplet model},
\doi{10.1016/j.nuclphysb.2010.03.018}{Nucl.\ Phys.\ B {\bf 835} (2010) 314}
\arxiv{1001.3960}{[1001.3960 [hep-th]]}.
%%CITATION = ARXIV:1001.3960;%%

\bibitem{Pearce:2006sz}
P.A.~Pearce, J.~Rasmussen and J.-B.~Zuber,
{\it Logarithmic minimal models},
\doi{10.1088/1742-5468/2006/11/P11017}{J.\ Stat.\ Mech.\ (2006) P11017}
\arxiv{hep-th/0607232}{[hep-th/0607232]}.
%%CITATION = JSTAT,0611,P017;%%
  
\bibitem{Kytola:2009ax}
K.~Kyt\"ol\"a and D.~Ridout,
{\it On staggered indecomposable Virasoro modules},
\doi{10.1063/1.3191682}{J.\ Math.\ Phys.\  {\bf 50} (2009) 123503}
\arxiv{0905.0108}{[0905.0108 [math-ph]]}.
%%CITATION = JMAPA,50,123503;%%

%Wpq triplet models
\bibitem{Feigin:2006iv}
B.L.~Feigin, A.M.~Gainutdinov, A.M.~Semikhatov and I.Y.~Tipunin,
{\it Logarithmic extensions of minimal models: Characters and modular transformations},
\doi{10.1016/j.nuclphysb.2006.09.019}{Nucl.\ Phys.\  B {\bf 757} (2006) 303}
\arxiv{hep-th/0606196}{[hep-th/0606196]}.
%%CITATION = NUPHA,B757,303;%%
 
\bibitem{Eberle:2006zn}
H.~Eberle and M.~Flohr,
{\it Virasoro representations and fusion for general augmented minimal models},
\doi{10.1088/0305-4470/39/49/012}{J.\ Phys.\ A  {\bf 39} (2006) 15245}
\arxiv{hep-th/0604097}{[hep-th/0604097]}.
%%CITATION = JPAGB,A39,15245;%%

\bibitem{Feigin:2006xa}
B.L.~Feigin, A.M.~Gainutdinov, A.M.~Semikhatov and I.Y.~Tipunin,
{\it Kazhdan-Lusztig-dual quantum group for logarithmic extensions of Virasoro minimal models},
\doi{10.1063/1.2423226}{J.\ Math.\ Phys.\  {\bf 48} (2007) 032303}
\arxiv{math/0606506}{[math.QA/0606506]}.
%%CITATION = JMAPA,48,032303;%%

 \bibitem{Semikhatov:2007qp}
A.M.~Semikhatov,
{\it A note on the logarithmic $(p,p')$ fusion},
\arxiv{0710.5157}{0710.5157 [hep-th]}.
%%CITATION = ARXIV:0710.5157;%%

\bibitem{Rasmussen:2008xi}
J.~Rasmussen,
{\it W-extended logarithmic minimal models},
\doi{10.1016/j.nuclphysb.2008.07.029}{Nucl.\ Phys.\  B {\bf 807} (2009) 495}
\arxiv{0805.2991}{[0805.2991 [hep-th]]}.
%%CITATION = NUPHA,B807,495;%%
 
\bibitem{Rasmussen:2008ez}
J.~Rasmussen,
{\it Polynomial fusion rings of W-extended logarithmic minimal models},
\doi{10.1063/1.3093265}{J. Math. Phys. {\bf 50} (2009) 043512}
\arxiv{0812.1070}{[0812.1070 [hep-th]]}.
%%CITATION = ARXIV:0812.1070;%%


\bibitem{Gaberdiel:2009ug}
M.R.~Gaberdiel, I.~Runkel and S.~Wood,
{\it Fusion rules and boundary conditions in the c=0 triplet model},
\doi{10.1088/1751-8113/42/32/325403}{J.\ Phys.\ A  {\bf 42} (2009) 325403}
\arxiv{0905.0916}{[0905.0916 [hep-th]]}.
%%CITATION = JPAGB,A42,325403;%%

\bibitem{Rasmussen:2009zt}
J.~Rasmussen,
{\it Fusion of irreducible modules in WLM(p,p')},
\doi{10.1088/1751-8113/43/4/045210}{J.\ Phys.\ A  {\bf 43} (2010) 045210}
\arxiv{0906.5414}{[0906.5414 [hep-th]]}.
  %%CITATION = JPAGB,A43,045210;%%

\bibitem{Wood:2009ub}
S.~Wood,
{\it Fusion rules of the ${\cal W}_{p,q}$ triplet models},
\doi{10.1088/1751-8113/43/4/045212}{J.\ Phys.\ A  {\bf 43} (2010) 045212}
\arxiv{0907.4421}{[0907.4421 [hep-th]]}.
%%CITATION = JPAGB,A43,045212;%%
  
\bibitem{AM1}
D.~Adamovic and A.~Milas,
{\it On W-algebras associated to (2,p) minimal models and their representations},
\arxiv{0908.4053}{0908.4053  [math.QA]}.
%%CITATION = ARXIV: 0908.4053

\bibitem{Rasmussen:2009xa}
J.~Rasmussen,
{\it Graph fusion algebras of WLM(p,p')},
\doi{10.1016/j.nuclphysb.2009.12.033}{Nucl.\ Phys.\ B {\bf 830} (2010) 493}
\arxiv{0911.3229}{[0911.3229 [hep-th]]}.
  %%CITATION = ARXIV:0911.3229;%%

\bibitem{AM}
D.~Adamovic and A.~Milas,
{\it The structure of Zhu's algebras for certain W-algebras},
\arxiv{1006.5134}{1006.5134 [math.QA]}.
%%CITATION = ARXIV: 1006.5134

%reconstruction
  \bibitem{Runkel:1998pm}
I.~Runkel,
{\it Boundary structure constants for the A-series Virasoro minimal models},
\doi{10.1016/S0550-3213(99)00125-X}{Nucl.\ Phys.\  B {\bf 549} (1999) 563}
\arxiv{hep-th/9811178}{[hep-th/9811178]}.
%%CITATION = NUPHA,B549,563;%%  

\bibitem{Felder:1999mq}
G.~Felder, J.~Fr\"ohlich, J.~Fuchs and C.~Schweigert,
{\it Correlation functions and boundary conditions in RCFT and three-dimensional topology},
\doi{10.1023/A:1014903315415}{Compos.\ Math.\  {\bf 131} (2002) 189}
\arxiv{hep-th/9912239}{[hep-th/9912239]}.
%%CITATION = CMPMA,131,189;%%
 
\bibitem{tft1}
J.~Fuchs, I.~Runkel and C.~Schweigert,
{\it TFT construction of RCFT correlators. I: Partition functions},
\doi{10.1016/S0550-3213(02)00744-7}{Nucl.\ Phys.\ B {\bf 646} (2002) 353} 
\arxiv{hep-th/0204148}{[hep-th/0204148]}.
%%CITATION = HEP-TH 0204148;%%

\bibitem{Petkova:2001ag}
V.B.~Petkova and J.-B.~Zuber,
{\it The many faces of Ocneanu cells},
\doi{10.1016/S0550-3213(01)00096-7}{Nucl.\ Phys.\ B {\bf 603} (2001) 449}
\arxiv{hep-th/0101151}{[hep-th/0101151]}.
%%CITATION = HEP-TH 0101151;%%

\bibitem{unique} 
J.~Fjelstad, J.~Fuchs, I.~Runkel and C.~Schweigert,
{\it Uniqueness of open/closed rational CFT with given algebra of open states},
\httpurl{projecteuclid.org/euclid.atmp/1221834534}{Adv.\ Theor.\ Math.\ Phys.\ {\bf 12} (2008) 1283}
\arxiv{hep-th/0612306} {[hep-th/0612306]}.
%%CITATION = HEP-TH/0612306;%%

\bibitem{Kong:2008ci}
L.~Kong and I.~Runkel,
{\it Cardy algebras and sewing constraints, I}, 
\doi{10.1007/s00220-009-0901-6}{Commun.\ Math.\ Phys.\ {\bf 292} (2009) 871}
\arxiv{0807.3356}{0807.3356 [math.QA]}.
%%CITATION = ARXIV:0807.3356;%%

\bibitem{Gaberdiel:2007jv}
M.R.~Gaberdiel and I.~Runkel,
{\it From boundary to bulk in logarithmic CFT},
\doi{10.1088/1751-8113/41/7/075402}{J.\ Phys.\ A  {\bf 41} (2008) 075402}
\arxiv{0707.0388}{[0707.0388 [hep-th]]}.
%%CITATION = ARXIV:0707.0388;%%

\bibitem{Huang:2007ir}
Y.Z.~Huang, J.~Lepowsky and L.~Zhang,
{\it Logarithmic tensor product theory for generalized modules for a conformal
  vertex algebra},
\arxiv{0710.2687}{0710.2687 [math.QA]}.
%%CITATION = ARXIV:0710.2687;%%

\bibitem{Huang:2007mj}
Y.-Z.~Huang,
{\it Cofiniteness conditions, projective covers and the logarithmic tensor
  product theory},
\doi{10.1016/j.jpaa.2008.07.016}{JPAA {\bf 213} (2009) 458}
\arxiv{0712.4109}{[0712.4109 [math.QA]]}.
%%CITATION = ARXIV:0712.4109;%%
  
\bibitem{Huang:2009}
Y.-Z.~Huang,
{\it Representations of vertex operator algebras and braided finite tensor categories},
\arxiv{0903.4233}{0903.4233 [math.QA]}.
  
\bibitem{Cardy:1989ir}
J.L.~Cardy,
{\it Boundary conditions, fusion rules and the Verlinde formula},
\doi{10.1016/0550-3213(89)90521-X}{Nucl.\ Phys.\ B {\bf 324} (1989) 581}.
%%CITATION = NUPHA,B324,581;%%
 
\bibitem{Joyal:1993}
A.~Joyal and R.~Street, 
{\it Braided tensor categories}, 
\doi{doi:10.1006/aima.1993.1055}{Adv.\ Math.\ {\bf 102} (1993) 20}.

\bibitem{Cardy:1991tv}
J.L.~Cardy and D.C.~Lewellen,
{\it Bulk and boundary operators in conformal field theory},
\doi{10.1016/0370-2693(91)90828-E}{Phys.\ Lett.\  B {\bf 259} (1991) 274}.
%%CITATION = PHLTA,B259,274;%%

\bibitem{Lewellen:1991tb}
D.C.~Lewellen,
{\it Sewing constraints for conformal field theories on surfaces with boundaries}, 
\doi{10.1016/0550-3213(92)90370-Q}{Nucl.\ Phys.\ B {\bf 372} (1992) 654}.
%%CITATION = NUPHA,B372,654;%%

\bibitem{Kong2006b}
L.~Kong, 
{\it Open-closed field algebras},  
\doi{10.1007/s00220-008-0446-0}{Commun.\ Math.\ Phys.\ {\bf 280} (2008) 207}
\arxiv{math/0610293}{[math.QA/0610293]}.

\bibitem{defects}
S.~Wood, {\it in preparation}. 

\bibitem{Nagatomo:2009xp}
K.~Nagatomo and A.~Tsuchiya,
{\it The triplet vertex operator algebra W(p) and the restricted quantum group at root of unity},
\arxiv{0902.4607}{0902.4607 [math.QA]}.
%%CITATION = ARXIV:0902.4607;%%

\bibitem{Zhu} 
Y.~Zhu, {\it Vertex operator algebras, elliptic functions 
and modular forms}, \httpurl{www.ams.org/journals/jams/1996-9-01/S0894-0347-96-00182-8/}{Caltech preprint (1990), J. Amer. Math. Soc. 
{\bf 9} (1996) 237}.

\bibitem{Gaberdiel:1998fs}
M.R.~Gaberdiel and P.~Goddard,
{\it Axiomatic conformal field theory},
\doi{10.1007/s002200050031}{Commun.\ Math.\ Phys.\  {\bf 209} (2000) 549}
\arxiv{hep-th/9810019}{[hep-th/9810019]}.
%%CITATION = CMPHA,209,549;%%

\bibitem{BaKi-book}
B.~Bakalov and A.A.~Kirillov,  
{\it Lectures on tensor categories and modular functors},
American Mathematical Society, Providence, 2001. 

\bibitem{Kazhdan:1994}
D.~Kazhdan and G.~Lusztig, 
{\it Tensor structures arising from affine Lie algebras IV},
\httpurl{www.ams.org/journals/jams/1994-07-02/S0894-0347-1994-1239507-1}{J.\ 
Amer.\ Math.\ Soc.\ {\bf 7} (1994) 383}

\end{thebibliography}
\end{document}